\newif\ifnotes
\newtheorem{theorem}{Theorem}[section]
\newtheorem{lemma}[theorem]{Lemma}
\newtheorem{definition}[theorem]{Definition}
\newtheorem{corollary}[theorem]{Corollary}
\newtheorem{example}[theorem]{Example}
\newtheorem{claim}[theorem]{Claim}
\newtheorem{conjecture}[theorem]{Conjecture}
\newtheorem{remark}[theorem]{Remark}
\theoremstyle{definition}
\newtheorem{protocol}[theorem]{Protocol}
\newcommand{\F}{\mathbb{F}}
\newcommand{\poly}{\mathsf{poly}}
\newcommand{\agree}{\nice{agree}}
\newcommand{\ball}{B}
\newcommand{\RS}{\mathsf{RS}}
\newcommand{\calL}{\mathcal{L}}
\newcommand{\E}{\mathbf{E}}
\newcommand{\Tr}{\mathsf{Tr}}
\newcommand{\AVG}{\mathsf{AVG}}
\newcommand{\ZERO}{\mathsf{ZERO}}
\newcommand{\nice}[1]{{\sf{#1}}\xspace}
\newcommand{\set}[1]{\left\{#1\right\}}
\newcommand{\condset}[2]{\set{#1 \mid #2}}
\newcommand{\DFRI}{\nice{DEEP-FRI}}
\newcommand{\DEEPSTIK}{\nice{DEEP-ALI}}
\newcommand{\ALI}{\nice{ALI}}
\newcommand{\FRI}{\nice{FRI}}
\newcommand{\N}{{\mathbb{N}}}
\newcommand{\RateInt}{{\cal{R}}}
\newcommand{\zr}{^{(0)}}
\newcommand{\one}{^{(1)}}
\newcommand{\two}{^{(2)}}
\newcommand{\ii}{^{(i)}}
\newcommand{\iim}{^{(i-1)}}
\newcommand{\iip}{^{(i+1)}}
\newcommand{\fin}{^{(\RecInt)}}
\newcommand{\Rate}{\rho}
\newcommand{\eqndef}{\triangleq}%stackrel{\triangle}{=}}
\newcommand{\rounds}{\nice{r}}
\newcommand{\RecInt}{\rounds}
\newcommand{\commit}{\nice{COMMIT}}
\newcommand{\query}{\nice{QUERY}}
\newcommand{\DistanceLi}[1]{\Delta\ii\left(#1\right)}
\newcommand{\acc}{\nice{accept}}
\newcommand{\RepInt}{\ell}
\newcommand{\coset}{S}
\newcommand{\cosets}{{\cal{S}}}
\newcommand{\error}{\nice{err}}%\epsilon}
\newcommand{\QUOTIENT}{\mathsf{QUOTIENT}}
\newcommand{\CQ}{\mathsf{CorrectedQUOTIENT}}
\newcommand{\substar}{{_{\star}}}
\newcommand{\NTIME}{\nice{NTIME}}
\DeclareMathOperator{\RPT}{RPT}
\newcommand{\eli}[1]{\ifnotes{\color{cyan} Eli: #1}\xspace\fi}
\newcommand{\swastik}[1]{\ifnotes{\color{red} Swastik: #1}\xspace\fi}
\newcommand{\shubhangi}[1]{\ifnotes{\color{magenta} Shubhangi: #1}\xspace\fi}
\newcommand{\lior}[1]{\ifnotes{\color{blue} Lior: #1}\xspace\fi}
\newcommand{\Ls}{\nice{List}}
\newcommand{\bu}{\bar{u}}
\newcommand{\bv}{\bar{v}}
\newcommand{\event}{{\cal{E}}}
\newcommand{\slack}{{\delta}}
\newcommand{\ohalf}{^{(1.5)}}
\newcommand{\deltastar}{\delta^*}
\newcommand{\Lstar}{\calL^*}
\newcommand{\nustar}{{\nu^*}}
\newcommand{\johnsoneps}{\varepsilon}
\begin{document}
	\title{DEEP-FRI: Sampling Outside the Box Improves Soundness}
\author{Eli Ben-Sasson\thanks{StarkWare Industries Ltd. {\tt \{eli,lior\}@starkware.co}} \and Lior Goldberg\footnotemark[1] \and
	Swastik Kopparty\thanks{Department of Mathematics and Department of Computer Science, Rutgers University. Research supported in part by NSF grants CCF-1253886, CCF-1540634, CCF-1814409 and CCF-1412958, and BSF grant 2014359. Some of this research was done while visiting the Institute for Advanced Study. {\tt swastik.kopparty@gmail.com}}
\and Shubhangi Saraf\thanks{Department of Mathematics and Department of Computer Science, Rutgers University. Research supported in part by NSF grants
CCF-1350572, CCF-1540634 and CCF-1412958, BSF grant 2014359, a Sloan research fellowship
and the Simons Collaboration on Algorithms and Geometry. Some of this research was done while visiting the Institute for Advanced Study.  {\tt shubhangi.saraf@gmail.com}}}
\date{}
\maketitle

\eli{test} \swastik{test} \shubhangi{test} \lior{test}

\begin{abstract}
	Motivated by the quest for scalable and succinct zero knowledge arguments,
	we revisit worst-case-to-average-case reductions for
	linear spaces, raised by [Rothblum, Vadhan, Wigderson, STOC 2013]. The previous state of the art by [Ben-Sasson, Kopparty, Saraf, CCC 2018] showed that if some member of an affine space $U$ is $\delta$-far in relative Hamming distance from a linear code $V$ --- this is the worst-case assumption --- then most elements of $U$ are almost-$\delta$-far from $V$ --- this is the average case. However, this result was known to hold only below the ``double Johnson'' function of the relative distance $\delta_V$ of the code $V$, i.e., only when $\delta<1-(1-\delta_V)^{1/4}$.

	First, we increase the soundness-bound to the
	``one-and-a-half Johnson'' function of $\delta_V$ and show that the  average distance of $U$ from $V$ is nearly $\delta$ for any worst-case distance $\delta$ smaller than $1-(1-\delta_V)^{1/3}$.
	This bound is tight, which is somewhat surprising because the one-and-a-half Johnson function is unfamiliar in the literature on error correcting codes.

	%It turns out that to improve the soundness of interactive argument systems and further reduce their communication complexity it is important to improve even further on this state of affairs.
	%\shubhangi{We didnt already talk about soundness of proximity protocols, so should we say something} \eli{tried, is this better?}
	To improve soundness further for Reed Solomon codes we sample  outside the box. We suggest a new protocol in which the
	verifier samples a single point $z$ outside the box $D$ on which codewords are evaluated, and asks the prover for the value at $z$ of the interpolating polynomial of a random element of $U$. Intuitively, the answer provided by the prover ``forces'' it to choose one codeword from a list of ``pretenders'' that are close to $U$.
	We call this technique Domain Extending for Eliminating Pretenders (DEEP).

	The DEEP method improves the soundness of the worst-case-to-average-case reduction for RS codes up their list decoding radius. This radius is bounded from below by the Johnson bound, implying average distance is approximately $\delta$ for all $\delta<1-(1-\delta_V)^{1/2}$. Under a plausible conjecture about the list decoding radius of Reed-Solomon codes, average distance from $V$ is approximately $\delta$ for all $\delta$. The DEEP technique can be generalized to all linear codes, giving improved reductions for capacity-achieving list-decodable codes.

	Finally, we use the DEEP technique to devise two new protocols:
	\begin{itemize}
		\item An Interactive Oracle Proof of Proximity (IOPP) for RS codes, called \DFRI.
		This soundness of the protocol improves upon that of the \FRI protocol of [Ben-Sasson et al., ICALP 2018] while retaining linear arithmetic proving complexity and logarithmic verifier arithmetic complexity.
		\item  An Interactive Oracle Proof (IOP) for the Algebraic Linking IOP (ALI) protocol used
		to construct zero knowledge scalable transparent arguments of knowledge (ZK-STARKs) in [Ben-Sasson et al., eprint 2018].
		The new protocol, called \DEEPSTIK, improves soundness of this crucial step from a small constant $ < 1/8$ to a constant arbitrarily close to $1$.
	\end{itemize}
	\thispagestyle{empty}
\end{abstract}

%\clearpage
\pagenumbering{arabic}

\section{Introduction}\label{sec:intro}

Arithmetization is a marvelous technique that can be used to reduce problems in computational complexity, like verifying membership in a nondeterministic language, to questions about membership of vectors in algebraic codes like Reed-Solomon (RS) and Reed-Muller (RM) codes~\cite{Razborov87,LundFKN92}.
One of the end-points of such a reduction is the  RS proximity testing (RPT) problem. It is a problem of inherent theoretical interest, but also of significant
practical importance because it is used in recent
constructions of succinct zero knowledge (ZK) arguments including Ligero~\cite{ligero}, Aurora~\cite{Aurora}, and  Scalable Transparent ARguments of Knowledge (ZK-STARKs)~\cite{stark}. We discuss this connection after
describing the problem and our results.

In the RPT problem
a verifier is given oracle access to a function $f:D\to \F$, we call $D\subset \F$ the \emph{evaluation domain}, and
is tasked with distinguishing between the ``good'' case that $f$ is a polynomial of degree at most $d$
% <\rho|D|$ for a fixed rate $\rho<1$ (say, $\rho=1/8$)
and the ``bad'' case in which $f$ is $\delta$-far in relative
Hamming distance from all degree-$d$ polynomials. To achieve succinct verification time, poly-logarithmic in $d$,
we must allow the verifier some form of interaction with a prover --- the party claiming that $\deg(f)\leq d$. Initially, this interaction took the form of oracle access to a probabilistically checkable proof of proximity (PCPP) \cite{BenSassonGHSV06} provided by the prover in addition to $f$. Indeed, in this model the RPT problem can be ``solved'' with PCPPs of quasilinear size $|D|\poly\log |D|$, constant
query complexity and constant soundness~\cite{BS08,Din07}. However, the concrete complexity of prover time, verifier time and communication complexity are rather large, even when considering practical settings that involve moderately small instance sizes.

To improve prover, verifier, and communication complexity for concrete (non-asymptotic) size problems, the Interactive Oracle Proofs of Proximity (IOPP) model is more suitable~\cite{ReingoldRR16,BenSassonCS16,Ben-SassonCFGRS16}.
This model can be viewed as a multi-round PCPP. Instead of having the prover write down a single proof $\pi$, in the IOPP setting the proof oracle is produced over a number of rounds of interaction, during which the verifier sends random bits and the prover responds with additional (long) messages to which the verifier is allowed oracle access. The additional
rounds of interaction allow for a dramatic improvement in the asymptotic and concrete complexity of solving the RPT problem. In particular, the Fast RS IOPP (\FRI) protocol of \cite{FRI} has linear prover arithmetic complexity, logarithmic verifier arithmetic complexity and constant soundness. Our goal here is to improve soundness of this protocol and to suggest better protocols in terms of soundness in the high-error regime
(also known as the ``list decoding'' regime).

%\shubhangi{I suspect many/most reviewers might
%never have seen the notion of IOPP or seen the problem of RS proximity testing before.
%Perhaps it is worth introducing both of these gently? Perhaps after introducing the problem of RS proximity testing, we should say that say that this is clearly a very natural question and the main parameters to address are query complexity, interaction, rate, soundness. Perhaps is it also worth saying that it is amazing that such a clean and elegant problem has found practical implementation. Improving the parameters immediately translates into improved practical
%guarantees. Also maybe we should say here how the FRI protocol improved upon previous work? For the first time it showed how to test proximity to linear rate RS codewords with sublinear (in fact polylog query) queries but allowing more interaction. (This is true, right?). We should make the reviewer be convinced that FRI is a super important and natural question to study and maybe this is the only paragraph where we try to do this.
%}\eli{Rewrote prior paragraphs to try and address, better now? feel free to add/modify/change}

Soundness analysis of \FRI reduces to the following natural ``worst-case-to-average-case'' question regarding linear spaces, which is also independently very interesting for the case of general (non-RS) codes.  This question was originally raised in a different setting by \cite{RVW2013} and we start by discussing it for general linear codes before focusing on the special, RS code, case.

\subsection{Maximum distance vs average distance to a linear code}
Suppose that $U\subset \F^D$ is a ``line'', a $1$-dimensional\footnote{The generalization of our results to spaces $U$ of dimension $>1$ is straightforward by partitioning $U$ into lines through $u^*$ and applying these results to each line.} affine space over $\F$.
Let $u^*\in \F^D$ denote the origin of this line
and $u$ be its slope, so that $U=\condset{u_x=u^*+x u}{x\in \F}$. For a fixed linear space $V\subset \F^D$, pick $u^*$ to be the element in $U$ that is farthest from $V$, denoting by $\delta_{\max}$ its relative Hamming distance (from $V$). This is our worst-case assumption. Letting
$\delta_x=\Delta(u_x,V)$ where $\Delta$ denotes relative Hamming distance, what can be said about the \emph{expected} distance $\E_{x\in \F} [\delta_x]$ of $u_x$ from $V$?

Rothblum, Vadhan and Wigderson showed that $\E_x[\delta_x]\geq \frac{\delta_{\max}}{2}-o(1)$ for all spaces $U$ and $V$, where, here and below, $o(1)$ denotes negligible
terms that approach $0$ as $|\F|\to\infty$ \cite{RVW2013}. A subset of the co-authors of this paper
improved this to $\E[\delta_x]\geq 1-\sqrt{1-\delta_{\max}}-o(1)$, showing the average distance scales roughly like the Johnson list-decoding function of $\delta_{\max}$, where $J(x):=1-\sqrt{1-x}$~\cite{Ben-SassonKS18}. In both of these bounds the expected distance is strictly smaller than $\delta_{\max}$. However,
the latter paper also showed that when $V$ is a (linear) error correcting code with large relative distance $\delta_V$, if $\delta_{\max}$ is smaller than the ``double Johnson'' function of $\delta_V$,
given by $J\two(x):=J(J(x))$, then the average distance hardly deteriorates,
\begin{equation}
\label{eq:double johnson}
%\mbox{If } \delta_{\max}<J\two(\delta_V) \mbox{ then }
\E[\delta_x]\geq \min\left(\delta_{\max}, J\two(\delta_V)\right)-o(1) = \min\left(\delta_{\max}, 1-\sqrt[4]{1-\delta_V}\right)-o(1)
\end{equation}
and the equation above summarizes the previous state of affairs on this matter.

Our first result is an improvement of \cref{eq:double johnson} to
the ``one-and-a-half-Johnson'' function $J\ohalf(x)=1-(1-x)^{1/3}$.
\cref{lem:one and half Johnson} says that for codes $V$ of relative Hamming distance $\delta_V$,

\begin{equation}
\label{eq:one and a half johnson}
%\mbox{If } \delta_{\max}<J\ohalf(\delta_V) \mbox{ then }
\E[\delta_x]\geq \min\left(\delta_{\max}, J\ohalf(\delta_V)\right)-o(1)
 = \min\left(\delta_{\max}, 1-\sqrt[3]{1-\delta_V}\right)-o(1).
\end{equation}

Our second result shows that \cref{eq:one and a half johnson} is tight,
even for the special case of $V$ being an RS code. We find
this result somewhat surprising because the $J\ohalf(x)$ function
is not known to be related to any meaningful coding theoretic notion.
The counter-example showing the tightness of \cref{eq:one and a half johnson} arises for very special cases, in which (i) $\F$ is a binary field (of characteristic $2$), (ii) the rate $\rho$ is precisely $1/8=2^{-3}$ and, most importantly, (iii) the evaluation domain $D$ equals all of $\F$ (see \cref{sec:tightness discussion}).
Roughly speaking, the counter-example uses functions $u^*,u:\F_{2^n}\to\F_{2^n}$ that are $3/4=1-\rho^{2/3}$-far from polynomials of degree $\rho 2^n$ yet \emph{pretend} to be low-degree because for all $x\in \F_{2^n}\setminus\set{0}$ the function $u^*+xu$ is  $1/2=\sqrt[3]{\rho}$-close to a polynomial of degree $\rho2^n$. See \cref{lem:tightness of 1.5 johnson} for details.

Our next set of results, which we discuss below, show how to go beyond the above limitation through a new interactive proximity proving technique.

\subsection{Domain Extension for Eliminating Pretenders (DEEP)}\label{sec:deep intro}

The case that interests us most is when $V$ is an RS code (although we will return to the discussion of general linear codes later).
Henceforth, the RS code of rate $\rho$ evaluated over $D$ is  $$\RS[\F,D,\rho]:=\condset{f:D\to \F}{\deg(f)<\rho|D|}.$$
RS codes are maximum distance separable (MDS), meaning that $\delta_V = 1-\rho$ and so \cref{eq:one and a half johnson} simplifies to
\begin{equation}
\label{eq:double johnson RS}
%\mbox{If } \delta_{\max}\leq \sqrt[4]{\rho} \mbox{ then }
\E[\delta_x] \geq \min(\delta_{\max}, 1-\sqrt[3]{\rho}) - o(1).
\end{equation}
This improved bound
can be translated, using some extra work, to \FRI soundness analysis with similar guarantees. Specifically,  \cref{eq:double johnson RS} implies
that for $f:D\to\F$ that is $\delta$-far from $\RS[\F,D,\rho]$,
the soundness error of a single invocation of the \FRI QUERY test
(which requires $\log |D|$ queries) is at most $\max\{1-\delta,\sqrt[3]{\rho}\}$, and this can be plugged
into ZK-STARKs like~\cite{stark} and ZK-SNARGs like Aurora~\cite{Aurora}. Roughly speaking, if the
rejection probability is of $\delta$-far words is $\max(\delta,\delta_0)$ then
to reach soundness error less than $2^{-\lambda}$ for codes of blocklength $n$,  communication complexity (and verifier complexity) scale roughly like $\frac{\lambda}{\log \delta_0}\cdot c\cdot \log n$ for some constant $c$.
Thus, the improvement from \cref{eq:double johnson} to \cref{eq:one and a half johnson} translates to a $25\%$ reduction in verifier complexity (from $\frac{4\lambda}{\log \rho}\cdot c\cdot \log n$ to $\frac{3\lambda}{\log \rho}\cdot c\cdot \log n$).
%This, in turn, can be translated (with additional work) to study the soundness of ZK-IOP-based argument systems like ZK-STARK~\cite{stark} and Aurora~\cite{Aurora}.
%To reach soundness error
%smaller than $2^{-\lambda}$, roughly $4 \lambda/\log \rho$ invocations
%\lior{doesn't this contradict what we're saying in \cref{sec:Deep stik}? (that the number
%	doesn't improves as $\rho\to 0$)}
%of the \FRI QUERY test are needed, where the constant $4$ arises from
%\cref{eq:double johnson RS}.

To break the soundness bound of \cref{eq:one and a half johnson} and thereby further reduce verifier complexity in the afore-mentioned systems, we suggest a new method. We discuss it first for RS codes, then generalize to arbitrary linear codes.
If $u^*,u:D\to\F$ are indeed the evaluation of two
degree $d$ polynomials, say, $P^*$ and $P$, our verifier will artificially \emph{extend the domain} $D$ to a larger one $\bar{D}$,
sample uniformly $z\in \bar{D}$ and ask for the evaluation of
$P^*(z)$ and $P(z)$. The answers provided by
the prover can now be applied to modify each of $u^*$ and $u$ in a \emph{local} manner to reflect the new knowledge, and along the way also
prune down the large list of polynomials which $u^*$ and $u$ might pretend to be. If $\alpha^*_z=P^*(z),\alpha_z=P(z)$ are the honest prover's answers to the query $z$, then $(X-z)$ divides $P^*(X)-\alpha^*_z$ and likewise $(X-z)|P(X)-\alpha_z$. Letting $\alpha_x=\alpha^*+x\alpha$ and $P_x(X)=P^*(X)+xP(X)$ it follows that $(X-z)|P_x(X)-\alpha_x$.
Consider now the soundness of this procedure.
In the extreme case that $u^*$ has a small list of polynomials that, each, somewhat agree with it, then with high probability over $z$, any answer provided by the prover will agree with at most one of the polynomials in this list. The proof of our main technical result, \cref{lem:main}, formalizes this intuition.
For radius $\delta$, let $L^*_\delta$ be the maximal list size,
$$L^*_\delta=\max_{u^*\in \F^D}|\condset{v\in V}{\Delta(u^*,v)<\delta}|$$
where $\Delta$ denotes relative Hamming distance.
Let $V|_{u_x(z)=\alpha_x}$ be the restriction of $V$ to
codewords that are evaluations of polynomials of degree at most $d$ that, additionally, evaluate to $\alpha_x$ on $z$. Our main
 \cref{lem:main} shows that if $\Delta(u^*,V)=\delta_{\max}$ then for any pair of answers $\alpha^*_z,\alpha_z$ given in response to query $z$,
\begin{equation}
\label{eq:new soundness bound}
\E_{z,x}\left[\Delta(u_x,V|_{u_x(z)=\alpha_x})\right]\geq \delta_{\max}-L^*_\delta\cdot\left(\frac{\rho|D|}{|\bar{D}|}\right)^{1/3}-o(1).
\end{equation}

%Notice that when $D=\bar{D}$ the right hand side becomes $\delta_{\max}-O(J\ohalf(\delta_V))-o(1)$, matching
%the tight bound of \cref{lem:tightness of one and half johnson}.
%\swastik{I don't think I understand this previous comment ... as far
%as I know the DEEP lemma doesn't imply the $J\ohalf$ lemma.}
The Johnson bound (\cref{thm:johnson}) says that when $\delta<J(1-\rho)=1-\sqrt{\rho}$ we
have $L^*_\delta=O(1)$  and this improves
the worst-case-to-average-case result from that of \cref{eq:one and a half johnson} to a bound that matches the Johnson bound:
\begin{equation}
\label{eq:johnson bound deep fri}
%\mbox{If } \delta_{\max}<J(\delta_V) \mbox{ then }
\E_{z,x}[\Delta(u_x,V|_{u_x(z)=\alpha_x})|] \geq \min\left(\delta_{\max}, J(\delta_V)\right)-o(1) = \min\left(\delta_{\max}, \sqrt{\rho}\right)-o(1).
\end{equation}
The exact behavior of the list size of Reed-Solomon codes beyond the Johnson bound is a
famous open problem. It may be the case that the list size is small for radii far greater than the Johnson bound; in fact, for most domains $D$ this is roughly known to hold~\cite{RudraW14}. If it holds that
that list sizes are small all the way up to radius equal to the distance $\delta_V = 1-\rho$ (i.e., if Reed-Solomon codes meet list-decoding capacity), then \cref{eq:johnson bound deep fri} implies that the technique
suggested here has optimal soundness for (nearly) all distance
parameters.

\paragraph{Generalization to arbitrary linear codes}
The DEEP method can be used to improve worst-to-average-case reductions for general linear codes. Viewing codewords in $V$ as evaluations of linear forms of a domain $D$, we ask for the evaluation of the linear forms that supposedly correspond to $u^*$ and $u$ on a random location $z\in \bar{D}$ where $|\bar{D}|\gg |D|$. \cref{lem:main general codes} generalizes \cref{lem:main} and says that if $V$ has near-capacity  list-decoding radius (with small list size) and $\bar{D}$ corresponds to (columns of a generating matrix of) a good error correcting code, then we have
$\E_{x}[\delta_x]\approx \delta_{\max}$.
The main difference between the RS case and that of general linear codes is that in the former, the prover-answers $\alpha^*(x), \alpha(x)$ can be processed to modify locally the entries of $u_x$ to reduce the degree of the resulting
function; this is something we cannot carry out (to best of our understanding) for all linear codes.

\subsection{\DFRI}\label{sec:deep fri intro}

Applying the technique of domain extension for eliminating pretenders to
the \FRI protocol requires a modification that we discuss next. The \FRI protocol can be described as a process of ``randomly folding'' an (inverse) Fast Fourier Transform (iFFT) computation. In the  ``classical'' iFFT, one starts with a function $f\zr:\langle \omega \rangle\to \F$ where $\omega$ generates
a multiplicative group of order $2^k$ for integer $k$.
The iFFT computes (in arithmetic complexity $O(k 2^k)$)
the interpolating polynomial $\tilde{f}(X)$ of the function $f$.
This computation follows by computing (in linear time)
a pair of functions $f_0,f_1:\langle{\omega^2}\rangle\to\F$, recalling $|\langle\omega^2\rangle|=\frac{1}{2}|\langle\omega\rangle|$. Their interpolants $\tilde{f}_0,\tilde{f}_1$ are then used to compute in linear time the  original interpolant $\tilde{f}$ of $f$.

As explained in~\cite{FRI}, in the \FRI protocol the prover first commits
to $f$ as above. Then the verifier samples a random $x\zr\in\F$ and
the protocol continues with the single function $f\one:\langle\omega^2\rangle\to\F$ which is supposedly $f\one:=f_0+x\zr f_1$.
It turns out that if $f$ is indeed of degree less than $\rho|\langle \omega \rangle|$ then for all $x$ we have that $f\one$ is of degree
less than $\rho|\langle \omega^2 \rangle|$ as well. The tricky part is showing
that when $f$ is $\delta$-far from $\RS[\F,\langle \omega \rangle,\rho]$ this also holds with high probability (over $x$) for $f\one$ and
some $\delta'$ that is as close as possible to $\delta$. (One can show that invariably we have $\delta'\leq \delta$, i.e., the green line of \cref{fig:FRI} is an upper bound on soundness of both \FRI and the new \DFRI protocol described below.)

The worst-case-to-average-case results of \cref{eq:one and a half johnson} and \cref{lem:one and half Johnson} can be converted to similar improvements for \FRI, showing that
for $\delta<1-\sqrt[3]{\rho}$ we have $\delta'\approx \delta$. This
follows directly from the techniques of \cite[Section~7]{Ben-SassonKS18} (see the red line in \cref{fig:FRI}).
But to use the new DEEP technique of \cref{eq:new soundness bound} and \cref{lem:main} in order to improve soundness of an RS-IOPP, we need to modify the \FRI protocol, leading to a new protocol that is aptly called \DFRI. Instead of constructing $f\one$ directly, our verifier first samples $z\zr\in\F$ and queries the prover
for the evaluation of the interpolant of $f\zr$ on $z\zr$ and $-z\zr$.
After the answers $\alpha_{z\ii},\alpha_{-z\ii}$ have been recorded,
the verifier proceeds by sampling $x\zr$ and expects the prover
to submit $f\one$ which is the linear combination of $f'_0,f'_1$ derived from the modification $f'$ of $f$ that takes into account the answers $\alpha_{z\ii},\alpha_{-z\ii}$. Assuming $\tilde{f}$ is the interpolant
of $f$, an honest prover would set $f'(X):=(\tilde{f}(X)-U(X))/(Z(X))$
where $U(X)$ is the degree $\leq 1$ polynomial that evaluates
to $\alpha_{z\zr}$ on $z\zr$ and to $\alpha_{-z\zr}$ on $-z\zr$ and
$Z(X)$ is the monic degree $2$ polynomial whose roots are $z\zr$ and $-z\zr$.
As shown in \cref{sec:deep fri}, the soundness bounds of \cref{eq:new soundness bound} and \cref{lem:main} now apply to \DFRI. This shows that
the soundness of \DFRI, i.e., the rejection probability of words that are $\delta$-far from $\RS[\F,D,\rho])$ is roughly $\delta$ for any $\delta$ that is smaller than the maximal radius for which list-sizes are ``small''. \cref{fig:FRI} summarizes the results described here.

\definecolor{spgreen}{RGB}{46,139,87}
\begin{figure}[ht]
	%	\title{}
	\begin{tikzpicture}[scale=0.8]
	\begin{axis}[domain=0.000001:1, samples=100,grid=major,
	restrict y to domain=0:1,xlabel={$\Rate$},ylabel={$\delta_0$}, legend pos= outer north east, legend cell align={left}]
	\addplot [line width=0.25mm,color=spgreen]  {1-x};
	\addplot [line width=0.25mm,color=blue] {1-sqrt(x)};
%	\addplot [line width=0.25mm,color=red,dashed]   {(1-x)/2};
	\addplot [line width=0.25mm,color=red]    {1-x^(1/3)};
	\addplot [line width=0.25mm,color=orange, dashed]    {1-sqrt(sqrt(x))};
	\addplot [line width=0.25mm,color=red,dashed] {(1-3*x)/4};

	\legend{upper bound + \DFRI conjectured lower bound (\cref{conj:L}), \DFRI lower bound (\cref{thm:DFRI soundness}), \FRI lower bound based on \cref{lem:one-and-a-half-positive}, \FRI previous lower bound \cite{Ben-SassonKS18}, \FRI initial lower bound \cite{FRI}}
	\end{axis}
	\end{tikzpicture}
	\caption{\FRI and \DFRI soundness threshold $\delta_0$ as a function of RS code rate $\Rate$, for a single invocation of the \query phase, as field size $q\to\infty$.
	$\delta_0(\rho)$ is defined to be the largest distance parameter $\delta$ for which soundness (rejection probaiblity) of a single invocation
	of the \FRI/\DFRI QUERY is $\delta-o(1)$.
	Higher lines are better. The top line is the trivial upper bound on soundness which applies to both \FRI and \DFRI; the bottom line is the soundness of the original analysis of \cite{FRI}. Dashed lines represent prior results. The red line is the (tight) soundness lower bound for \FRI and the blue line is a lower bound on \DFRI soundness. Under a plausible conjecture for Reed-Solomon list-decodability (\cref{conj:L}), the actual soundness is as high as the green line. }
	\label{fig:FRI}
\end{figure}

\subsection{DEEP Algebraic Linking IOP (\DEEPSTIK)}\label{sec:deepstik intro}

In \cref{sec:deep fri intro} we only discussed results improving the soundness of Reed-Solomon Proximity Testing (RPT). We now discuss how to improve the soundness of IOP-based argument systems (such as~\cite{stark,Aurora}) that use RPT solutions.
In order to reap the benefits of the improved soundness of RPT,
we need reductions that produce instances of the RPT problem
{\em that are very far from the relevant RS code} when the input instance is
unsatisfiable. One such protocol is the Algebraic Linking IOP (\ALI) of \cite{stark}. 
The instances of the RPT problem derived from an unsatisfiable instance in \ALI are proven to be somewhat far from low-degree but the distance bound proved in that paper is less than $1/8$, even when used with RS codes of negligible rate $\rho$ (nevertheless it is conjectured and assumed in both ZK-STARK  \cite{stark} and Aurora \cite{Aurora} that the distance is significantly greater). In \cref{sec:Deep stik} we use the DEEP technique to modify the \ALI protocol in a manner similar to the \DFRI modification.
The result of this modification allows us to apply the soundness results of
\cref{eq:new soundness bound}
%and \cref{lem:main}
%\lior{Note: \cref{lem:main} is not used in the proof of the theorem, only in the
%``Further optimizations'' section} \eli{ok, removed ref to it}
to the \DEEPSTIK protocol and show that, when provided with unsatisfiable instances, the
distance of the received words that result from that protocol is provably at least $1-\sqrt{\rho}-o(1)$ (and may be greater, assuming more favourable bounds on the list decoding radius for RS codes, as in  \cref{conj:L}).

\paragraph{Organization of the rest of the paper}
\cref{sec:preliminaries} presents general notation. \cref{sec:general codes} gives an improved worst-to-average case reductions for general spaces and shows that the bound in the reduction is tight (\cref{lem:tightness of 1.5 johnson}).
\cref{sec:main lemma} presents our main technical result, showing that the DEEP method improves worst-to-average case reductions for RS codes up to the Johnson bound (provably) and perhaps even beyond. \cref{sec:deep fri} presents the \DFRI protocol that obtains better soundness than the state of the art \FRI protocol, and \cref{sec:Deep stik} discusses the \DEEPSTIK protocol.

\section{Preliminaries}\label{sec:preliminaries}

\paragraph{Functions} For a set $D$, we will be working with the space of functions $u: D \to \F$, denoted $\F^D$.
For $u\in \F^D$ we use $u(z)$ to denote the $z$th entry of $u$, for $z\in D$. For $C\subset D$ we use $f|_C$ to denote the restriction of $f$ to $C$. For two functions $f,g:D\to\F$ we write $f=g$ when the two functions are equal as elements in $\F^D$ and similarly say $f|_C=g|_C$ when their restrictions are equal as elements in $\F^C$.

\paragraph{Distance} We use $\Delta_D(u,v)=\Pr_{z\in D}\left[u(z)\neq v(z)\right]$ for relative Hamming distance, and omit $D$ when it is clear from context.
For a set $S\subset \F^D$ we use $\Delta_D(v,S)=\min_{s\in S}\Delta_D(v,s)$ and $\Delta_D(S)=\min_{s\neq s'\in S}\Delta_D(s,s')$ denotes the minimal relative distance of $S$.
For $u\in\F^D$ let $\ball(u,\delta)$ denote the Hamming
ball in $\F^D$ of normalized radius $\delta$ centered at $u$,
\[\ball(u,\delta)=\condset{u'\in\F^D}{\Delta_D(u,u')<\delta}.\]

\paragraph{Linear codes} An $[n,k,d]_q$-linear error correcting code is a linear space $V\subset\F_q^{n}$ of dimension $k$ over $\F_q$ with minimal Hamming distance $d$. A generating matrix for $V$ is a matrix
$G\in \F_q^{n\times k}$ of rank $k$ such that $V=\condset{Gx}{x\in \F_q^k}$.

\paragraph{Polynomials and RS codes} The \emph{interpolant} of $f:D\to\F_q$ is the unique polynomial of degree $<|D|$ whose evaluation on $D$ is $f$. The degree of $f$, denoted $\deg(f)$, is the degree of its interpolant.
The RS code evaluated over domain $D\subset \F$ and rate $\rho$
is denoted $\RS[\F,D,\rho]=\condset{f:D\to\F}{\deg(f)<\rho|D|}$. Sometimes it will be more convenient to work with degree rather than rate, in which case we abuse notation and define
$\RS[\F,D,d]=\condset{f:D\to\F}{\deg(f)< d}$.
We use capital letters like $P,Q$ to denote polynomials and
when we say $P\in \RS[\F,D,\rho]$ we mean that $\deg(P)<\rho|D|$ and
associate $P$ with the RS codeword that is its evaluation on $D$. We also use $\tilde{f}$ to denote the interpolant
of a function $f$.

\subsection{List Decoding}\label{sec:list decoding}
%\eli{The following, copied from our previous paper, should be merged with other LD stuff and maybe go elsewhere.}

\begin{definition}[List size for Reed-Solomon Codes]\label{def:list size}
	For $u\in \F^D$, a set $V\subset \F^D$, and distance parameter $\delta\in [0,1]$,
	let $\Ls(u,V,\delta)$ be the set of elements in $V$ that are at most $\delta$-far
	from $u$ in relative Hamming distance. Formally, using $B(u,\delta)$ to denote the
	Hamming ball of relative radius $\delta$ centered around $u$, we have
	$\Ls(u,V,\delta)=B(u,\delta)\cap V$.

	The code $V$ is said to be {\em $(\delta, L)$-list-decodable} if
	$|\Ls(u, V, \delta)| \leq L$ for all $u\in \F_q^D$.

	For $D \subseteq \F_q$, let $\calL(\F_q, D, d, \delta)$ be the maximum size of
	$\Ls(u,V,\delta)$ taken over all $u\in \F_q^D$ for $V=\RS[\F_q,D,\rho=d/|D|]$.
\end{definition}

%\begin{definition}[List decodability]
%	For $\delta \in [0,1]$ and $L \geq 1$, we say a set $S \subseteq \Sigma^n$
%	is {\em $(\delta, L)$-list-decodable}
%	if for all $w \in \Sigma^n$,
%	$$ |B(w, \delta) \cap S| \leq L.$$
%	%and $S\subseteq\Sigma^n$ let
%	%  \[
%	%  \radius(S,c)=\max\condset{\delta\in[0,1]}{\forall w\in \Sigma^n \ \ \left|\ball(w,\Sigma^n,\delta) \cap S \right|<c}
%	% \]
%	%  denote the maximal normalized radius of a Hamming ball that contains at most $c$ words in $S$.
%	%  Similarly, for $\gamma\in[0,1]$ let
%	%  \[\radius(\gamma, c)=\min\condset{\J(S,\Sigma^n,c)}{\Delta(S)\geq \gamma}\]
%	%  denote the minimal value of $\radius(S,\Sigma^n,c)$ taken over all codes of normalized distance at least $\gamma$.
%\end{definition}

%\swastik{For the definition of $\radius(\gamma,\Sigma^n,c)$ , it seems more natural to define it to be: $\min\condset{\radius(S,\Sigma^n,c)}{\Delta(S)\geq \gamma}$.
%BUT...
%There is a (nontrivial) result of Guruswami-Shparlinski which says that this equals the Johnson radius. So maybe defining it to be $J$ is good after all :).
%}

We recall the fundamental Johnson bound, which says that sets with large minimum distance
have nontrivial list-decodability. The particular
version below follows, e.g., from \cite[Theorem 3.3]{guruswami2007algorithmic} by
setting $d = (1-\rho)|D|$ and $e = (1-\sqrt\rho-\johnsoneps)|D|$ there.

% \eli{I suggest we use the version below, which is more precise and gives clean bounds. It follows from theorem 19 in Venkat's \href{http://www.cs.cmu.edu/~venkatg/teaching/codingtheory/notes/notes9.pdf}{lecture notes}. This version gives explicit, pleasant constants, that are useful for concrete soundness parameters.
% \begin{theorem}[Johnson bound]
% 	\label{thm:johnson}
% 	For every $\epsilon \in (0,1]$, Let $J_\epsilon: [0,1]  \to [0,1]$ be the function
% 	$$ J_\epsilon(\delta) = 1 - \sqrt{1-\delta(1-\epsilon)}.$$
% 	Let $\Sigma$ be a finite alphabet, $n$ an integer and $S\subseteq \Sigma^n$.
% 	Then $S$ is $(J_\epsilon(\Delta(S)),1/\epsilon)$-list-decodable for every $\epsilon \in (0,1]$.
% \end{theorem}

%\lior{I am using $\varepsilon$ for the Johnson bound. It should be different from $\epsilon$.
%Consider changing to something else.} \eli{lgtm}
\begin{theorem}[Johnson bound]
\label{thm:johnson}

Let $V\subset \F^D$ be a code with minimum relative distance
$1-\rho$, for $\rho \in (0, 1)$.
Then $V$ is
$(1 - \sqrt{\rho} - \johnsoneps, 1/(2\johnsoneps\sqrt{\rho}))$-list-decodable
for every $\johnsoneps \in (0, 1 - \sqrt{\rho})$.
\end{theorem}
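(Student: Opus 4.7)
The plan is to use the classical pairwise-agreement argument via Cauchy--Schwarz, which is the standard route to the Johnson bound and matches the quantitative constant $1/(2\johnsoneps\sqrt{\rho})$ exactly. I fix a received word $u \in \F^D$ and assume, for contradiction / bookkeeping, that $v_1,\ldots,v_L \in V$ are distinct codewords each lying in $\ball(u, 1-\sqrt{\rho}-\johnsoneps)$. The goal is to upper-bound $L$ by $1/(2\johnsoneps\sqrt{\rho})$.

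First I would set up the agreement sets $A_i := \condset{z \in D}{u(z)=v_i(z)}$. The distance hypothesis gives $|A_i| \geq (\sqrt{\rho}+\johnsoneps)|D|$, and the minimum-distance hypothesis on $V$ gives that any two codewords $v_i,v_j$ with $i\neq j$ agree on at most $\rho|D|$ positions, hence $|A_i \cap A_j| \leq \rho|D|$. These are the only two ingredients that matter.

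Next I would introduce the counting function $X(z) := \sum_{i=1}^L \mathbbm{1}_{A_i}(z)$ on $D$. A first-moment computation gives
\[
\sum_{z \in D} X(z) \;=\; \sum_{i} |A_i| \;\geq\; L(\sqrt{\rho}+\johnsoneps)|D|,
\]
while a second-moment expansion combined with the pairwise-agreement bound gives
\[
\sum_{z \in D} X(z)^2 \;=\; \sum_i |A_i| + \sum_{i \neq j} |A_i \cap A_j| \;\leq\; L|D| + L(L-1)\,\rho\,|D|.
\]
Applying Cauchy--Schwarz in the form $(\sum_z X(z))^2 \leq |D| \sum_z X(z)^2$ and simplifying yields
\[
L\bigl[(\sqrt{\rho}+\johnsoneps)^2 - \rho\bigr] \;\leq\; 1-\rho \;\leq\; 1.
\]
Since $(\sqrt{\rho}+\johnsoneps)^2 - \rho = 2\johnsoneps\sqrt{\rho} + \johnsoneps^2 \geq 2\johnsoneps\sqrt{\rho}$, I conclude $L \leq 1/(2\johnsoneps\sqrt{\rho})$, which is exactly the bound claimed.

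There is essentially no obstacle here — the proof is a clean two-moment computation, and the only subtle point is keeping track of the ``slack'' $\johnsoneps$ so that the quadratic inequality produces the stated constant rather than a weaker one. The cited reference \cite{guruswami2007algorithmic} gives the result after the substitution $d = (1-\rho)|D|$, $e = (1-\sqrt{\rho}-\johnsoneps)|D|$, and my plan simply unpacks that substitution.
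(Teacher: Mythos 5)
Your proof is correct: the two-moment computation is carried out accurately, and the quadratic inequality $L\bigl[(\sqrt{\rho}+\johnsoneps)^2-\rho\bigr]\leq 1-\rho$ does yield exactly the claimed list size $1/(2\johnsoneps\sqrt{\rho})$. The paper itself gives no proof of this theorem --- it only cites \cite{guruswami2007algorithmic} --- and your argument is precisely the standard convexity/second-moment proof underlying that reference, so it serves as a valid self-contained substitute for the citation.
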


In particular, for Reed-Solomon codes this implies the following list-decodability bound:
$$ \calL( \F_q, D, d  = \rho |D|, 1 - \sqrt{\rho} - \johnsoneps) \leq O\left(\frac{1}{\johnsoneps \sqrt{\rho}}\right).$$

Extremely optimistically, we could hope that Reed-Solomon codes
are list-decodable all the way up to their distance with moderate list sizes.
Staying consistent with the known limitations~\cite{BKR-RSLD}, we have the
following brave conjecture.
\begin{conjecture}[List decodability of Reed-Solomon Codes up to Capacity]
	\label{conj:L}
	For every $\rho > 0$, there is a constant $C_\rho$ such that
 every Reed-Solomon code of length $n$ and rate $\rho$ is list-decodable from $1-\rho-\varepsilon$ fraction errors with list size $\left( \frac{n}{\varepsilon} \right)^{C_\rho}$. That is:
$$ \calL( \F_q, D, d = \rho |D|, 1 - \rho - \varepsilon) \leq \left(\frac{|D|}{\varepsilon} \right)^{C_\rho}.$$
\end{conjecture}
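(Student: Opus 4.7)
My plan is to attack the conjecture via an algebraic interpolation argument in the Guruswami--Sudan style, and then to try to push past the Johnson barrier. Given a received word $u:D\to\F_q$ and setting $V=\RS[\F_q,D,\rho]$, the first step is to construct a nonzero bivariate polynomial $Q(X,Y)\in\F_q[X,Y]$ of carefully chosen $(1,d)$-weighted degree that vanishes to multiplicity $m$ at each evaluation point $(z,u(z))$, $z\in D$. A parameter count shows such a $Q$ exists as long as its coefficient-space dimension exceeds the $|D|\cdot\binom{m+1}{2}$ linear conditions imposed. For any $f$ with $\deg f<d$ lying in $\Ls(u,V,1-\rho-\varepsilon)$, the univariate polynomial $Q(X,f(X))$ vanishes to multiplicity $m$ at the $(\rho+\varepsilon)|D|$ points where $u$ and $f$ agree, and for appropriate $m$ this exceeds its $X$-degree, forcing $Y-f(X)$ to divide $Q(X,Y)$. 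Counting divisors of $Q$ of the form $Y-f(X)$ then bounds $|\Ls(u,V,1-\rho-\varepsilon)|$.

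The genuine obstacle, and the reason this statement remains a conjecture rather than a theorem, is that the multiplicity trick in the form above bottoms out exactly at the Johnson radius $1-\sqrt{\rho}-\varepsilon$ already given by \cref{thm:johnson}. Pushing all the way to capacity $1-\rho-\varepsilon$ for \emph{plain} Reed--Solomon codes requires a new algebraic restriction on the list. A natural direction is to exploit that $\Ls(u,V,\delta)$ sits inside the affine variety cut out by $Q$, and to argue that this variety cannot contain too many curves $Y=f(X)$ with $\deg f<d$: this is the mechanism behind linear-algebraic list-decoding for folded and derivative Reed--Solomon codes, and one would try to transplant it to the plain RS setting, perhaps via subspace-evasive sets or second-order interpolation identities constraining the list to lie in a low-dimensional subspace that one can then intersect with $V$.

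A complementary route worth pursuing is probabilistic: for a \emph{random} evaluation set $D\subseteq\F_q$, methods of Rudra--Wootters already yield polynomial list size past the Johnson bound with high probability. I would first try to extend such an argument to structured families of $D$ relevant here, for instance multiplicative cosets, which is exactly the regime used by \DFRI and needed for \cref{eq:johnson bound deep fri}. The hardest step by far is de-randomizing from ``most $D$'' to ``all $D$''; I expect essentially all the difficulty to live there, since this is a major open problem in coding theory tightly connected to the combinatorics of low-degree curves in $\F_q^2$, and any complete proof is likely to demand a genuinely new algebraic or geometric idea rather than a careful recombination of known techniques.
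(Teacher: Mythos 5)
You have not proved the statement, but that is the correct outcome: \cref{conj:L} is stated in the paper as a conjecture, not a theorem. The paper offers no proof of it --- it is introduced as a ``brave conjecture,'' calibrated only to be consistent with the known limitations of \cite{BKR-RSLD}, and it is used downstream purely as a hypothesis (e.g., in the conditional soundness bound of \cref{cor:deep-fri-conjL} and in the green line of \cref{fig:FRI}). So there is no proof in the paper to compare your attempt against, and any ``blind proof'' of this statement should be treated with suspicion.

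Your write-up is an accurate assessment of why the statement is open. The Guruswami--Sudan interpolation-with-multiplicities argument you sketch is correct as far as it goes, but, as you say, the parameter count bottoms out precisely at the Johnson radius $1-\sqrt{\rho}-\varepsilon$, which is exactly what \cref{thm:johnson} already gives; it cannot reach $1-\rho-\varepsilon$ for plain Reed--Solomon codes. The linear-algebraic mechanisms that go past Johnson (folded RS, derivative codes, subspace-evasive sets) rely on extra structure that plain RS codes do not visibly possess, and the Rudra--Wootters results you mention hold only for random evaluation sets, with no known derandomization to the structured domains (cosets of subgroups) that \DFRI{} actually uses. If you were to present any of these routes as a completed proof, the gap would be exactly where you locate it: the step from ``most $D$'' to ``all $D$,'' or equivalently a new algebraic constraint forcing the list inside a small variety. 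Since you explicitly flag this as the unresolved obstacle rather than claiming to have overcome it, your proposal is best read as a correct statement of the problem's status, matching the paper's own treatment of \cref{conj:L} as an assumption rather than a result.
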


\section{Improved High-error Distance Preservation}
\label{sec:general codes}

Our first result gives better distance preservation results for linear codes $V$ of relative distance $\lambda$.
The previous state-of-the-art \cite{Ben-SassonKS18}
said that when a $1$-dimensional affine space $U$ contains some element $u^*$ that is $\delta_{\max}=\Delta(u^*,V)$ far from $V$,
then $$\E_{u\in U}[\Delta(u,V)]\geq \min(\delta_{\max},1-J\two(\lambda))-o(1).$$
The following lemma improves the average-case distance to
$$\E_{u\in U}[\Delta(u,V)]\geq \min(\delta_{\max},1-J\ohalf(\lambda))-o(1).$$
Later on, in \cref{sec:tightness of one and half johnson}, we will show that this result is tight (for a sub-family of RS codes).

\begin{lemma}[One-and-half Johnson distance preservation]\label{lem:one and half Johnson}
	Let $V \subseteq \F_q^n$ be a linear code of distance $\lambda=\Delta(V)$.
	Let $\epsilon, \delta > 0$ with $\epsilon < 1/3$ and  $\delta < 1 - (1-\lambda + \epsilon)^{1/3}$.

	Suppose $u^* \in \F_q^n$ is such that $\Delta(u^*, V) > \delta + \epsilon$.
	Then for all $u \in \F_q^n$, there are at most $2/\epsilon^2$
	values of $x \in \F_q$ such that $\Delta(u^* + x u, V) < \delta$.
\end{lemma}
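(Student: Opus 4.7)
The plan is to argue by contradiction. Suppose there are $t > 2/\epsilon^2$ distinct values $x_1, \ldots, x_t \in \F_q$ with $\Delta(u^* + x_i u, V) < \delta$, and for each $i$ pick a nearest codeword $v_i \in V$ and let $A_i \subseteq [n]$ be the agreement set, so $|A_i|/n > 1 - \delta$. The assumption $\Delta(u^*, V) > \delta + \epsilon > \delta$ rules out $x_i = 0$. The central algebraic observation is that for any three distinct indices $i, j, k$ the codeword
\[ w_{ijk} \;=\; (x_j - x_k)v_i + (x_k - x_i)v_j + (x_i - x_j)v_k \;\in\; V \]
vanishes on $A_i \cap A_j \cap A_k$: substituting $v_\ell = u^* + x_\ell u$ makes both the $u^*$- and $u$-coefficients telescope to zero.

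The key use of the exponent $1/3$ comes from a third-moment Jensen argument. For each coordinate $m \in [n]$ let $p_m := \Pr_i[m \in A_i]$, so $\E_m[p_m] \geq 1-\delta$. Then
\[ \E_{(i,j,k) \sim [t]^3}\!\left[\tfrac{|A_i \cap A_j \cap A_k|}{n}\right] \;=\; \E_m\!\left[p_m^3\right] \;\geq\; (1-\delta)^3 \;>\; 1-\lambda+\epsilon \]
by convexity of $t \mapsto t^3$ and the hypothesis on $\delta$. Markov's inequality then yields at least an $\epsilon/\lambda$ fraction of iid triples with $|A_i \cap A_j \cap A_k|/n > 1-\lambda$, which (after discarding the $O(t^2)$ degenerate triples) forces $w_{ijk} = 0$ by the minimum distance of $V$ for $\Omega(\epsilon t^3/\lambda)$ ordered distinct triples. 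Pigeonholing over pairs $(i,j)$ produces a fixed $(i_0,j_0)$ with $L := \{k : w_{i_0 j_0 k} = 0\}$ of size $\Omega(\epsilon t/\lambda)$, and on $L$ every $v_k$ factors as $a + x_k b$, where $a = (x_{i_0}v_{j_0} - x_{j_0}v_{i_0})/(x_{i_0}-x_{j_0}) \in V$ and $b = (v_{i_0}-v_{j_0})/(x_{i_0}-x_{j_0})$.

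The final step leverages the worst-case hypothesis to bound $|L|$ by double counting. Since $a \in V$, $\Delta(u^*, a) > \delta + \epsilon$, so the set $Z_0 := \{m : u^*(m) = a(m)\}$ has $|Z_0|/n < 1-\delta-\epsilon$. For each $k \in L$, the error function $h_k(m) := (u^*(m)-a(m)) + x_k(u(m)-b(m))$ has Hamming weight less than $\delta n$ and is affine in $x_k$ coordinatewise; at every $m \notin Z_0$ its constant term is nonzero, so it vanishes for at most one $x_k$. Summing over $k \in L$ gives $|L| \cdot n(1-\delta) < |L|\cdot |Z_0| + (n - |Z_0|)$, which together with the bound on $|Z_0|$ forces $|L| = O(1/\epsilon)$. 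Combined with $|L| = \Omega(\epsilon t/\lambda)$ this contradicts $t > 2/\epsilon^2$ after careful constant tracking. The main conceptual obstacle is recognizing that the third moment of $p_m$ (not the pairwise intersection used for the standard Johnson bound, nor the quadruple used for the double-Johnson bound) is the right object, producing the $(1-\delta)^3$ threshold and hence the cube-root exponent in the statement.
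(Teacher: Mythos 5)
Your proof is correct and follows essentially the same route as the paper's: the third-moment Jensen inequality on the agreement sets, Markov plus discarding degenerate triples, pigeonholing over pairs, and then exploiting the minimum distance to linearize the family of nearest codewords. Your explicit determinant codeword $w_{ijk}$ is precisely the paper's "collinearity in $\F_q \times \F_q^n$" argument written out in coordinates, and your final double-count via $Z_0$ (just the $u^*$-agreement set with $a$) is a minor streamlining of the paper's use of the set $C$ (the simultaneous agreement set for both $u^*$ with $v^*$ and $u$ with $v$) — both land exactly on $|L| < 1/\epsilon$ versus $|L| \geq \epsilon t/2$, giving $t < 2/\epsilon^2$ with matching constants.
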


This result is the contra-positive statement of the following, more informative, version of it, that we prove below.

\begin{lemma}[One-and-half Johnson distance preservation --- positive form]\label{lem:one-and-a-half-positive}
Let $V \subseteq \F_q^D$ be a linear code of distance $\lambda=\Delta(V)$.
Let $\epsilon, \delta > 0$ with $\epsilon < 1/3$ and $\delta < 1 - (1-\lambda + \epsilon)^{1/3}$.
	Let $u,u^* \in \F_q^D$ satisfy
	\begin{equation}\label{eq1.5}
	\Pr_{x \in \F_q} [ \Delta( u^* + x u, V) < \delta ] \geq  \frac{2}{\epsilon^2 q}.
	\end{equation}
	Then there exist $v, v^*\in V$ and $C \subseteq D$ such that the following three statements hold simultaneously:
\begin{itemize}
\item $|C| \geq (1-\delta-\epsilon) |D|$,
\item $u|_{C}=v|_{C}$, and
\item $u^*|_{C}=v^*|_{C}$.
\end{itemize}
\end{lemma}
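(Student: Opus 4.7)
The plan is to combine a cubic averaging on triple intersections of agreement sets with a two-out-of-many covering argument. Define $S = \{x \in \F_q : \Delta(u^*+xu,V) < \delta\}$; by hypothesis, $|S| \geq 2/\epsilon^2$. For each $x \in S$, fix a nearest codeword $v_x \in V$ and let $A_x \subseteq D$ be its agreement set with $u^*+xu$, so $|A_x| > (1-\delta)|D|$. Writing $N(z) = |\{x \in S : z \in A_x\}|$ and applying the cubic power-mean inequality,
\[
\E_{x_1,x_2,x_3\,\text{iid in}\,S}\!\left[\frac{|A_{x_1}\cap A_{x_2}\cap A_{x_3}|}{|D|}\right]
= \E_{z \in D}\!\left(\frac{N(z)}{|S|}\right)^{\!3} \;\geq\; (1-\delta)^3 \;>\; 1-\lambda+\epsilon,
\]
which is where the cube root in the hypothesis enters. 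A reverse-Markov split of this bounded expectation gives $\Pr[\text{triple intersection} > (1-\lambda)|D|] > \epsilon/\lambda$; absorbing an $O(1/|S|) = O(\epsilon^2)$ loss to exclude non-distinct triples, an averaging argument lets me fix distinct $x_1,x_2 \in S$ such that the \emph{good} set $G := \{x_3 \in S : |A_{x_1}\cap A_{x_2}\cap A_{x_3}| > (1-\lambda)|D|\}$ satisfies $|G| \geq 1/\epsilon$.

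Using $x_1,x_2$, I define the candidate codewords $v := (v_{x_1}-v_{x_2})/(x_1-x_2)$ and $v^* := (x_1 v_{x_2}-x_2 v_{x_1})/(x_1-x_2)$, both in $V$ by linearity. On $A_{x_1}\cap A_{x_2}$ the two equations $u^*+x_i u = v_{x_i}$ for $i \in \{1,2\}$ force $u=v$ and $u^*=v^*$ pointwise. For each $x \in G$, on the triple intersection $A_{x_1}\cap A_{x_2}\cap A_x$ (of size exceeding $(1-\lambda)|D|$) both codewords $v_x$ and $v^*+xv$ equal $u^*+xu$, so they agree with each other on more than $(1-\lambda)|D|$ positions; since $V$ has minimum distance $\lambda$, this forces $v_x = v^*+xv$ as elements of $V$. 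Setting $T := \{x \in S : v_x = v^*+xv\}$, I obtain $T \supseteq \{x_1,x_2\}\cup G$, hence $|T| \geq 1/\epsilon + 2$.

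Finally, let $C := \{z \in D : u(z)=v(z)\text{ and }u^*(z)=v^*(z)\}$ and $M(z) := |\{x \in T : z \in A_x\}|$. If $z \in A_x\cap A_{x'}$ for two distinct $x,x' \in T$, the equations $u^*(z)+yu(z)=v^*(z)+yv(z)$ for $y\in\{x,x'\}$ uniquely determine $u(z)=v(z)$ and $u^*(z)=v^*(z)$, so $C \supseteq \{z : M(z)\geq 2\}$. Combining $\sum_z M(z) = \sum_{x\in T}|A_x| > |T|(1-\delta)|D|$ with the pigeonhole bound $\sum_z M(z) \leq |D|+(|T|-1)|\{z : M(z)\geq 2\}|$ yields
\[
\frac{|\{z : M(z)\geq 2\}|}{|D|} \;\geq\; \frac{|T|(1-\delta)-1}{|T|-1} \;\geq\; 1-\delta-\epsilon,
\]
where the last inequality is equivalent to $|T| \geq (\delta+\epsilon)/\epsilon$ and holds because $|T| \geq 1/\epsilon + 2$ and $\delta \leq 1$.

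The main obstacle is that the triple intersection alone only delivers a set of size about $(1-\lambda)|D|$, whereas the hypothesis $(1-\delta)^3 > 1-\lambda+\epsilon$ forces $\delta+\epsilon < \lambda$, so the target $(1-\delta-\epsilon)|D|$ is strictly larger than $(1-\lambda)|D|$. The covering step of the third paragraph is precisely what bridges this gap, leveraging a large consensus set $T$ of "aligned" slopes rather than a single triple. A secondary bookkeeping issue is tracking distinctness through the averaging step, where the $O(\epsilon^2)$ losses must fit inside the $\epsilon/\lambda$ slack from Markov; the assumption $\epsilon < 1/3$ is sized exactly to accommodate this.
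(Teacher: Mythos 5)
Your proposal is correct and follows essentially the same route as the paper's proof: the cubic/Jensen averaging over triple intersections of agreement sets, a reverse-Markov step, fixing two slopes and averaging over the third to get a large set of $x$ whose nearest codewords lie on a single line $v^*+xv$, and a final counting argument over $D$ showing that positions outside $C$ can serve at most one $x$. The only differences are presentational (you write the line explicitly and invoke the minimum distance directly where the paper phrases it as $v_\gamma$ being determined by its restriction, and your pigeonhole count on $M(z)$ is the same inequality the paper states as a bound on $\E_{x\in A'}[\Delta(u_x,v_x)]$).
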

Observe that if $u, u^*$ satisfy \cref{eq1.5} then the $v, v^*,C$ deduced by \cref{lem:one-and-a-half-positive} have the property that for {\em all} $x \in \F_q$, we have $\Delta(u^* + xu, V) \leq \delta + \epsilon$. In other words, the existence
of $v, v^*$ and $C$ almost completely explains Equation~\eqref{eq1.5}.

Quantitatively weaker statements in this vein were proved by~\cite{PS94, FRI} in the low-error case, 
and~\cite{ChiesaMS17,Ben-SassonKS18} in the high-error case.  The proofs of the latter two results
used combinatorial tools (the Kovary-Sos-Turan bound and the Johnson bound respectively) that
are closely related to one another. Our improved  proof below is direct, and is based on  the same
convexity principle that underlies both the Kovary-Sos-Turan and Johnson bounds.
\begin{proof}
Let $u_x = u^* + x u$.
Let $$A = \{x \mid \Delta(u^* + x u, V) < \delta \}.$$
For each $x \in A$, let $v_x \in V$ be an element of $V$ that is closest
to $u_x$, and let $S_x \subseteq D$ be the agreement set of $u_x$ and $v_x$, defined as $S_x=\condset{y\in D}{u_x(y)=v_x(y)}$.

For $x, \beta, \gamma$ picked uniformly from $A$ and $y$ picked uniformly from $D$, we have:
\begin{align*}
\E_{x, \beta,\gamma} [ |S_{x} \cap S_\beta \cap S_\gamma|/n ] &= \E_{y, x, \beta, \gamma} [ 1_{y \in S_x \cap S_\beta \cap S_{\gamma}} ]\\
&= \E_{y} [ \E_{x} [ 1_{y \in S_{x}}]^3 ]\\
&\geq \E_{y, x} [ 1_{y \in S_{x}} ]^3 \\
&\geq (1-\delta)^3\\
&> 1-\lambda + \epsilon.
\end{align*}

The second equality above follows from the independence of the events
$y\in S_x, y\in S_\beta, y\in S_\gamma$ given $y\in D$. The first inequality is Jensen's and the last inequality is by assumption on $\delta, \gamma,\epsilon$.

Thus $$\Pr_{x, \beta, \gamma} [ |S_{x} \cap S_\beta \cap S_\gamma| \geq (1-\lambda)|D| ] \geq \epsilon .$$

Note that $\Pr_{x,\beta,\gamma}[x,\beta,\gamma \text{ are not all distinct}] < 3/|A|$.
Since $|A| \geq 2/\epsilon^2 >  \frac{6}{\epsilon}$, we have that $3/|A| \leq \epsilon/2$ and hence
$x, \beta, \gamma$ are all distinct with probability at least $1 -\epsilon/2$.
Thus with probability at least $\epsilon/2$ over the choice of $x, \beta, \gamma$, we have that $x, \beta, \gamma$ are all distinct and
$|S_x \cap S_\beta \cap S_\gamma| > (1-\lambda)|D|$.

This means that there are distinct $x_0, \beta_0$
such that
$$ \Pr_{\gamma} [  |S_{x_0} \cap S_{\beta_0} \cap S_\gamma| > (1-\lambda)|D|] \geq \epsilon/2.$$

Fix a $\gamma$ where this happens. Let $S = S_{x_0} \cap S_{\beta_0} \cap S_{\gamma}$.
We have that $$(x_0, u_{x_0}), (\beta_0, u_{\beta_0}), (\gamma, u_\gamma)$$
are collinear.
Thus $$(x_0, u_{x_0}|_S), (\beta_0, u_{\beta_0}|_S), (\gamma, u_{\gamma}|_S)$$
are all collinear.
By definition of $S$, we get that:
$$(x_0, v_{x_0}|_S), (\beta_0, v_{\beta_0}|_S), (\gamma, v_{\gamma}|_S)$$
are all collinear.
Since $|S| > (1-\lambda)|D|$ (and recalling that $\lambda$ is the distance of $V$), we get that $v_\gamma$ is determined by $v_\gamma|_S$ via a linear map. This means that
$$(x_0, v_{x_0}), (\beta_0, v_{\beta_0}), (\gamma, v_{\gamma})$$
are all collinear.

Thus $\epsilon/2$-fraction of the $\gamma \in A$ have the ``good" property that
$(\gamma, v_{\gamma})$ is on the line passing through
$(x_0, v_{x_0})$ and $(\beta_0, v_{\beta_0})$.
Write this line as $v^* + x v$ and notice that for all ``good'' $\gamma$ we have $v_{\gamma} = v^* + \gamma v$.
Let $A'\subseteq A$ denote the set of good elements for this line, recording that $|A'|\geq |A|\cdot \epsilon/2\geq 1/\epsilon$.

Thus for $x \in A'$, $\Delta(u^* + xu, v^* + xv) < \delta $.

Consider the set $C\subset D$ defined by
	$$C = \condset{y\in D}{u^*(y)=v^*(y) \mbox{ AND } u(y)=v(y)}.$$

For each $y \in D\setminus C$ there exists at most a single value of $x\in \F_q$ satisfying $u^*(y) + x\cdot u(y)=v^*(y) + x\cdot v(y)$ because
	$$(u^*(y)-v^*(y)) + x\cdot(u(y)-v(y))$$
	has at most one value $x$ on which it vanishes.

This implies
	$$\delta \geq \E_{x\in A'}[\Delta_D(u_x,v_x)]\geq
	\frac{|D\setminus C|}{|D|}\cdot \left(1-\frac{1}{|A'|}\right)\geq \left(1-\frac{|C|}{|D|}\right)\cdot(1-\epsilon)\geq 1-\frac{|C|}{|D|}-\epsilon.$$
	Rearranging, we get $\frac{|C|}{|D|}\geq 1-(\delta+\epsilon)$ and this completes the proof.

\end{proof}

\subsection{Tightness of the one-and-a-half Johnson bound}\label{sec:tightness of one and half johnson}

\cref{lem:one and half Johnson} says that when $V$ is a linear code
with minimum distance $\lambda$, and $u^*$ is some element that is $\delta$-far from $V$, then for any $u$ we have with high probability
$$\Delta(u^* + x u, V) \geq \min(\delta, J\ohalf(\lambda)= 1-(1-\lambda)^{1/3}).$$

The rightmost term seems quite strange, as the $J\ohalf(\cdot)$ function is
unfamiliar in other settings of coding theory. However, as we show next,
in certain  settings this function gives the correct bound!
%The proof is deferred to \cref{app:tight bounds}.
%\eli{Commented out proof and moved to appendix, to save space}

\begin{lemma}[Tightness of one-and-a-half Johnson bound]
	\label{lem:tightness of 1.5 johnson}
	For every member $V_n$ of following family of RS codes
	$\condset{V_n=\RS[\F_{2^n},\F_{2^n},\rho=2^{-3}]}{n\in \N}$ there exist $u^*_n,u_n\in\F_{2^n}^{\F_{2^n}}$ satisfying the following:
		\begin{itemize}
			\item $\delta_{\max}\eqndef\Delta(u^*_n,V_n)=\Delta(u_n,V_n)=\frac{3}{4}=1-\rho^{2/3}$
			\item $\forall x\neq 0, \Delta(u^*_n+xu_n,V_n)\leq \frac{1}{2}=1-\rho^{1/3}=J\ohalf(\Delta(V_n))$
		\end{itemize}
	Consequently, $\E[\delta_x]\leq J\ohalf(V_n)+o(1)\leq \delta_{\max}-\frac{1}{4}+o(1)$.
\end{lemma}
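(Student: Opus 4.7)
The plan is to exhibit explicit polynomials $u^*_n, u_n \in \F_{2^n}[y]$ and verify the three listed properties; the expectation bound then follows by direct averaging. For the two distance-$3/4$ conditions, the starting point is the elementary degree bound: any $f \in \F_{2^n}[y]$ of interpolant degree exactly $2^{n-2}=|\F_{2^n}|/4$ satisfies $\Delta(f, V_n) \geq 1 - 2^{n-2}/|\F_{2^n}| = 3/4$, since for any codeword $P \in V_n$ (with $\deg P < 2^{n-3}$) the difference $f-P$ has degree $2^{n-2}$ and hence at most $2^{n-2}$ roots in $\F_{2^n}$. A natural candidate is $u^*_n(y) = y^{2^{n-2}}$; equality $\Delta(u^*_n, V_n) = 3/4$ is certified by producing one $P^* \in V_n$ with $u^*_n - P^* = Z_S$, the vanishing polynomial of an $\F_2$-subspace $S \subset \F_{2^n}$ of dimension $n-2$. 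Because $Z_S$ is additive and supported only on Frobenius powers $y^{2^j}$, we can choose $S$ so that the coefficient of $y^{2^{n-3}}$ in $Z_S$ vanishes, forcing $P^* = y^{2^{n-2}} - Z_S$ to have degree strictly below $2^{n-3}$ and so lie in $V_n$; the resulting agreement set is exactly $S$, of size $|\F_{2^n}|/4$.

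The main obstacle is the second bullet: for every $x \in \F_{2^n} \setminus \{0\}$ we need some $v_x \in V_n$ with $\Delta(u^*_n + x u_n, v_x) \leq 1/2$. Agreement $\geq |\F_{2^n}|/2$ forces $(u^*_n + x u_n) - v_x$ to have $\geq 2^{n-1}$ roots, hence polynomial degree $\geq 2^{n-1}$; in particular $u_n$ cannot be of pure degree $2^{n-2}$, and must carry a $y^{2^{n-1}}$ term. I would construct $u_n$ as a carefully chosen linear combination of three Frobenius powers $y^{2^{n-1}}, y^{2^{n-2}}, y^{2^{n-3}}$ together with lower-degree additive perturbations, so that for every $x \neq 0$ there exist $\beta_x, \gamma_x \in \F_{2^n}^*$ satisfying
\[ u^*_n + x u_n - \beta_x \Tr(\gamma_x y) \;\in\; V_n. \]
The summand $\beta_x \Tr(\gamma_x y)$ is additive of degree $2^{n-1}$ and vanishes on the hyperplane $\ker \Tr(\gamma_x \cdot)$ of size $|\F_{2^n}|/2$, delivering the required $1/2$-closeness. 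Matching coefficients at the three top Frobenius degrees gives a system of three polynomial equations in $\beta_x, \gamma_x$: the equations at $y^{2^{n-1}}$ and $y^{2^{n-2}}$ uniquely determine $\beta_x, \gamma_x$ as explicit Frobenius-power monomials in $x$, and the compatibility at $y^{2^{n-3}}$ is the one that must be built into the choice of $u_n$. I expect this compatibility across all $x \in \F_{2^n}^*$ to be the main technical hurdle, handled by exploiting the identity $x^{2^n} = x$ and the precise alignment of the rate $\rho = 2^{-3}$ with exactly three Frobenius levels at the boundary of $V_n$. A separate verification, again using the Frobenius structure of $u_n$'s coefficients, will certify $\Delta(u_n, V_n) = 3/4$ despite $\deg u_n = 2^{n-1}$, by showing that no $P \in V_n$ can agree with $u_n$ on more than $|\F_{2^n}|/4$ points (equivalently, that only \emph{two}-dimensional $\F_2$-images are achievable for $u_n - P$, never one-dimensional).

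With the construction in hand, the expectation bound follows by splitting the average at $x=0$:
\[ \E_{x \in \F_{2^n}}[\delta_x] \;=\; \frac{1}{|\F_{2^n}|}\Bigl( \delta_0 + \sum_{x \neq 0} \delta_x \Bigr) \;\leq\; \frac{1}{|\F_{2^n}|}\Bigl( \tfrac{3}{4} + (|\F_{2^n}|-1)\cdot \tfrac{1}{2} \Bigr) \;\xrightarrow{n \to \infty}\; \tfrac{1}{2}. \]
Since $\Delta(V_n) \to 7/8$ and $J\ohalf(7/8) = 1 - (1/8)^{1/3} = 1/2$, this gives $\E[\delta_x] \leq J\ohalf(V_n) + o(1)$, and noting $\delta_{\max} - 1/4 = 3/4 - 1/4 = 1/2$ completes the second inequality $\leq \delta_{\max} - 1/4 + o(1)$ in the stated conclusion.
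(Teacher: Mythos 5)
Your proposal assembles the right ingredients --- Frobenius monomials, additive trace-type polynomials vanishing on index-two subgroups to certify $1/2$-closeness, a degree count for the $3/4$ lower bound, and the final averaging --- but the central construction is left unexecuted, and the specific route you describe cannot be completed. You place the low monomial $y^{2^{n-2}}$ in $u^*_n$ and ask $u_n$ to carry the three Frobenius levels $y^{2^{n-1}},y^{2^{n-2}},y^{2^{n-3}}$, to be matched for every $x\neq 0$ by the two-parameter family $\beta\Tr(\gamma y)=\sum_i\beta\gamma^{2^i}y^{2^i}$. Writing $u_n=\sum_i a_iy^{2^i}+(\text{lower-degree perturbations})$, the three matching equations $xa_{n-1}=\beta\gamma^{2^{n-1}}$, $1+xa_{n-2}=\beta\gamma^{2^{n-2}}$, $xa_{n-3}=\beta\gamma^{2^{n-3}}$ force, after eliminating $\beta$ and $\gamma$, the identity $(1+xa_{n-2})^3=x^3a_{n-1}a_{n-3}^2$ for all $x\in\F_{2^n}^*$; comparing constant terms of these two cubics in $x$ shows this is impossible, and taking $a_{n-3}=0$ instead forces $\beta\gamma^{2^{n-3}}=0$, killing the certificate. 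So the ``compatibility at $y^{2^{n-3}}$'' that you defer as the main technical hurdle is not a hurdle but an obstruction: no choice of $u_n$ makes your three-level system consistent.

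The paper's proof never creates the third equation. It takes $u^*=y^{2^{n-1}}$ and $u=y^{2^{n-2}}$ --- exactly the two Frobenius monomials lying \emph{above} the degree threshold $2^{n-3}$ of $V_n$ --- so that $u^*+xu=y^{2^{n-1}}+xy^{2^{n-2}}$, and observes that the normalized trace polynomial $\beta^{-2^{n-1}}\Tr(\beta Y)=Y^{2^{n-1}}+cY^{2^{n-2}}+(\text{terms of degree}\leq 2^{n-3})$ realizes every value of the single free coefficient $c=x$ as $\beta$ ranges over $\F_{2^n}^*$; the entire tail is absorbed into the codeword $v_x$, and the $2^{n-1}$ roots of the trace give agreement $1/2$. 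Your line is in fact the same line up to the reparametrization $x\mapsto x^{-1}$ and a scalar (which preserves distances to the linear code $V_n$), so the correct construction is already latent in your setup --- the error is insisting on matching the boundary coefficient at $y^{2^{n-3}}$ rather than pushing it into the codeword. Two further steps you flag but do not supply: the certificate that the degree-$2^{n-2}$ function is \emph{exactly} (not just at least) $3/4$-far requires exhibiting an $(n-2)$-dimensional subspace whose vanishing polynomial has zero coefficient at $y^{2^{n-3}}$, which you do not construct; and the claim that the degree-$2^{n-1}$ member of your pair is also exactly $3/4$-far is asserted without proof (the paper, for what it is worth, proves the easy $3/4$ bound only for its degree-$2^{n-2}$ function).
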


%\iffalse
We shall need to following claim in our proof of the lemma.

\begin{claim}\label{clm:tightness of one and half johnson}
	For every $x \in \F_{2^n} \setminus\{0\}$ there exists
	a polynomial $P_{x}(Y) \in \F_{2^n}[Y]$ of the form
	$$ P_x(Y) = Y^{2^{n-1}} + x Y^{2^{n-2}} + \tilde{P}_x, \quad \deg(\tilde{P}_x) < 2^{n-3}.$$
	that has $2^{n-1}$ distinct roots in $\F_{2^n}$.
\end{claim}
\begin{proof}
	For $x\neq 0$ let $\beta_x = 1/x^2$, noticing $\beta_x$ is unique because the map $\beta\mapsto \beta^2$ is bijective on $\F_{2^n}$.
	Let $\Tr(Z)\eqndef \sum_{i=0}^{n-1}  Z^{2^i}$ be the trace function
	from $\F_{2^n}$ to $\F_2$. Define
	$$S_x = \{ y \in \F_{2^n} \mid \Tr(\beta_x y) = 0\}.$$
	It is well known that $|S_x| = 2^{n-1}$ because the trace function has $2^{n-1}$ roots in $\F_{2^n}$. So we define
	$$P_x(Y) = \frac{1}{\beta_x^{2^{n-1}}} \cdot \Tr(\beta_x Y) = Y^{2^{n-1}} + \frac{1}{\beta_x^{2^{n-2}}} Y^{2^{n-2}} +\tilde{P}_x = Y^{2^{n-1}} + x Y^{2^{n-2}} + \tilde{P}_x, \quad \deg(\tilde{P}_x(Y)) < 2^{n-3}.$$
	The last equality follows because $\beta_x^{2^{n-2}}=x$.
\end{proof}

\begin{proof}[Proof of \cref{lem:tightness of 1.5 johnson}]
	Consider $V_n$ in this family and let $\F=\F_{2^n}$. Define
	$u^*: \F \to \F$ to be the function $u^*(y) = y^{2^{n-1}}$ and
	let $u:\F \to \F$ be the function $u(y) = y^{2^{n-2}}$.

By \cref{clm:tightness of one and half johnson}, for every $x \in \F \setminus \{0 \}$ there is some $v_x \in V_n$ and $P_x$ with $2^{n-1}$ roots in $\F$ such that
$$ P_{x} - (u^* + x u) + v_x = 0.$$
Then
$$\Delta(u^* + x u, v_x) = \Pr_{y \in \F} [u^*(y) + x u(y) \neq v_x(y)] = \Pr_{y}[ P_x(y) \neq 0 ] = 1/2.$$
Thus we get that for all $x \in \F \setminus \{0\}$
$$ \Delta( u^* + x u, V ) \leq 1/2.$$

On the other hand,
$$\Delta(u, V) \geq 3/4,$$
because for all $v \in V_n$, $u-v$ is a polynomial of degree at most $2^{n-2} = |\F| / 4$. This completes the proof.
\end{proof}

\begin{remark}
Since this example is based on Reed-Solomon codes, it also easily translates into a limitation on the soundness of \FRI. In particular, it means that the improvment to the soundness of \FRI given in~\cref{rem:improved fri analysis} is optimal.
\end{remark}

%\fi
%It so happens that
%$$\Delta(u^*, \calC) \geq 3/4$$
%too. Indeed, for any $v \in \calC$,
%$$ \Pr_y [ y^{2^{n-1}} + v(y) = 0 ] = \Pr_y[ y + v(y^2)  = 0] \leq 1/4,$$
%where we substitute $y = y^{1/2}$ (and so $y$ also is uniformly distributed),
%and we use the fact that $v(Y^2)$ has degree
%at most $|F|/4$. This is not needed for the counterexample but it is
%nice to know.

%\eli{Commented out discussion section, to be revived after submission}
%\iffalse
\eli{Question: Do we want to keep the discussion?}
\paragraph{Discussion}
\label{sec:tightness discussion}
%\swastik{TO BE CLEANED UP}
\cref{lem:tightness of 1.5 johnson} raises the question of whether the one-and-a-half Johnson bound of \cref{lem:one and half Johnson} is tight for all RS codes, including non-binary fields and evaluation domains that are strict subsets of the ambient field. We point out that the technique used to prove \cref{lem:tightness of 1.5 johnson} deteriorates rapidly even for binary fields, and even when the evaluation domain is an $\F_2$-linear space which resembles the case above.

Indeed, consider an evaluation domain $D\subset \F_{2^n}$ that is a $d+1$-dimensional linear space over $\F_2$, where $n>d+1$. There are $2^{d+1}$ subspaces of dimension $d$ in $V$. For such $U\subset V, \dim(U)=d$, the polynomial $P_U(X)=\prod_{\alpha \in U}(X-\alpha)$ is of the form
$$P_U(Y)=Y^{2^d}+x_U Y^{2^d-1}+\hat{P}_U(Y)$$
which resembles the structure of \cref{clm:tightness of one and half johnson}. Moreover, as was the case there, for $U'\neq U, U'\subset V, \dim(U')=d$ we have $x_U\neq x_{U'}$. This is because $P_U-P_{U'}$ is a non-zero polynomial with $2^{d-1}$ roots, because  $\dim(U\cap U')=d-1$. Thus, we cannot have $x_U=x_{U'}$ as this would imply $\deg(P_U-P_{U'})\leq 2^{d-2}<2^{d-1}$, contradiction.

As in \cref{lem:tightness of 1.5 johnson}, taking $u^*$ to be the evaluation of $Y^{2^d}$ on $D$ and $u$ be the evaluation of $Y^{2^{d-1}}$ on the same space, we conclude there exists a set $A\subset \F, |A|=2^d$, such that for $x\in A$ we have that $u^*+x u$ agrees with some RS codeword of rate $2^{-3}$ on half of the evaluation domain.

However, notice that $|A|/2^n=2^{-(n-d)}$, meaning that the probability of sampling $x\in A$ deteriorates exponentially with the difference $n-d$. Thus the above counterexample fails to rule out an improvement to~\cref{lem:one and half Johnson} when the length of the code $n$ is much smaller than the size of the field $q$.

Conceivably, both \cref{lem:one and half Johnson} and the analysis of \FRI can be improved significantly under the assumption $n \ll q$. This is the case of most importance to practical implementations of STARKs.
%\begin{conjecture}
%	\label{conj:better soundness}
%	For all $\delta\leq 1-\Rate$,
%	the rejection probability of any $f$ that is $\delta$-far from the RS-code
%	of rate $\Rate$ and block-length $N$ over $\F$,
%	is at least
%	\begin{equation}\label{eq:better soundness}
%	\delta-\frac{2\log N}{\sqrt{|\F|}}.
%	\end{equation}
%\end{conjecture}

%\fi
\section{The DEEP Theorem --- Using Domain Extension for Eliminating Pretenders (DEEP) and Improving Soundness}\label{sec:main lemma}

%\paragraph{Notation}
%\swastik{Technicaly we don't enforce that $\bu$ is an extension of $u$ ... because even
%if our sample $z\in\F_q$ happens to land in $D$, we don't "open up the commitment" for $z$
%but instead just trust whatever the prover claims is $\bu{z}$. So we shouldn't talk about extensions.}
%For $D\subset\F_q$ and a function $u:D\to\F_q$ we say that $\bu:\F_q\to\F_q$ is an \emph{extension} of $u$
%if the restriction of $\bu$ to domain $D$, denoted $\bu|_D$, equals $u$. We shall typically denote an extension of
%$u\in \F_q^D$ by $\bar{u}$.

We now come to the statement of our improved-soundness distance preservation result. We describe it first for the special case of RS codes. A weighted variant of the theorem is shown in \cref{sec:deep weighted version} because it is used later in the \DFRI protocol (\cref{sec:deep fri}). We end with \cref{sec:deep lemma for general codes} in which we present a general version of the folowing result, that applies to all linear codes.

\subsection{DEEP Theorem for RS codes}
\label{sec: deep lemma for rs codes}

The vectors $u^*, u$ discussed in the previous section are now viewed as functions $u^*, u : D \to \F_q$ and we are interested in the distance of a random linear combination $u_x=u^*+x\cdot u$ from the code $V=\RS[\F_q, D, \rho]$, where $x\in \F_q$ is sampled uniformly. \cref{lem:one and half Johnson} established that if $\max(\Delta(u^*, V), \Delta(u,V)) = \delta_{\max}$, then with high probability (over $x$), the function $u_x$ will have distance at least $\approx \min(\delta_{\max}, 1-\rho^{1/3})$ from $V$.

\cref{lem:one-and-a-half-positive} roughly gets used in the following way in the FRI protocol. There are two functions $u^*, u : D \to \F_q$ and there is a prover who claims that both are evaluations of low degree polynomials. In order to verify this, the verifier uniformly samples $x\in \F_q$ and considers the function $u_x=u^*+x\cdot u$. \cref{lem:one-and-a-half-positive} shows that if any of $u^*, u$ is far from being evaluations of a low degree polynomial, then so is $u^*+x\cdot u$. This then gets exploited in the FRI protocol using FFT type ideas.

We now precede the random process of sampling $x\in \F_q$ with a step of \emph{domain extension}, explained next. Assume a prover claims that both $u$ and $u^*$ are evaluations of low degree polynomials (say $P(Y)$ and $P^*(Y)$). So these polynomials can be evaluated also outside of $D$. Based on this, a verifier first samples $z \in \F_q$ uniformly and asks the prover to reply with two field elements $a^*, a \in\F_q$ which are supposedly equal to $P^*(z), P(z)$, respectively. After receiving these answers, the verifier proceeds as before, sampling uniformly $x \in \F_q$. Then, setting $b=a^*+x\cdot a$, we examine the distance of $u_x$ from the sub-code $V_{z,b}\subset V$ comprised of all members of $V$ whose interpolating polynomial evaluates to $b$ on input $z$. The code $V_{z,b}$ is the additive coset (shifted by $b$) of a low-degree ideal, the ideal generated by $(X-z)$ (cf. \cref{lem:quotient}).

Using the Johnson Bound (\cref{thm:johnson}) we prove that with high probability $u_x$ is at least $\approx\min(\delta_{\max},1-\rho^{1/2})$ far from $V_{z,b}$. Assuming RS codes have a larger list-decoding radius (\cref{conj:L}), we show that with high probability $u_x$ is $\approx \delta_{\max}$-far from $V_{z,b}$ for nearly all values of $\delta_{\max}$. Later, in \cref{sec:deep fri}, we shall use the improved distance preservation to construct the DEEP-FRI protocol for testing proximity to the RS code with improved soundness.

The statement we give below is given more generally in terms of the list size bound $\calL(\F_q, D, d=\rho|D|, \slack)$;
we instantiate it later with the Johnson bound and with \cref{conj:L}.
It is useful to keep in mind that this will be used in a setting where $q$ is much larger than $|D|$ (and hence $d$), and where $L^*_\slack$ is small.

\begin{theorem}[DEEP method for RS codes]\label{lem:main}
	Let $\rho > 0$ and let $V=\RS[\F_q,D,\rho]$.
	For $z,b \in \F_q$, we let
		$$V_{z,b} = \left\{ Q(Y)|_D \in V \mid Q(z) = b \right\}.$$
	For $\slack>0$ 	let
	$L^*_\slack = \calL(\F_q, D, d=\rho|D|, \slack)$.

	Let $u,u^* \in \F_q^D$.
	For each $z\in \F_q$, let $B_z(X) \in \F_q[X]$ be an arbitrary linear function.
%\lior{I added a requirement that $\epsilon<1/3$ to be able to deduce $2/\epsilon^2 \geq 6/\epsilon$ is this OK?
%	Another option is to add $12/(\varepsilon q)$ the $max$}
	Suppose that for some $1/3> \epsilon>0$ the following holds,
	\begin{equation}\label{eqDEEP}
	\Pr_{x, z \in \F_q} [ \Delta( u^* + x u, V_{z, B_z(x)}) < \slack ] \geq
	 \max\left(2L^*_\slack \left( \frac{d}{q} + \epsilon\right)^{1/3} , \frac{4}{\epsilon^2 q} \right),
	\end{equation}
	Then there exist $v, v^*\in V$ and $C \subset D$ such that:
\begin{itemize}
\item $|C| \geq (1-\slack-\epsilon) |D|$,
\item $u|_{C}=v|_{C}$,
\item $u^*|_{C}=v^*|_{C}$.
\end{itemize}
Consequently, we have $\Delta(u, V), \Delta(u^*, V) \leq \slack + \epsilon$.
\end{theorem}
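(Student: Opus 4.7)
I plan to adapt the cubing argument of \cref{lem:one-and-a-half-positive} to the DEEP setting, with the list-decoding bound $L^*_\slack$ providing the crucial new handle. Let $A := \{(x,z) \in \F_q^2 : \Delta_D(u^* + xu, V_{z,B_z(x)}) < \slack\}$ denote the good set, and for each $(x,z) \in A$ fix a witness $Q_{x,z} \in V$ satisfying $Q_{x,z}(z) = B_z(x)$, with agreement set $S_{x,z} := \{y \in D : Q_{x,z}(y) = u^*(y) + xu(y)\} \subseteq D$ of size at least $(1-\slack)|D|$.

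First, following the template of \cref{lem:one-and-a-half-positive}, I would apply Jensen's inequality to three independent $x_1, x_2, x_3$ at a common $z$:
$$ \E_{z, x_1, x_2, x_3}\bigl[|S_{x_1,z}\cap S_{x_2,z}\cap S_{x_3,z}|/|D|\bigr] \;\geq\; \bigl(\Pr_{x,y,z}[y \in S_{x,z}]\bigr)^3, $$
and combine with the hypothesis to lower-bound the expected intersection size; the cube root appearing in the threshold $(d/q+\epsilon)^{1/3}$ arises from this cubing. Once the intersection exceeds $d-1$ points of $D$, the polynomial identity $Q_{x_3,z} - Q_{x_1,z} = \mu (Q_{x_2,z} - Q_{x_1,z})$ for the $\mu$ determined by $x_1, x_2, x_3$ holds on more than $\deg Q_{x_i,z}$ points, hence as polynomials, so $(x_i, Q_{x_i,z})_{i=1,2,3}$ are collinear in $\F_q \times V$.

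Second, the list-decoding bound contracts the residual dependence on $z$. For each fixed $x$, $Q_{x,z} \in \Ls(u^* + xu, V, \slack)$, a set of at most $L^*_\slack$ codewords, so the map $z \mapsto Q_{x,z}$ takes at most $L^*_\slack$ distinct values. Pigeonholing over this list yields, for each $x$ with sufficiently many good $z$'s, a single representative polynomial $Q_x \in V$; the cost of this pigeonhole is the factor $L^*_\slack$ appearing in the probability threshold. Combining the pigeonhole with the cubing step extracts distinct $x_0, \beta_0, \gamma \in \F_q$ with $(x_i, Q_{x_i})$ collinear, defining a line $\{v^* + xv : x \in \F_q\}$ of codewords in $V$. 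As in the closing of \cref{lem:one-and-a-half-positive}, the set $C := \{y \in D : u^*(y) = v^*(y)\text{ and } u(y) = v(y)\}$ then satisfies $|C|/|D| \geq 1 - \slack - \epsilon$: for each $y \in D \setminus C$ the polynomial $(u^*(y) - v^*(y)) + x(u(y) - v(y))$ has at most one root in $x$, and averaging these disagreement contributions against the good $x$'s on the extracted line gives the required lower bound on $|C|$.

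The main obstacle I anticipate is the careful interleaving of the pigeonhole over the list $\Ls(u^* + xu, V, \slack)$ with the cubing/Jensen averages over $z, x_1, x_2, x_3$: one must ensure that the collinear triple produced by the cubing corresponds to a single common line $v^* + xv$ in $V$ rather than three different lines extracted at three different values of $z$. This careful bookkeeping is what forces the exact factor $2 L^*_\slack$ in the threshold and is the quantitative source of the DEEP improvement over \cref{lem:one-and-a-half-positive}.
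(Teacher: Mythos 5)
There is a genuine gap in the first step, and it is not a bookkeeping issue but a structural one. You apply the Jensen/cubing argument to the agreement sets $S_{x,z}\subseteq D$ (agreement of $Q_{x,z}$ with $u^*+xu$ \emph{on the evaluation domain} $D$, at a common $z$). That is exactly the cubing of \cref{lem:one-and-a-half-positive}, and the threshold it produces is governed by $d/|D|=\rho$: to force the polynomial identity you need the triple intersection to exceed $d$ points \emph{of $D$}, which requires roughly $(1-\slack)^3>\rho+\epsilon$, i.e.\ $\slack<1-\rho^{1/3}$. In the regime the theorem is actually about (e.g.\ $\slack$ up to the Johnson radius $1-\sqrt\rho$, where $(1-\slack)^3=\rho^{3/2}<\rho$), your expected intersection is too small and the step fails. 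The quantity $(d/q+\epsilon)^{1/3}$ in the hypothesis cannot arise from cubing over $D$; the denominator is $q$, not $|D|$.

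The paper's proof cubes a different family of sets, and this is precisely where the domain extension enters. After a Markov step isolates a set $A\subseteq\F_q$ of good $x$'s and the pigeonhole over $\Ls(u^*+xu,V,\slack)$ selects a single $P_x$ for each $x\in A$, one defines $S_x=\condset{z\in\F_q}{P_x(z)=B_z(x)}$ --- a subset of the \emph{extended domain} $\F_q$ with $|S_x|/q\geq \eta/(2L^*_\slack)$ --- and applies Jensen to $\E_{x,\beta,\gamma}[|S_x\cap S_\beta\cap S_\gamma|/q]\geq(\eta/(2L^*_\slack))^3>d/q+\epsilon$. Collinearity of $(x_0,P_{x_0}),(\beta_0,P_{\beta_0}),(\gamma,P_\gamma)$ then follows not from agreement with $u^*+xu$ on $D$ but from the linearity of $B_z(X)$ in $X$: for each of the $>d$ common $z$'s the values $(x_i,B_z(x_i))$ are collinear, and $>d$ evaluation points pin down a degree-$<d$ polynomial. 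So the "main obstacle" you flag at the end --- making the collinear triple live on one common line rather than three lines at three different $z$'s --- is resolved by reordering the argument: pigeonhole first (one $P_x$ per $x$, independent of $z$), then cube over $z\in\F_q$. Your second paragraph has the right ingredients (the pigeonhole costing the factor $L^*_\slack$, and the closing averaging over $C$ is correct as written), but as proposed the cubing and the pigeonhole do not compose, and the threshold in \cref{eqDEEP} is not reproduced.
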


\eli{Would be nice to instantiate with a few examples, as we do for the general case}

\begin{proof}
	To simplify notation set $\eta = \max\left( 2L^*_\slack \left( \frac{d}{q} + \epsilon\right)^{1/3}, \frac{4}{\epsilon^2 q} \right)$,
and let $u_x = u^* + x u$.

Let $\event[x,z]$ denote the event ``$\exists P(Y) \in \Ls(u_x,V,\slack), P(z)=B_z(x)$''.

The assumption of Equation \eqref{eqDEEP} now reads as
	$$\Pr_{x,z\in\F_q}[\event[x,z]]\geq \eta.$$

	Thus we get,
	\begin{align}
	\label{zxmarkov}
	\Pr_{x} [ \Pr_{z} [\event[x,z]]\geq \eta/2] \geq \eta/2
	\end{align}

	Let $$A = \condset{x\in \F_q}{\Pr_{z} [\event[x,z]]\geq \eta/2]} $$
	and notice $|A|\geq \eta q/2$.

    For $x \in \F_q$, pick $P_x \in V$ to be a member $P\in \Ls(u_x,V,\slack)$ that maximizes
    $\Pr_{z \in \F_q} [ P(z) = B_z(x)]$.
% \lior{should it be $\bv_x$ instead of $\bv$?}\eli{it's fine this way, it means that from all possible $v$, we pick the $v$ that maximizes the probability and call it $v_x$, tweaked wording to help clarify more}\lior{Ok, but note the fact that $\bv$ is still used in the next line.}, breaking ties arbitrarily and recall $\bv_x$ is its canonical extension to $\F_q$, a low degree polynomial evaluated over $\F_q$.
	Let $S_x = \condset{z \in \F_q}{P_x(z) = B_z(x)}$ and set
	$\mu_x = |S_x|/q$.
	By definition, $|\Ls(u_x,V,\slack)| \leq L^*_\slack$, and so by the pigeonhole principle, for each $x \in A$ we have
	$\mu_x\geq \frac{\eta}{2L^*_\slack}$.

	For $x, \beta, \gamma$ picked uniformly from $A$, and $z$ picked uniformly from $\F_q$, we have:
	\begin{align*}
	\E_{x, \beta,\gamma} [ |S_{x} \cap S_\beta \cap S_\gamma|/q ] &= \E_{z, x, \beta, \gamma} [ 1_{z \in S_x \cap S_\beta \cap S_{\gamma}} ]\\
	&= \E_{z} [ \E_{x} [ 1_{z \in S_{x}}]^3 ]\\
	&\geq \E_{z, x} [ 1_{z \in S_{x}} ]^3 \\
	&\geq \left(\frac{\eta}{2L^*_\slack}\right)^3\\
	&> \frac{d}{q} + \epsilon.
	\end{align*}

	The second equality above follows from the independence of
	$x, \beta, \gamma$. The first inequality is an application of Jensen's inequality and the last inequality is by assumption on $\eta$.

	Thus $$\Pr_{x, \beta, \gamma} [ |S_{x} \cap S_\beta \cap S_\gamma| > d ] \geq \epsilon .$$

	Note that $\Pr_{x,\beta,\gamma}[x,\beta,\gamma \text{ are not all distinct}] < 3/|A|$.
	Since $|A| \geq \eta q/2 \geq 2/\epsilon^2 \geq 6/\epsilon$ we have $3/|A| \leq \epsilon/2$.
	Thus $\Pr_{x,\beta,\gamma}[x,\beta,\gamma \text{ are all distinct and }|S_x \cap S_\beta \cap S_\gamma| > d] \geq \epsilon/2$.
	%Since $|A| \geq \eta q/2 \geq \frac{2}{\epsilon}$ (last inequality is by definition of $\eta$), we see that $x, \beta, \gamma$ are all distinct with probability at least $\epsilon/2$.
	%Thus with probability at least $\epsilon/2$ over the choice of $x, \beta, \gamma$, we have that $x, \beta, \gamma$ are all distinct and
	%$|S_x \cap S_\beta \cap S_\gamma| > d$.

	This means that there are distinct $x_0, \beta_0$
	such that
	$$ \Pr_{\gamma} [  |S_{x_0} \cap S_{\beta_0} \cap S_\gamma| > d] \geq \epsilon/2.$$

	Consider some $\gamma$ where this happens. Let $S = S_{x_0} \cap S_{\beta_0} \cap S_{\gamma}$.
	By construction we know that for all $z \in \F_q$,
	$$(x_0, B_z(x_0)), (\beta_0, B_z(\beta_0)), (\gamma, B_z(\gamma))$$
	are collinear. So, in particular, for $z \in S$ this holds.

	By definition of $S$, we get that for each $z \in S$,
	$$(x_0, P_{x_0}(z)), (\beta_0, P_{\beta_0}(z)), (\gamma, P_{\gamma}(z)) \in \F_q \times \F_q$$
	are collinear.
	Since $|S| > d$, we have that $P_\gamma$ is uniquely determined by $P_\gamma|_S$ by a linear map. This allows us to conclude that
	$$(x_0, P_{x_0}), (\beta_0, P_{\beta_0}), (\gamma, P_{\gamma}) \in \F_q \times \F_q[Y]$$
	are collinear in the $\F_q$-vector space $\F_q \times \F_q[Y]$.

	Thus, an $\epsilon/2$-fraction of the $\gamma \in A$ have the ``good'' property that
	$(\gamma, P_{\gamma})$ is on the line passing through
	$(x_0, P_{x_0})$ and $(\beta_0, P_{\beta_0})$.
    Write this line as $P^* + x P$ and notice that for all ``good'' $\gamma$ we have $P_\gamma=P^*+\gamma P$.
	Let $A'\subseteq A$ denote the set of good elements for this line, recording that $|A'|\geq |A|\cdot \epsilon/2\geq 1/\epsilon$.
	By definition of $\Ls(u_x,V,\slack)$ and
	the assumption $P_x\in \Ls(u_x,V,\slack)$,
	we have that $\Delta(u_x, P_x) < \slack$ for $x\in A'$.

	Consider the set $C\subset D$ defined by
	$$C = \condset{y\in D}{u^*(y)=P^*(y) \mbox{ AND } u(y)=P(y)}.$$

	For each $y \in D\setminus C$ there exists at most a single value of $x\in \F_q$ satisfying $u_x(y)=P_x(y)$ because
	$$u_x(y)-P_x(y)=(u^*(y)-P^*(y)) + x\cdot(u(y)-P(y))$$
	has at most one value $x$ on which it vanishes. This implies
	$$\slack\geq \E_{x\in A'}[\Delta_D(u_x,v_x)]\geq
	\frac{|D\setminus C|}{|D|}\cdot \left(1-\frac{1}{|A'|}\right)\geq \left(1-\frac{|C|}{|D|}\right)\cdot(1-\epsilon)\geq 1-\frac{|C|}{|D|}-\epsilon.$$
	Rearranging, we get $\frac{|C|}{|D|}\geq 1-(\slack+\epsilon)$. Taking $v = P$ and $v^* = P^*$ completes the proof.
\end{proof}

\begin{remark} One could extend the domain even further, and sample $z$ from
an extension field $\F_{q^a}$. This gives even better soundness; the expression
$2 L^*_\slack \cdot \left(\frac{d}{q} + \epsilon \right)^{1/3}$ by
$2 L^*_\slack \cdot \left(\frac{d}{q^a} + \epsilon \right)^{1/3}$.
This can give interesting results even if $L^*_\slack = q^{O(1)}$ by taking $a = O(1)$.
\end{remark}

\subsection{Weighted version}
\label{sec:deep weighted version}

For application to Reed-Solomon Proximity Testing, it is more convenient to have a weighted version of the previous result.
We briefly introduce some notation for dealing with weights, and then state the new version.

Let $u, v \in \F_q^D$.
Let $\eta \in [0,1]^D$ be a vector of weights.
We define the $\eta$-agreement between $u$ and $v$ by:
$$\agree_\eta(u, v) = \frac{1}{|D|} \sum_{i \in D \mid u_i = v_i } \eta(i).$$

For a subspace $V \subseteq \F_q^n$, we define
$$\agree_\eta(u, V) = \max_{v \in V} \agree_\eta(u, v).$$

\begin{theorem}
\label{lem:weighted-OOD-main}
	Let $\rho > 0$ and let $V=\RS[\F_q,D, d = \rho \cdot |D|]$.
	For $z,b \in \F_q$, we let
		$$V_{z,b} = \{ Q(Y) \in V \mid Q(z) = b \}.$$
For $\alpha < 1$, let $L^* = \calL(\F_q, D, d=\rho|D|, 1-\alpha)$  be the list-size for list-decoding $V$
from $(1 - \alpha)$-fraction errors (without weights).

	Let $u,u^* \in \F_q^D$.
	For each $z\in \F_q$, let $B_z(X) \in \F_q[X]$ be an arbitrary linear function.
%\lior{I added a requirement that $\epsilon<1/3$ to be able to deduce $2/\epsilon^2 \geq 6/\epsilon$ is this OK?
%	Another option is to add $12/(\varepsilon q)$ the $max$}
	Suppose that
	\begin{equation}\label{wtd-eq1}
	\Pr_{x, z \in \F_q} [ \agree_{\eta}( u^* + x u, V_{z, B_z(x)}) > \alpha] \geq
	 \max\left(2L^* \left( \frac{d}{q} + \epsilon\right)^{1/3} , \frac{4}{\epsilon^2 q} \right),
	\end{equation}
	Then there exist $v, v^*\in V$ and $C \subset D$ such that:
\begin{itemize}
\item $\sum_{y \in C} \eta(y) > (\alpha - \epsilon) |D|$,
\item $u|_{C}=v|_{C}$,
\item $u^*|_{C}=v^*|_{C}$.
\end{itemize}
Consequently, we have $\agree_\eta(u, V), \agree_\eta(u^*, V) \geq  \alpha - \epsilon$.
\end{theorem}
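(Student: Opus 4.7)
The strategy is to follow the proof of \cref{lem:main} almost verbatim, modifying only two places: (i) the definition of the event $\event[x,z]$ and of the codeword $P_x$ extracted from the list, where $\agree_\eta(\cdot,\cdot) > \alpha$ replaces $\Delta(\cdot,\cdot) < \slack$, and (ii) the concluding averaging step, where the unweighted counting of coordinates in $D \setminus C$ is replaced by a weighted linearity-of-expectation computation. The key bridge between the weighted and unweighted worlds is the pointwise inequality $\eta(y) \le 1$: if $\agree_\eta(u_x, v) > \alpha$ for some $v \in V$, then the fraction of coordinates on which $u_x$ and $v$ agree is also $> \alpha$, so $v$ lies in $\Ls(u_x, V, 1-\alpha)$, which has size at most $L^*$.

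With this observation in hand, I would define $\event[x,z]$ as ``there exists $P \in \Ls(u_x, V, 1-\alpha)$ such that $P(z) = B_z(x)$ and $\agree_\eta(u_x, P) > \alpha$'', and note that assumption~\eqref{wtd-eq1} gives $\Pr_{x,z}[\event[x,z]] \ge \eta_0$, where $\eta_0$ denotes the right-hand side of~\eqref{wtd-eq1} (renamed to avoid collision with the weight vector $\eta$). Markov's inequality as in \eqref{zxmarkov} produces a set $A \subset \F_q$ with $|A| \ge \eta_0 q/2$ on which $\Pr_z[\event[x,z]] \ge \eta_0/2$. For $x \in A$, choose $P_x \in \Ls(u_x, V, 1-\alpha)$ with $\agree_\eta(u_x, P_x) > \alpha$ maximizing $\Pr_z[P_x(z) = B_z(x)]$; by pigeonhole over the list of size $\le L^*$, the set $S_x = \{z : P_x(z) = B_z(x)\}$ has density $\mu_x \ge \eta_0/(2L^*)$. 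From here the three-way intersection / Jensen / collinearity argument is unchanged: one obtains distinct $x_0,\beta_0 \in A$ and a set $A' \subseteq A$ with $|A'| \ge 1/\epsilon$ on which $P_\gamma = P^* + \gamma P$ for fixed $P^*, P \in \RS[\F_q, D, d]$, recovering $v^* := P^*|_D$, $v := P|_D$.

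The new ingredient is the final weighted accounting, which replaces the last display chain of \cref{lem:main}'s proof. Define $C = \{y \in D : u^*(y) = P^*(y) \text{ and } u(y) = P(y)\}$ as before, and let $T_x = \{y \in D : u_x(y) = P_x(y)\}$, so $C \subseteq T_x$ for every $x \in A'$ and, for $y \in D \setminus C$, membership $y \in T_x$ pins down $x$ uniquely as the root of the nonzero linear function $(u^*(y) - P^*(y)) + X(u(y) - P(y))$. Since $\agree_\eta(u_x, P_x) > \alpha$ for $x \in A'$, swapping the order of summation yields
\begin{align*}
\alpha |D| &< \E_{x \in A'}\Bigl[\sum_{y \in T_x} \eta(y)\Bigr] = \sum_{y \in D} \eta(y)\,\Pr_{x \in A'}[y \in T_x]\\
&\le \sum_{y \in C} \eta(y) + \sum_{y \in D \setminus C} \eta(y) \cdot \tfrac{1}{|A'|} \le \sum_{y \in C} \eta(y) + \epsilon |D|,
\end{align*}
so $\sum_{y \in C} \eta(y) > (\alpha - \epsilon)|D|$, which is the first bullet. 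The second and third bullets follow from the definition of $C$, and the final consequence $\agree_\eta(u, V), \agree_\eta(u^*, V) \ge \alpha - \epsilon$ is immediate by taking $v, v^*$ as the witnesses.

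The main conceptual hurdle is step~(i): one must verify that the list-size parameter $L^* = \calL(\F_q, D, d, 1-\alpha)$ controls all codewords relevant to the weighted event, despite the fact that weighted and unweighted distances are genuinely different notions. This is precisely where the pointwise bound $\eta \le 1$ is essential --- any slack there would force us to work with a smaller effective agreement threshold and a potentially much larger list. Everything else is a mechanical transliteration of the unweighted argument.
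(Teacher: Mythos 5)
Your proposal is correct and follows the paper's own argument essentially verbatim: the same bridge $\eta(y)\le 1$ is used to place every codeword with weighted agreement $>\alpha$ inside the unweighted list $\Ls(u_x,V,1-\alpha)$ of size $\le L^*$, the intersection/Jensen/collinearity core is reused unchanged, and your final swap-of-summation computation bounding $\sum_{y\in C}\eta(y)$ is exactly the paper's concluding display (phrased as an expectation over $x\in A'$ rather than an average). No gaps.
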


The proof is nearly identical to the proof of \cref{lem:main} so we only highlight the changes.
First, we observe that if $\eta_1 : D \to [0,1]$ is the the constant function with value $1$,
then $\agree_\eta(u, v) \leq \agree_{\eta_1}(u,v) = 1 - \Delta(u,v)$.
Thus the set $$\{Q(Y) \in \F_q[Y] \mid \deg(Q) \leq d, \agree_\eta(u^* + x u, Q) > \alpha \}$$
is contained in $$\{ Q(Y) \in \F_q[Y] \mid \deg(Q) \leq d, \Delta(u^* + xu, Q) < 1-\alpha \}.$$
The size of this latter set is bounded by $L^*$, and thus the size of the former set is too.
The proof then proceeds as before, until the very end, where we have a set $A' \subseteq \F_q$,
with $|A'| \geq \frac{2}{\epsilon}$, and polynomials $P, P^* \in V$
such that for each $x \in A'$, $\agree_{\eta}(u^* + xu, P^* + x P) > \alpha$.
Then we take $C = \{ y \in C \mid u^*(y) = P^*(y), u(y) = P(y) \}$, and our goal is to show
that $\sum_{y\in C} \eta(y) > (\alpha - \epsilon) |D|$.
To this end, consider:
\begin{align*}
\alpha &< \frac{1}{|A'|}\sum_{x \in A'} \agree_{\eta}(u^* + xu, P^* + x P)\\
&= \frac{1}{|D| \cdot |A'|}\sum_{x \in A'} \sum_{y\in D} (\eta(y) \cdot 1_{u^*(y) + xu(y) = P^*(y) + xP(y)} )\\
&= \frac{1}{|D|} \sum_{y \in D} \eta(y) \left( \frac{1}{|A'|} \sum_{x \in A'} 1_{u^*(y) + xu(y) = P^*(y) + xP(y)} \right)\\
&\leq \frac{1}{|D|} \sum_{y \in C} \eta(y) + \frac{1}{|D|}\sum_{y \in D \setminus C} \eta(y) \cdot \frac{1}{|A'|}\\
&\leq \frac{1}{|D|}\sum_{y \in C} \eta(y) + \epsilon/2.
\end{align*}
This implies that $\sum_{y \in C} \eta(y) > (\alpha - \epsilon) |D|$, and the rest of the proof is the same as before.

\subsection{DEEP Lemma for general linear codes}
\label{sec:deep lemma for general codes}

\cref{lem:main} can be generalized to apply to arbitrary linear codes, and this is the focus of this section. We explain the basic principles
for an $[n,k,d]_q$-linear code $V$ with generating matrix $G\in \F_q^{ k \times n}$, viewing codewords as evaluations of linear forms on the columns of $G$.

Let $D\subset\F_q^k$ be the set of columns of $G$.
A linear form $\ell \in F_q^k$ can be ``evaluated" at any any element $x$ of $D$.
Similarly, if we fix a set of points $S \subseteq \F_q^k$ (thinking $|S| \gg |D|$),
we may evaluate the linear form $\ell$ at any point of $S$ -- this corresponds to evaluation
outside the original domain $D$.

If we are given a function $u : D \to \F_q$ which is supposed to be the evaluations
of a linear form $\ell$ on $D$, we can ask about what the evaluation of this linear form
at a point $z \in S$ is. This is the viewpoint from which the DEEP lemma generalizes to general codes.

We start with two functions $u, u^* : D \to \F_q$ (which are supposed to correspond to linear forms, say $\ell \in \F_q^k$ and $\ell^* \in \F_q^k$.
We have a verifier who samples $z \in S$ and asking for $a = \ell(z)$ and $a^* = \ell^*(z)$. Given these answers, the verifier now samples $x\in \F_q$ and computes
$b=a^*+xa$ which is  supposedly equal to $\ell^*(z)+\ell(z)$ (if $u^*$ and $u$ are indeed codewords of $V$).
The result below says that if $S$ is the set of columns of an error correcting code with good distance, and $V$ has small list size for list-decoding up to radius $\slack$, then with high probability, the function $u_x=u^* + x u$
has distance at least $\approx \min\{\Delta(u^*, V), \slack\}$ from the sub-code of $V$ corresponding to the linear forms that evaluate to $b$ on $z$.

\newcommand{\inp}[1]{\langle#1\rangle}

\eli{remove robust}
\swastik{Actually now I think robust is good }
\begin{definition}
	[Robust] \label{def:robust sets}
	A set $S\subseteq \F^k$ is called $\sigma$-robust if every subset of $S$ of size $\sigma$ contains a basis for $\F^k$.
\end{definition}

The following claim is well-known in coding theory (cf. \cite[Problem 2.8]{RothBook}).

\begin{claim}\label{clm:code distance and sigma}
	Fix a full-rank matrix $G\in \F_q^{k\times N}, N\geq k$, and let
	$C=\condset{x \cdot M}{x\in\F_q^k}$ be the linear code generated by it. Then the set of columnss of $G$ is $\sigma$-robust  if and only if  the minimum distance of $C$ is at least $N -\sigma + 1$.
\end{claim}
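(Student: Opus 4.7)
The plan is to derive both directions from the standard dictionary between codeword weights and hyperplane containment. Let $g_1, \ldots, g_N \in \F_q^k$ denote the columns of $G$. For any nonzero $x \in \F_q^k$, the codeword $x \cdot G \in C$ has $i$-th coordinate $\langle x, g_i\rangle$, so its Hamming weight equals $N - |\{i : \langle x, g_i\rangle = 0\}|$. Therefore
\[
d(C) \;=\; N \;-\; \max_{x \neq 0}\, \bigl|\{\,i : g_i \in x^{\perp}\,\}\bigr|,
\]
so $N - d(C)$ is exactly the maximum number of columns of $G$ that lie in a single hyperplane of $\F_q^k$.

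Next I would translate $\sigma$-robustness into the same language. A subset $T$ of columns contains a basis of $\F_q^k$ if and only if $T$ is not contained in any proper subspace, which, since every proper subspace of $\F_q^k$ is contained in some hyperplane, is equivalent to saying $T$ is not contained in any hyperplane. Hence the columns form a $\sigma$-robust set if and only if no hyperplane of $\F_q^k$ contains $\sigma$ or more of them, i.e., every hyperplane contains at most $\sigma - 1$ columns.

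Combining these two observations, $\sigma$-robustness of the column set is equivalent to $N - d(C) \leq \sigma - 1$, which rearranges to $d(C) \geq N - \sigma + 1$, proving both directions of the claim. The only step requiring any care is the equivalence ``contained in a proper subspace $\Leftrightarrow$ contained in a hyperplane'' used in the robustness translation; once this is noted, the rest is bookkeeping, and I do not anticipate any real obstacle since the underlying identity $d(C) = N - \max_{H} |\{i : g_i \in H\}|$ (maximum over hyperplanes $H$) is the standard fact behind the Singleton-type bounds.
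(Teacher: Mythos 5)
Your proof is correct. The paper itself offers no argument for this claim --- it is stated as well-known with a citation to Roth's book --- and the argument you give (weight of $xG$ equals $N$ minus the number of columns lying in the hyperplane $x^\perp$, full-rankness to identify nonzero codewords with nonzero $x$, and the translation of ``contains a basis'' into ``not contained in a hyperplane'') is exactly the standard one. The only caveat worth recording is that your bookkeeping identifies ``number of columns of $G$ in a hyperplane'' with ``number of indices $i$ such that $g_i\in H$,'' which requires the columns to be pairwise distinct so that the \emph{set} of columns has $N$ elements; this is implicit in the claim's statement and holds in the paper's application, where the columns are the $N$ distinct points of the robust set $S$.
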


%
%For example, the set $S_\RS$ of columns of the generating matrix of the RS code of degree less than $d$ is $d$-robust:
%$$S_\RS=\condset{(1,\alpha,\alpha^2,\ldots,\alpha^{d-1})}{\alpha\in \F}$$
%To take another example, the columns of the generating matrix of an asymptotically good AG code as per \cite{TVZ,Garcia}
%

%An $[n,k,d]_q$-linear code is called \emph{systematic} if the first $k$ entries of a codeword determine it. Up to permutation of entries, all linear codes are systematic.

\begin{lemma}[DEEP method for general linear codes]
	\label{lem:main general codes}
	Let $V$ be an $[n,k,d]_q$-code that is $(\slack,L^*_\slack)$-list decodable for some $\slack>0$, and fix $G\in \F_q^{k\times n}$ to be its generating matrix.
	Let $S\subset\F_q^k$
	be a $\sigma$-robust set of size $N$.
	For $z\in S,b \in \F_q$, let
	$$V_{z,b} = \{ v \in V \mid v=G\cdot \ell_v \text{ AND } \inp{\ell_v,z}=b \}$$
	where $\inp{v,z}=\sum_{i=1}^k v_i,z_i$.

	Let $u,u^* \in \F_q^n$.
	For each $z\in S$, let $B_z(X) \in \F_q[X]$ be an arbitrary linear function.
	%\lior{I added a requirement that $\epsilon<1/3$ to be able to deduce $2/\epsilon^2 \geq 6/\epsilon$ is this OK?
	%	Another option is to add $12/(\varepsilon q)$ the $max$}
	Suppose that for some $\epsilon>0$ the following holds,	\begin{equation}\label{eq1}
	\Pr_{x\in \F_q,z\in S} [ \Delta( u^* + x u, V_{z, B_z(x)}) < \slack ] \geq
	\max\left(2L^*_\slack \left( \frac{\sigma}{N} + \epsilon\right)^{1/3} , \frac{4}{\epsilon^2 q} \right),
	\end{equation}
	Then there exist $v, v^*\in V$ and $C \subset [n]$ such that:
	\begin{itemize}
		\item $|C| \geq (1-\slack-\epsilon) n$,
		\item $u|_{C}=v|_{C}$,
		\item $u^*|_{C}=v^*|_{C}$.
	\end{itemize}
	Consequently, we have $\Delta(u, V), \Delta(u^*, V) \leq \slack + \epsilon$.
\end{lemma}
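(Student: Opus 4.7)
The plan is to mimic the proof of \cref{lem:main}, replacing the polynomial interpolation step (``a polynomial of degree less than $d$ is determined by its values on more than $d$ points'') with the $\sigma$-robustness of $S$ (``a linear form on $\F_q^k$ is determined by its values on any $\sigma$-element subset of $S$, which necessarily contains a basis''). For each codeword $v\in V$ write $v=G\cdot \ell_v$ for the unique $\ell_v\in \F_q^k$, and set $u_x = u^* + xu$.

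First I would define the event $\event[x,z]$ that there exists some $v\in \Ls(u_x,V,\slack)$ with $\inp{\ell_v,z}=B_z(x)$, so that \cref{eq1} becomes $\Pr_{x,z}[\event[x,z]]\geq \eta$, with $\eta = \max\!\left(2L^*_\slack(\sigma/N+\epsilon)^{1/3},\,4/(\epsilon^2 q)\right)$. A Markov step produces a set $A\subseteq\F_q$ of size at least $\eta q/2$ on which $\Pr_z[\event[x,z]]\geq \eta/2$. For each $x\in A$, among the at most $L^*_\slack$ codewords in $\Ls(u_x,V,\slack)$, pigeonhole picks $v_x=G\ell_x$ maximizing the set $S_x=\{z\in S:\inp{\ell_x,z}=B_z(x)\}$, giving $|S_x|/N\geq \eta/(2L^*_\slack)$.

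The same triple-intersection argument as in \cref{lem:main} (expectation equals the cube by independence, Jensen's inequality) yields
\[
\E_{x,\beta,\gamma\in A}[|S_x\cap S_\beta\cap S_\gamma|/N]\geq (\eta/(2L^*_\slack))^3 > \sigma/N+\epsilon,
\]
so after discarding the at most $\epsilon/2$ probability of collisions among $x,\beta,\gamma$, we obtain distinct $x_0,\beta_0\in A$ and an $\epsilon/2$-fraction of $\gamma\in A$ for which $|S_{x_0}\cap S_{\beta_0}\cap S_\gamma|>\sigma$. For each such $z$, linearity of $B_z$ gives that $(x_0,B_z(x_0)),(\beta_0,B_z(\beta_0)),(\gamma,B_z(\gamma))$ are collinear in $\F_q^2$, and by the definition of $S_x$ this reads as the collinearity of $(x_0,\inp{\ell_{x_0},z}),(\beta_0,\inp{\ell_{\beta_0},z}),(\gamma,\inp{\ell_\gamma,z})$.

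The main obstacle is going from this pointwise collinearity on a large subset of $S$ to collinearity of the linear forms themselves in $\F_q\times\F_q^k$; this is exactly where $\sigma$-robustness comes in. Since $S_{x_0}\cap S_{\beta_0}\cap S_\gamma$ has more than $\sigma$ elements, by \cref{def:robust sets} it contains a basis of $\F_q^k$, and collinearity of the evaluations of $\ell_{x_0},\ell_{\beta_0},\ell_\gamma$ on a basis forces collinearity of the forms, hence of the codewords $v_{x_0},v_{\beta_0},v_\gamma$ in $V$. Writing this line as $v^*+xv$, an $\epsilon/2$-fraction of $\gamma\in A$ are ``good'', i.e.\ $v_\gamma=v^*+\gamma v$; let $A'\subseteq A$ be this set, so $|A'|\geq |A|\epsilon/2\geq 1/\epsilon$, and for every $x\in A'$ we have $\Delta(u_x,v^*+xv)<\slack$. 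Finally, setting $C=\{y\in[n]: u^*(y)=v^*(y)\text{ and } u(y)=v(y)\}$, a $y\notin C$ contributes a disagreement for all but at most one $x$, so
\[
\slack \geq \E_{x\in A'}[\Delta(u_x,v^*+xv)] \geq \left(1-\frac{|C|}{n}\right)\left(1-\frac{1}{|A'|}\right) \geq 1-\frac{|C|}{n}-\epsilon,
\]
yielding $|C|\geq (1-\slack-\epsilon)n$ and the claimed bounds $\Delta(u,V),\Delta(u^*,V)\leq \slack+\epsilon$.
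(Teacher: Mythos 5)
Your proof is correct and follows essentially the same route as the paper's own argument in the appendix: the identical Markov/pigeonhole setup, the cubed-expectation Jensen step against the threshold $\sigma/N+\epsilon$, and then $\sigma$-robustness to upgrade pointwise collinearity on the triple intersection to collinearity of the codewords. Your phrasing of that last step directly in terms of the linear forms $\ell_x$ (rather than the paper's extension of $u,u^*,v_x$ to the domain $S$) is a cosmetic simplification, not a different approach.
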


The proof is analogous to the proof in the Reed-Solomon case, and appears in~\cref{sec:DEEPgeneralproof}.

\swastik{I think the instantiations are hard to get excited about ... because this space $V_{z,b}$ has no meaning
for the reader (and also the headline results, DEEP-FRI and DEEP-ALI have no analogue with the instantiations).
In the next paper where we do DEEP-FFRI, the FRS instantiation will be much more motivated and exciting. So I removed the instantiations
for now -- they are moved to the end of the tex file.}

\paragraph{Discussion} For the special case of RS codes, the DEEP method can be used to locally modify the problem and reduce degree. Indeed, the subcode $V_{z,b}$ in the case of RS codes corresponds is comprised of functions $f:D\to \F$ that are evaluations of polynomials of degree $d$ whose interpolating polynomial $P_f$ satisfies $P_f(z)=b$. From such a codeword, one can construct a new codeword $f_{z,b}:D\to\F$ defined by $f_{z,b}(x)=\frac{f(x)-b}{z}$, which is well-defined for all $z\not \in D$. Notice that the transformation from $f$ to $f_{z,b}$ is \emph{1-local}, meaning that each entry of $f_{z,b}$ is constructed by making a single query to $f$. Furthermore, this transformation maps a subset of the code $RS[\F,D,d]$ to the code
$\RS[\F,D,d-1]$, so we may use this transformation in RS IOPPs (as will done in the following section).

In contrast, for a general $k$-dimensional linear code $V$, the subcode $V_{z,b}$, while being an affine subspace of $V$, has less structure. In particular, it is not clear how to locally convert this subcode to a ``nice'' code of dimension $k-1$. An interesting middle ground, left to future work, is the case of algebraic codes like Reed Muller codes and Algebraic Geometry codes which resemble RS codes.

\section{\DFRI}\label{sec:deep fri}

In this section we describe the new fast RS IOPP, called \DFRI.
We start by recalling the \FRI protocol from~\cite{FRI}, describing it nearly verbatim as in~\cite[Section 7]{Ben-SassonKS18_ECCC}.

\subsection{\FRI}\label{sec:FRI description}
%\eli{copied previous description of FRI}

Our starting point is a function $f\zr:L\zr\to\F$ where $\F$ is a finite field, the evaluation domain $L\zr\subset\F$ is a coset of a group\footnote{The group can be additive, in which case $\F$ is a binary field, or multiplicative, in which case it is not.}
contained in $\F$, and $|L\zr|=2^{k\zr}$.
We assume the target rate is $\Rate=2^{-\RateInt}$ for some positive integer $\RateInt$.
The \FRI protocol is a two-phase protocol (the two phases are called \commit and \query) that convinces a verifier that $f\zr$ is close to the Reed-Solomon code $\RS[\F, L\zr,  \Rate]$.

The \commit phase of the \FRI protocol
involves $\rounds=k\zr-\RateInt$ rounds.
Before any communication, the prover and verifier agree
on a sequence of (cosets of) sub-groups $L\ii$,
where $|L\ii| = 2^{k\zr - i}$. Let $\RS\ii$
denote the Reed-Solomon code $\RS[\F, L\ii, \Rate|L\ii|]$.

The main ingredient of the \FRI protocol
is a special algebraic hash function $H_x$,
which takes a seed $x \in \F$, and
given as input a function $f: L\ii \to \F$,
it produces as output a hash whose length is $1/2$ as long as $f$.
More concretely, $H_x[f]$ is a function
$$ H_x[f]: L\iip \to \F$$
with the following properties:
\begin{enumerate}
	\item{\bf locality:} For any $s \in L\iip$, $H_x[f](s)$ can be computed by querying $f$ at just two points in its domain (these two points are $(q\ii)^{-1}(s)$).
	\item{\bf completeness:} If $f \in \RS\ii$, then for all $x \in \F$, we have that $H_x[f] \in \RS\iip$.
	\item{\bf soundness:} If $f$ is far from $\RS\ii$, then
	%
	%$$ \Pr_{x \in \F} [ \Delta(H_x[f], \RS\ii) < \delta - \epsilon] < \epsilon'$$ for certain values $\epsilon, \epsilon'$.
%	In words,
	with high probability over the choice of seed $x$,
	$H_x[f]$ is quite far from $\RS\iip$.
\end{enumerate}
These last two properties roughly show that for random $x$,
$H_x$ preserves distance to Reed-Solomon codes.
For the precise description of $H_x$ see~\cref{sec:alghash} and \cite{Ben-SassonKS18}.

The high-level idea of the \FRI protocol can then be described as follows.
First we are in the \commit phase of the protocol.
The verifier picks a random $x\zr \in \F$ and asks the prover
to write down the hash $H_{x\zr}[f\zr]: L^{(1)} \to \F$.
By Properties 2 and 3 above, our original
problem of estimating the distance of $f\zr$ to $\RS\zr$ reduces
to estimating the distance of $H_{x\zr}[f\zr]$ to $\RS^{(1)}$ (which
is a problem of $1/2$ the size).
This process is then repeated: the verifier picks a random $x^{(1)} \in \F$
and asks the prover to write down $H_{x^{(1)}}[H_{x\zr}[f\zr]]$, and so on.
After $\rounds$ rounds of this, we are reduced to a constant sized problem which
can be solved in a trivial manner. However, the verifier cannot blindly trust
that the functions $f^{(1)}, \ldots $ that were written down by the prover truly
are obtained by repeatedly hashing $f\zr$. This has to be checked, and the verifier
does this in the \query phase of the protocol, using Property 1 above.

We describe the phases of the protocol below.

\medskip
\noindent{\bf \commit Phase:}
\begin{enumerate}
	\item For $i = 0$ to $\rounds-1$:
	\begin{enumerate}
		\item The verifier picks uniformly random $x\ii \in \F$ and sends it to the prover.
		\item The prover writes down a function $f\iip: L\iip \to \F$.
(In the case of an honest prover,
		$f\iip = H_{x\ii}[f\ii]$.)

	\end{enumerate}
	\item The prover writes down a value $C \in \F_q$. (In the case of an honest
prover, $f\fin$ is the constant function with value $=C$).
\end{enumerate}

\medskip
\noindent{\bf \query Phase:} (executed by the Verifier)
\begin{enumerate}
	\item Repeat $\ell$ times:
	\begin{enumerate}
		\item Pick $s\zr \in L\zr$ uniformly at random.
		\item For $i = 0$ to $\rounds-1$:
		\begin{enumerate}
			\item Define $s\iip \in L\iip$ by $s\iip = q\ii(s\ii)$.
			\item Compute $H_{x\ii}[f\ii](s\iip)$ by making 2 queries to $f\ii$.
			%\item If $f\iip(s\iip) \neq H_{x\ii}[f\ii](s\iip)$, then REJECT
			\item If $f\iip(s\iip) \neq H_{x\ii}[f\ii](s\iip)$, then REJECT.
		\end{enumerate}
		\item If $f\fin(s\fin) \neq  C$, then REJECT.
	\end{enumerate}
	\item ACCEPT
\end{enumerate}
\medskip

%%The actual implementation of the \FRI protocol is slightly more involved.
%In round $i$ (for $i = 0, 1, \ldots, \rounds-1$) of the \commit phase of the protocol,
%The verifier sends a uniformly random $x\ii \in \F$ to the prover,
%and the prover writes down a function $f\iip: L\iip \to \F$ (which is
%supposed to be $H_{x\ii}[f\ii]$). Then in the \query phase
%of the protocol, the verifier ensures consistency amongst all
%the $f\ii$ by (1) picking $s\zr \in L\zr$ uniformly at random,
%(2) for each $i$, setting $s\iip = q\ii(s\ii)$, and (3) checking
%that $f\iip(s\iip) = H_{x\ii}[f\ii](s\iip)$ (the latter being computable
%by just 2 queries to $f\ii$. Thus the pairwise consistency tests between
%$f\ii$ and $f\iip$ are done on correlated inputs: this turns out to be
%important for the final soundness analysis.

The previous state of the art regarding the soundness of \FRI is given by the following statement from \cite{Ben-SassonKS18}.
In what follows let $J_\epsilon(x)=1-\sqrt{1-x(1-\epsilon)}$.

\begin{theorem}
	[\FRI soundness (informal)]
	\label{thm:FRI BKS18}
%	The following properties hold when the \FRI protocol
%	is invoked on oracle $f\zr:L\zr\to\F$,
%	with rate $\Rate=2^{-\RateInt}, \RateInt\in \N^+$ such that $\Rate|L\zr|>16$:
%	\begin{enumerate}
%		% \item{\bf Completeness}
%		% \label{itm:completeness}
%		% If $f\zr\in\RS\zr\eqndef\RS[\F,L\zr,\Rate=2^{-\RateInt}]$
%		% and $f\one,\ldots, f\fin$ are computed by the prover specified in the \commit
%		% phase, then the \FRI verifier outputs $\acc$ with probability $1$.
%		\item{\bf Soundness}
%		\label{itm:soundness new}
		Suppose $\delta\zr\eqndef\Delta({f\zr,\RS\zr})>0$. Let $n = |L\zr|$. Then for any $\epsilon>0$ there exists $\epsilon'>0$ so that with probability at least
		\begin{equation}
		\label{eq:soundness commit formula old}
		1-\frac{2 \log n}{\epsilon^3|\F|}
		\end{equation}
		over the randomness of the verifier during the \commit phase,
		and for any (adaptively chosen) prover oracles $f\one,\ldots, f\fin$,
		the \query protocol with repetition parameter $\RepInt$ outputs $\acc$ with probability at most
		\begin{equation}\label{eq:soundness query formula old}
		\left(1-\min\set{\delta\zr, 1-(\rho^{1/4}+\epsilon')}+\epsilon \log n  \right)^\RepInt.
		\end{equation}
%		Consequently, the soundness of \FRI is at least
%		\begin{equation}\label{eq:soundness formula new}
%		\Soundness\left(\delta\zr\right) \eqndef
%		1-\left(\frac{2 \log |L\zr|}{\epsilon^3|\F|}+
%		\left(1-\min\set{\delta\zr, J_\epsilon(J_\epsilon(1-\Rate))} +\epsilon \log|L\zr|\right)^\RepInt
%		\right).
		% 1-\left(
		% % \frac{\RecInt|L\zr|}{|\F|}+
		% \frac{3|L\zr|}{|\F|}+
		%   \left(1-\min\set{\delta\zr, \frac{1-3\Rate-2^\eta/\sqrt{|L\zr|}}{4}}\right)^\RepInt\right).
%		\end{equation}
%	\end{enumerate}

\end{theorem}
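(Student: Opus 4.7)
The plan is to decompose the analysis into a commit-phase argument (bounding the probability of a ``distance collapse'' over the verifier's challenges $x\zr,\ldots,x\finm$) followed by a query-phase argument (bounding rejection probability conditional on the commit phase being good), and then to compose them via union and telescoping. Write $\delta\ii \eqndef \Delta(f\ii,\RS\ii)$ for the prover-sent oracles, and let $g\iip \eqndef H_{x\ii}[f\ii]$ denote the honest hash at round $i$; note the prover is free to choose $f\iip \ne g\iip$.

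First I would analyze the commit phase. The point is that $H_x$ is built so that, for any fixed $f\ii$, the family $\{H_x[f\ii] : x\in\F\}$ forms an affine line in $\F^{L\iip}$, whose origin and slope arise from the ``even'' and ``odd'' interpolants of $f\ii$ under the $2$-to-$1$ map $q\ii:L\ii\to L\iip$. Consequently, the worst-case-to-average-case reduction from~\cite{Ben-SassonKS18} (in its double-Johnson form recalled in~\cref{eq:double johnson}) applies to this line and gives, for any $\epsilon'>0$,
\begin{equation*}
\Pr_{x\ii}\!\bigl[\,\Delta(g\iip,\RS\iip)<\min\bigl(\delta\ii,\,1-\rho^{1/4}-\epsilon'\bigr)\,\bigr]\;\leq\;\frac{2}{(\epsilon')^3\,|\F|}.
\end{equation*}
Since the bad event at round $i$ depends only on the prover-fixed $f\ii$ and the fresh $x\ii$, a union bound over the $\rounds\leq\log n$ rounds yields the commit-phase failure probability $2\log n/((\epsilon')^3|\F|)$ of~\cref{eq:soundness commit formula old}, uniformly over adaptive prover strategies.

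Next I would analyze a single iteration of the \query phase, conditioned on the good commit event. Let $\tau\iip \eqndef \Delta(f\iip,g\iip)$ be the inconsistency at round $i$; a single query at round $i$ independently rejects with probability exactly $\tau\iip$ (over the fresh $s\zr$ and the induced $s\iip$). The triangle inequality forces $\delta\iip \geq \Delta(g\iip,\RS\iip) - \tau\iip$, so in each round either $\tau\iip$ is large (that round's query likely rejects) or $\delta\iip$ remains close to $\min(\delta\ii,1-\rho^{1/4})-\epsilon'$, propagating the rejection budget to the next round. Inductively over $i=0,\ldots,\rounds-1$ (and using the final $f\fin\neq C$ check as the base case), the probability that a single iteration fails to reject anywhere is at most
\begin{equation*}
1-\min\!\bigl(\delta\zr,\,1-\rho^{1/4}-\epsilon'\bigr)+\rounds\cdot\epsilon',
\end{equation*}
which, after setting $\epsilon\approx\epsilon'$ and absorbing $\rounds\leq\log n$ into an $\epsilon\log n$ slack, matches~\cref{eq:soundness query formula old}; the $\ell$-th power comes from independence of the $\ell$ repetitions.

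The main obstacle I expect is the rigorous telescoping of the ``detect-or-propagate'' dichotomy across all $\rounds$ rounds under an adaptive prover, because each $f\iip$ may depend on all prior verifier challenges and on the functions committed earlier. The adaptivity in the commit phase is handled cleanly by the observation that the round-$i$ bad event is determined once $f\ii$ (chosen before $x\ii$) is fixed, so the union bound is genuinely prover-strategy-uniform; the delicate accounting is in the query phase, where one must ensure the $\epsilon'$ slack at each round accumulates only linearly (not exponentially) in $\rounds$, which is exactly what the triangle-inequality telescoping provides.
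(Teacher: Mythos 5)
First, note that the paper does not actually prove \cref{thm:FRI BKS18}: it is quoted as the prior state of the art from~\cite{Ben-SassonKS18}, so the only in-paper template to compare against is the fully analogous soundness analysis of \DFRI (\cref{thm:DFRI soundness} together with the $\eta\ii,\theta\ii,\alpha\ii,\beta\ii$ bookkeeping and \cref{alphabeta1,alphabeta2}). Measured against that template, your commit-phase argument is essentially right: per round, $\set{H_x[f\ii]}_{x\in\F}$ is an affine line, the line's worst-case distance is related back to $\Delta(f\ii,\RS\ii)$ via the bivariate decomposition $f\ii(X)=\hat{Q}(X,q\ii(X))$, the double-Johnson worst-to-average bound gives a per-round failure probability $2/(\epsilon^3|\F|)$, and a union bound over $\rounds\le\log n$ rounds is legitimately uniform over adaptive provers because the round-$i$ bad event is determined by $f\ii$ before $x\ii$ is drawn. (You should make the bivariate step explicit; ``origin and slope arise from the even and odd interpolants'' is where that argument actually lives.)

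The genuine gap is in the query phase. Your claim that a single iteration rejects ``independently'' at each round is false: all the round-$i$ consistency checks in one iteration are evaluated along the \emph{same} path determined by one uniform $s\zr$, so the per-round rejection events are heavily correlated. More importantly, the ``detect-or-propagate'' telescoping with plain Hamming distances does not deliver \cref{eq:soundness query formula old}. Writing $\tau\iip=\Delta(f\iip,H_{x\ii}[f\ii])$, the triangle inequality gives $\delta\fin\ge\min(\delta\zr,1-\rho^{1/4}-\epsilon')-\rounds\epsilon-\sum_i\tau\ii$, hence only $\Pr[\acc]\le 1-\delta\fin\le 1-\min(\delta\zr,\delta_0)+\rounds\epsilon+\sum_i\tau\ii$; the only handle you have on the extra term is $\Pr[\acc]\le 1-\max_i\tau\ii$, and combining the two loses a factor of roughly $\rounds\approx\log n$ in the distance (this is essentially the weaker original \FRI bound, not the BKS18 bound being claimed). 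The missing idea is the weighted-agreement accounting: define $\eta\ii(s)$ as the fraction of leaf-descendants of $s$ whose path to $s$ is entirely consistent, track $\alpha\ii=\agree_{\eta\ii}(f\ii,\RS\ii)$, observe that zeroing the weights exactly where $f\iip$ disagrees with $H_{x\ii}[f\ii]$ gives $\alpha\iip\le\agree_{\theta\iip}(H_{x\ii}[f\ii],\RS\iip)$ for free (the analogue of \cref{alphabeta2}), and then apply the weighted worst-to-average lemma to get $\alpha\iip\le\max(\alpha\ii,1-\delta_0)+\epsilon$ with high probability (the analogue of \cref{alphabeta1}). Since $\Pr[\acc]\le\alpha\fin$, this yields $1-\min(\delta\zr,\delta_0)+\epsilon\log n$ with only additive slack per round, which is exactly what the theorem asserts and what your triangle-inequality version cannot reach.
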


\begin{remark}\label{rem:improved fri analysis}
Using the improved distance preservation of \cref{lem:one-and-a-half-positive} in
the analysis of \FRI from~\cite{Ben-SassonKS18}, one immediately improves the factor $1/4$ in the exponent in \cref{eq:soundness query formula new} to an exponent of $1/3$ (details omitted).
\end{remark}

\iffalse
\begin{theorem}
	[FRI with improved soundness]
	\label{thm:FRI new}
	%	The following properties hold when the \FRI protocol
	%	is invoked on oracle $f\zr:L\zr\to\F$,
	%	with rate $\Rate=2^{-\RateInt}, \RateInt\in \N^+$ such that $\Rate|L\zr|>16$:
	%	\begin{enumerate}
	%		% \item{\bf Completeness}
	%		% \label{itm:completeness}
	%		% If $f\zr\in\RS\zr\eqndef\RS[\F,L\zr,\Rate=2^{-\RateInt}]$
	%		% and $f\one,\ldots, f\fin$ are computed by the prover specified in the \commit
	%		% phase, then the \FRI verifier outputs $\acc$ with probability $1$.
	%		\item{\bf Soundness}
	%		\label{itm:soundness new}
	Suppose $\delta\zr\eqndef\Delta({f\zr,\RS\zr})>0$. Then for any $\epsilon>0$ there exists $\epsilon'>0$ so that with probability at least
	\begin{equation}
	\label{eq:soundness commit formula new}
	1-\frac{8 \log |L\zr|}{\epsilon^2|\F|}\cdot\left(\frac{\rho|D|}{|\F|}+\epsilon\right)^{1/3}
	\end{equation}
	over the randomness of the verifier during the \commit phase,
	and for any (adaptively chosen) prover oracles $f\one,\ldots, f\fin$,
	the \query protocol with repetition parameter $\RepInt$ outputs $\acc$ with probability at most
	\begin{equation}\label{eq:soundness query formula new}
	\left(1-\min\set{\delta\zr, (1-\rho-\epsilon')^{1/3}}+\epsilon \log|L\zr| \right)^\RepInt
	\end{equation}
%	Consequently, the soundness of \FRI is at least
%	\begin{equation}\label{eq:soundness formula new}
%	\Soundness\left(\delta\zr\right) \eqndef
%	1-\left(\frac{2 \log |L\zr|}{\epsilon^3|\F|}+
%	\left(1-\min\set{\delta\zr, J_\epsilon(J_\epsilon(1-\Rate))} +\epsilon \log|L\zr|\right)^\RepInt
%	\right).
%	% 1-\left(
%	% % \frac{\RecInt|L\zr|}{|\F|}+
%	% \frac{3|L\zr|}{|\F|}+
%	%   \left(1-\min\set{\delta\zr, \frac{1-3\Rate-2^\eta/\sqrt{|L\zr|}}{4}}\right)^\RepInt\right).
%	\end{equation}
%	%	\end{enumerate}

\end{theorem}
\fi

\subsection{\DFRI}
We now describe our variation of \FRI,
that we call \DFRI, for which we can give improved soundness guarantees,
at the cost of a small increase in the query complexity (but no increase
in the proof length or the number of queries to committed proofs -- which is important in
applications).

Before we can describe our protocol we introduce the
operation of ``quotienting", which allows us to
focus our attention on polynomials taking certain values
at certain points.

\subsubsection{Quotienting} \label{sec:quotienting}

Suppose we a set $L \subseteq \F_q$
and a function $f : L \to \F_q$.
Suppose further that we are given a point $z \in \F_q$
and a value $b \in \F_q$.

We define the function $\QUOTIENT(f, z, b): L \to \F_q$ as
follows. Let $Z(Y) \in \F_q[Y]$ be the polynomial $Z(Y)= Y-z$.
Then we define $\QUOTIENT(f, z, b)$ to be the function $g: L \to \F_q$ given by:
$$ g(y) = \frac{f(y) - b}{Z(y)}$$
(or more succinctly, $ g = \frac{f-b)}{Z}$).

\begin{lemma}\label{lem:quotient}
Let $L \subseteq \F_q$. Let $z \in \F_q$ with $z \not\in L$.
Let $d \geq 1$ be an integer.

Let $f: L \to \F_q$, and $b \in \F_q$.
Let $g = \QUOTIENT(f,z,b)$.
Then the following are equivalent:
\begin{itemize}
\item There exists a polynomial $Q(X) \in \F_q[X]$
of degree at most $d-1$ such that $\Delta(g, Q) < \delta$.
\item There exists a polynomial $R(X) \in \F_q[X]$
of degree at most $d$ such that $\Delta(f, R) < \delta$
and $R(z) = b$.
\end{itemize}
\end{lemma}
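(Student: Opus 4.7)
The plan is to prove both implications via an explicit bijection between the relevant polynomials $Q$ (of degree $\leq d-1$) and $R$ (of degree $\leq d$ with $R(z)=b$), and show that under this bijection the agreement sets on $L$ are literally the same, so the distances coincide.

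First I would define the forward map: given $Q$ with $\deg Q \leq d-1$, set $R(X) := (X-z) Q(X) + b$. Clearly $\deg R \leq d$ and $R(z) = b$. For the reverse map: given $R$ with $\deg R \leq d$ and $R(z) = b$, the polynomial $R(X) - b$ vanishes at $z$, so $(X-z)$ divides $R(X)-b$ in $\F_q[X]$, and we set $Q(X) := (R(X)-b)/(X-z)$, which has degree $\leq d-1$. A one-line check confirms these two maps are mutual inverses, yielding a bijection between the two families of polynomials considered in the lemma.

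The second and main step is the agreement-set identity. Fix $y \in L$; since $z \notin L$, we have $y - z \neq 0$, so division by $y-z$ is legitimate. Then
\[
Q(y) = g(y) \iff Q(y) = \frac{f(y)-b}{y-z} \iff (y-z) Q(y) + b = f(y) \iff R(y) = f(y).
\]
Hence $\{y \in L : Q(y) = g(y)\} = \{y \in L : R(y) = f(y)\}$, which immediately gives $\Delta(g,Q) = \Delta(f,R)$. Combining with the bijection from the first step proves both directions of the equivalence simultaneously: any $Q$ witnessing the first bullet produces an $R$ witnessing the second (and vice versa).

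I do not anticipate a real obstacle here: the proof is essentially the observation that the affine-linear change of variables $R \leftrightarrow Q$ induced by the divisor $(X-z)$ is a bijection that preserves pointwise agreement on $L$ because $y-z$ is invertible for every $y \in L$. The only subtle point to state carefully is the hypothesis $z \notin L$, which is what guarantees that $g$ is well defined on all of $L$ and that the pointwise equivalence above has no degenerate case.
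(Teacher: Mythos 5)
Your proposal is correct and matches the paper's proof: the same maps $R = (X-z)Q + b$ and $Q = (R-b)/(X-z)$ are used in each direction, with the same observation that agreement of $Q$ with $g$ at $y\in L$ is equivalent to agreement of $R$ with $f$ at $y$. Your write-up just makes the bijection and the agreement-set identity slightly more explicit.
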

\begin{proof}
If there is such a polynomial $Q, \deg(Q)\leq d-1$ that agrees with $g$ on all but a $\delta$-fraction of entries, we can take
$R = QZ + b$. Notice $\deg(R)\leq d$ because $\deg(Z)=1$.

Conversely, if there is such a polynomial $R$ that agrees with $f$ on all but a $\delta$-fraction of entries,
we can take  $Q = (R - b)/Z$. This is indeed a polynomial
because $R-b$ vanishes on $z$, so $Z|(R-b)$ in the ring of polynomials.

Finally, by construction $R$ agrees with $f$ whenever $g$ agrees with $R$ and this completes the proof.
\end{proof}

\subsection{\DFRI}

Recall: We have linear spaces $L\zr, L\one, \ldots, L\fin$,
with dimensions $k, k-1, \ldots, k-\rounds$.
We further have 1 dimensional subspaces $L_0\zr, L_0\one, \ldots, L_0\fin$ with $L_0\ii \subseteq L\ii$.

For this, it will be helpful to keep in mind the case that the domain $L\zr$
is much smaller than the field $\F_q$ (maybe $q = |L\zr|^{\Theta(1)}$).

\begin{protocol}[\DFRI]~ \\
Input: a function $f\zr : L\zr \to \F_q$ which is supposed to be of degree $< d\zr$.

\medskip
\noindent{\bf \commit Phase:}
\begin{enumerate}
\item For each $i \in [0, \rounds - 1]$:
\begin{enumerate}
\item The verifier picks a uniformly random $z\ii \in \F_q$.
\item The prover writes down a degree one polynomial $B\ii_{z\ii}(X)\in\F_q[X]$
(which is supposed to be such that $B\ii_{z\ii}(x)$ equals the evaluation
of the low degree polynomial $H_x[f\ii]$ at $z\ii$).
\item The verifier picks uniformly random $x\ii \in \F_q$.
\item The prover writes down a function
$$f\iip : L\iip \to \F_q.$$
(which on input $y$ is supposed to equal $\QUOTIENT( H_{x\ii}[f\ii], z\ii, B\ii_{z\ii}(x))$.)
\end{enumerate}
\item The prover writes down a value $C \in \F_q$.
\end{enumerate}

\medskip
\noindent{\bf \query Phase:} ~
\begin{enumerate}
\item Repeat $\RepInt$ times:
\begin{enumerate}
\item The verifier picks a uniformly random $s\zr \in D$.
\item For each $i \in [0, \rounds - 1]$:
\begin{enumerate}
\item Define $s\iip \in L\iip$ by $s\iip = q\ii(s\ii)$.
\item Compute $H_{x\ii}[f\ii](s\iip)$ by making 2 queries to $f\ii$.
\item If $ H_{x\ii}[f\ii](s\iip) \neq f\iip(s\iip) \cdot ( s\iip- z\ii) + B\ii_{z\ii}(x\ii)$, then REJECT.
\end{enumerate}
\item If $f\fin(s\fin) \neq C$, then REJECT.
\end{enumerate}
\item ACCEPT.
\end{enumerate}
\end{protocol}

\subsection{Analysis}

The following theorem proves the soundness of the \DFRI protocol.

\begin{theorem}
	[\DFRI]
	\label{thm:DFRI soundness}
	Fix degree bound $d\zr = 3 \cdot 2^{\rounds}-2$ and $\RS\zr = \RS[\F_q, L\zr, d\zr]$. Let $n = |L\zr|$.

	For some $\epsilon, \delta > 0$, let
$$\deltastar = \delta - 2\rounds\epsilon,$$
$$\Lstar = \calL(\F_q, L\zr, d\zr, \deltastar),$$
$$\nustar =  2\Lstar\left(\frac{d\zr}{q}+\epsilon\right)^{1/3}+\frac{4}{\epsilon^2 q}.$$

	Then
	the following properties hold when the \DFRI protocol
	is invoked on oracle $f\zr:L\zr\to\F_q$,
	\begin{enumerate}
		\item{\bf Prover complexity}\label{itm:thm:RS scalability}
		is $O(n)$ arithmetic operations over $\F$
		\item{\bf Verifier complexity}\label{itm:thm:RS succintness}
		is $O(\log n)$ arithmetic operations over $\F$ for a single invocation of the \query phase; this also bounds communication and query complexity (measured in field elements).
		\item{\bf Completeness}
		\label{itm:completeness}
		If $f\zr\in\RS\zr$
		and $f\one,\ldots, f\fin$ are computed by the prover specified in the \commit
		phase, then the \DFRI verifier outputs $\acc$ with probability $1$.

		% \item{\bf Completeness}
		% \label{itm:completeness}
		% If $f\zr\in\RS\zr\eqndef\RS[\F,L\zr,\Rate=2^{-\RateInt}]$
		% and $f\one,\ldots, f\fin$ are computed by the prover specified in the \commit
		% phase, then the \FRI verifier outputs $\acc$ with probability $1$.

		\item{\bf Soundness}
		\label{itm:soundness new}
		Suppose $\Delta(f\zr,\RS\zr) > \delta$.
		Then with all but probability
		\begin{equation}
		\label{eq:soundness commit formula new}
		\error_\commit \leq \rounds \cdot \nustar \leq (\log n) \cdot \nustar.
		\end{equation}
		and for any (adaptively chosen) prover oracles $f\one,\ldots, f\fin$,
		the \query protocol with repetition parameter $\RepInt$ outputs $\acc$ with probability at most
		\begin{equation}\label{eq:soundness query formula new}
		\error_\query \leq \left(1- \deltastar + (\log n) \cdot \epsilon \right)^\RepInt
		\end{equation}
\swastik{DO WE NEED THE FOLLOWING?}
		Consequently, the soundness error of \FRI is at most
		\begin{equation}\label{eq:soundness formula new}
		\error\left(\delta\right) \leq
		(\log n) \cdot \nustar +\left(1- \deltastar + (\log n) \cdot \epsilon  \right)^\RepInt
		\end{equation}
	\end{enumerate}
\end{theorem}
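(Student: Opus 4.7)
The plan splits into completeness plus efficiency (routine) and soundness, which further decomposes into a commit-phase union bound and a per-round \query-phase tail estimate. For completeness: if $f\zr \in \RS\zr$ then $B\ii_{z\ii}(X)$ sending $x$ to the interpolant of $H_x[f\ii]$ evaluated at $z\ii$ is linear in $x$ (since $H_x[f\ii]$ depends linearly on $x$), and the honest $f\iip = \QUOTIENT(H_{x\ii}[f\ii], z\ii, B\ii_{z\ii}(x\ii))$ lies in $\RS\iip$ by \cref{lem:quotient}, so every \query check passes. Prover arithmetic telescopes to $O(n)$ because each round's work is linear in $|L\ii|$ and $\sum_i |L\ii| = O(n)$; each \query repetition performs $O(\log n)$ arithmetic operations along the trajectory.

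For commit-phase soundness, I would fix the prover's oracles and commitments and, for each round $i$, define the \emph{ideal} next oracle
\[ g\iip \;:=\; \QUOTIENT\!\left(H_{x\ii}[f\ii],\, z\ii,\, B\ii_{z\ii}(x\ii)\right). \]
By \cref{lem:quotient}, $\Delta(g\iip, \RS\iip) = \Delta(H_{x\ii}[f\ii], V_{z\ii, B\ii_{z\ii}(x\ii)})$, where $V$ is the intermediate RS code on $L\iip$ of degree one higher than $\RS\iip$. Writing $H_x[f\ii] = u^{*,(i)} + x\cdot u^{(i)}$ via the standard FRI even/odd decomposition of the interpolant of $f\ii$, \cref{lem:main} with slack $\deltastar$ and with the linear function $B_z(\cdot) = B\ii_{z\ii}(\cdot)$ says: if the event ``$g\iip$ is $\deltastar$-close to $\RS\iip$'' occurs for more than a $\nustar$-fraction of $(x\ii, z\ii)$, then both $u^{*,(i)}$ and $u^{(i)}$ are $(\deltastar+\epsilon)$-close to $V$, which via the FRI decomposition identity forces $f\ii$ to be $(\deltastar + 2\epsilon)$-close to $\RS\ii$. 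Contrapositively, if $f\ii$ is $(\deltastar+2\epsilon)$-far from $\RS\ii$, then $g\iip$ is $\deltastar$-far from $\RS\iip$ with probability $\geq 1 - \nustar$. A union bound over the $\rounds \leq \log n$ rounds gives commit-phase error $\leq \rounds \cdot \nustar$, and on the complementary ``good'' event every ideal $g\iip$ is $\deltastar$-far from $\RS\iip$.

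For query-phase soundness, on the good commit-phase event a single \query repetition accepts iff $f\iip(s\iip) = g\iip(s\iip)$ for every $i$ and $f\fin(s\fin) = C$. I would apply the weighted variant \cref{lem:weighted-OOD-main} inductively from the final round down to round $0$, taking the weight $\eta\ii$ on $L\ii$ to encode the conditional probability, over the trajectory suffix beyond $s\ii$, that all later checks pass. The final check against the constant $C \in \RS\fin$ caps the weighted agreement at round $\rounds$ at $1 - \deltastar$; each earlier round then propagates this agreement cap backwards with additive slack at most $\epsilon$ under the $(\Lstar, \deltastar)$ hypothesis. Unrolling over $\rounds \leq \log n$ rounds yields single-repetition acceptance probability at most $1 - \deltastar + \log n \cdot \epsilon$, and independence across $\RepInt$ repetitions gives the claimed bound.

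The main obstacle is the query-phase bookkeeping: the weights $\eta\ii$ must simultaneously encode the conditional trajectory-survival probabilities, satisfy the hypothesis of \cref{lem:weighted-OOD-main} with list-size $\Lstar$ tied to radius $\deltastar$, and propagate through the FRI even/odd decomposition without accumulating more than $\epsilon$ of slack per round. This mirrors the corresponding step in the \FRI soundness analysis of \cite[Section 7]{Ben-SassonKS18}, with \cref{lem:weighted-OOD-main} replacing the one-and-a-half Johnson distance-preservation lemma; the remainder of that template should transfer essentially verbatim.
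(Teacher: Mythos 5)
Your high-level architecture is right---a per-round application of the weighted DEEP theorem with failure probability $\nustar$, a union bound over $\rounds\leq\log n$ rounds, and an additive $\epsilon$ of slack per round---and completeness and the complexity claims are routine, as you say. But two of your specific steps have genuine gaps. First, your commit-phase induction does not compose. You propagate the invariant ``$f\ii$ is far from $\RS\ii$'' via the claim that the ideal oracle $g\iip$ is then far from $\RS\iip$ with probability $\geq 1-\nustar$. But the hypothesis at round $i+1$ concerns the prover's \emph{actual} oracle $f\iip$, and the prover chooses $f\iip$ adaptively: it may write down something close to $\RS\iip$ even when $g\iip$ is far, paying only at the positions where the \query consistency test subsequently fails. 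So distance of the actual oracles is not an invariant you can carry forward, and the union bound over rounds has nothing to attach to. The paper's fix is to make the sole invariant the weighted agreement $\alpha\ii=\agree_{\eta\ii}(f\ii,\RS\ii)$, where $\eta\ii(s)$ is the fraction of leaf-descendants of $s$ whose path to $s$ is entirely consistent; zeroing out inconsistent positions means overwriting $g\iip$ buys the prover nothing, and \cref{alphabeta2} (via \cref{lem:quotient}) gives $\alpha\iip\leq\beta\iip$ unconditionally.

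Second, your query-phase weights run in the wrong direction. You define $\eta\ii$ by survival of the trajectory \emph{suffix} beyond $s\ii$ and induct from round $\rounds$ down to round $0$. But \cref{lem:weighted-OOD-main} at round $i$ is a probabilistic statement over $x\ii,z\ii$ with the weight vector fixed in advance, whereas suffix weights at level $i+1$ depend on the colorings of rounds $i+1,\dots,\rounds-1$, hence on oracles the prover commits to \emph{after} seeing $x\ii,z\ii$. The paper's weights are prefix-survival fractions ($\theta\iip=\AVG[\eta\ii]$ depends only on rounds $0,\dots,i-1$), so they are determined before round $i$'s randomness is drawn, and the forward induction $\alpha\iip\leq\beta\iip\leq\max(\alpha\ii,1-\deltastar)+\epsilon$ of \cref{alphabeta1,alphabeta2} goes through. (Relatedly, the $1-\deltastar$ term arises from the $\max(\alpha\ii,1-\deltastar)$ threshold in \cref{alphabeta1}, not from the final check against $C$, which only yields that the acceptance probability is at most $\alpha\fin$.) To repair your write-up, replace both your commit-phase and query-phase arguments by this single forward induction on $\alpha\ii$; the commit-phase error $\rounds\cdot\nustar$ is exactly the probability that \cref{alphabeta1} fails in some round.
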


We give a consequence below with a specific setting of parameters
based on the Johnson bound.
\begin{example} \label{cor:deep-fri-johnson}
	Continuing with the notation of \cref{thm:DFRI soundness},
	fix degree bound $d\zr = 3 \cdot 2^{\rounds}-2$ and assume
	$n = |L\zr| < \sqrt{q}$.
	Let $\RS\zr = \RS[\F_q, L\zr, d\zr]$ and
	let $\rho = d\zr/n$ be its rate.

	Let $f\zr : L\zr \to \F_q$ be a function,
and let $\delta\zr = \Delta(f\zr , RS\zr) $.
	Then with all but probability $\error_\commit \leq O(q^{-\Omega(1)})$, the query phase will accept with probability at most:
	$\error_\query \leq (\max(1-\delta\zr, \sqrt{\rho}) + o(1)) ^{\RepInt}$
	as $n \to \infty$.
\end{example}
\begin{proof}
	Note that $d\zr \leq n \leq \sqrt{q}$.

	Set $\delta = \min(\delta\zr, 1 - \sqrt{\rho} - q^{-1/13})$, and apply the previous theorem.
	\cref{thm:johnson} implies that $\Lstar < q^{1/13}/(2\sqrt\rho) = O(q^{1/13})$.
	Set $\epsilon = q^{-6/13}$.
	Hence \[\nustar = 2\Lstar\left(d\zr q^{-1}+q^{-6/13}\right)^{1/3}+4q^{-6/13}
	= O(q^{-1/13}),\]
	which implies $\error_\commit \leq \tilde{O}(q^{-1/13})$.

	If $\delta = \delta\zr$, then $1 - \deltastar + (\log n)\epsilon = 1-\delta + o(1)$.
	Otherwise $\delta = 1 - \sqrt{\rho} - q^{-1/13}$, 
and so  $$1-\deltastar + (\log n) \epsilon =
	\sqrt\rho + q^{-1/13} + (\log n) \epsilon =
	\sqrt\rho + q^{-1/13} + (\log n) q^{-6/13}.$$
	Thus $\error_\query \leq (\max(1-\delta, \sqrt{\rho})+o(1))^{\RepInt}$.
\end{proof}
%In a typical application, we will have $q$ growing fast and $\rho$ being a constant, so the entire soundness error is essentially $\rho^{\RepInt/2}$.

We now give an example setting of \DFRI under the optimistic
\cref{conj:L}.

\begin{example} \label{cor:deep-fri-conjL}
	Assume Conjecture~\ref{conj:L}.
	Continuing with the notation of \cref{thm:DFRI soundness},
	fix degree bound $d\zr = 3 \cdot 2^{\rounds}-2$ and $n = |L\zr|$.
	Let $\RS\zr = \RS[\F_q, L\zr, d\zr]$ and
	let $\rho = d\zr/n$ be its rate.

	Let $C = C_\rho$ be the constant given by Conjecture~\ref{conj:L}.
	Suppose $q > n^{24C}$.

	Let $f\zr : L\zr \to \F_q$ be a function,
and let $\delta\zr = \Delta(f\zr , RS\zr) $.
	Then with all but probability $\error_\commit \leq O(q^{-\Omega(1)})$, the query phase will accept with probability at most:
	$\error_\query \leq (1-\delta\zr + o(1)) ^{\RepInt}$
	as $n \to \infty$.
\end{example}
\begin{proof}
Set $\epsilon = q^{-1/(6C)}$.

Set $\delta = \min(\delta\zr, 1 - \rho - q^{-1/(6C)})$.
Conjecture~\ref{conj:L} gives us:
$$ \Lstar <  n^C q^{1/6}.$$

We then apply the previous theorem.
We get $\nustar \ll O(\Lstar \cdot (d/q + \epsilon)^{1/3} + \frac{1}{\epsilon^2 q}) \ll q^{-1/12}$,
and this gives us the claimed bound on $\error_\commit$.

For the bound on $\error_\query$, we note that $\delta = \delta\zr + o(1)$.
This is because {\em every} function is within distance $1-\rho$ of $\RS\zr$ (this follows easily from
polynomial interpolation). Thus 
$$1-\deltastar + (\log n) \epsilon = \delta + o(1),$$
and we get the desired bound on $\error_\commit$.
\end{proof}

The prover and verifier complexity as well as completeness follow by
construction (see, e.g., \cite{FRI} for detailed analysis of these aspects). In the rest of the section we prove the soundness bound
of \cref{thm:DFRI soundness}.

\subsection{Preparations}
%In what follows we denote by $\RS\ii$ the RS code over
%evaluation domain $L\ii$ and degree less than $\rho|L\ii|-1$.

We do the analysis below for the case $\RepInt = 1$. The generalization to arbitrary $\RepInt$ easily
follows.

Define $d\zr = 3\cdot 2^{\rounds} - 2$, and $d\iip = d\ii/2 - 1$. It is easy to check that $d\fin = 1$.
Define $\RS\ii = \RS[\F_q, L\ii, d\ii]$. In the case of the honest prover (when $f\zr \in \RS\zr$),
we will have $f\ii \in \RS\ii$ for all $i$.

Our analysis of the above protocol will track the agreement of $f\ii$ with $\RS\ii$.
This agreement will be measured in a certain weighted way, which we define next.

\subsubsection{The success probability at $s \in L\ii$}

There is a natural directed forest that one can draw on the vertex set
$$ L\zr \cup L\one \cup \ldots \cup L\fin,$$
namely, where $s \in L\ii$ is joined to $q\ii(s) \in L\iip$
(and we say that $s$ is a child of $q\ii(s)$).
Note that every vertex not in $L\zr$ has two children.

Let $i \leq \rounds-1$ and $s_0 \in L\ii$. Let $s\in L\iip$ be the parent
of $s_0$, and let $s_1 \in L\ii$ the sibling of $s_0$.
We color $s_0$ GREEN if $f\iip(s)$ is consistent with $f\ii\mid_{\{s_0, s_1\}}$
according to the test
$$ H_{x\ii}[f\ii](s) =  f\iip(s) \cdot ( s- z\ii) + B\ii_{z\ii}(x\ii)$$
and we color $s_0$ RED otherwise. Notice that a vertex and its sibling get the same color.

For $s \in L\fin$, we color $s$ GREEN if $f\fin(s) = C$
and RED otherwise.

The \query phase of the protocol can be summarized as follows:
we pick a uniformly random $s\zr \in L\zr$ and consider the path
$s\zr, s\one, s\two \ldots, s\fin$ going through all its ancestors. If all
these vertices are GREEN, then we ACCEPT, otherwise we REJECT.

To capture this, we define functions $\eta\ii: L\ii \to \mathbb R$ as follows.
For $s \in L\ii$, let $\eta\ii(s)$ be the fraction of leaf-descendants $s\zr$ of $s$
for which the path from $s\zr$ to $s$ (including $s\zr$ but not including $s$)
consists exclusively of GREEN vertices.
Observe that $p_{ACCEPT} = \mathbb E_{s \in L\fin}[\eta\fin(s) \cdot 1_{f\fin(s) = C}]$ equals the probability that the
\query phase accepts.

The exact quantity that we will track is as $i$ increases is the weighted agreement:
$$\alpha\ii= \agree_{\eta\ii}[f\ii, \RS\ii].$$
Notice that
$$\alpha\zr = 1 - \Delta(f\zr, \RS\zr),$$
and the acceptance probability, $p_{ACCEPT}$ satisfies:
$$p_{ACCEPT} \leq \alpha\fin.$$

Our main intermediate claim is that with high probability over the choice of
$x\ii,  z\ii, B\ii_{z\ii}$, we have that $\alpha\iip$ is
not much more than $\alpha\ii$. This gives us that $p_{ACCEPT}$
is not much more than $1 - \Delta(f\zr, \RS\zr)$, as desired.

\subsubsection{Operations $\AVG$ and $\ZERO$}

We define two important operations.
\begin{enumerate}
\item $\AVG$. For a function $w: L^{(i-1)} \to \mathbb R$, we define
the function $\AVG[w] : L^{(i)} \to \mathbb R$ as follows.
Let $s \in L^{(i)}$,
and let $\{ s_0, s_1 \} = (q^{(i-1)})^{-1}(s)$.
Then define:
$$\AVG[w](s) = \frac{w(s_0) + w(s_1)}{2}.$$
\item $\ZERO$. For a function $w: L^{(i)} \to \mathbb R$, and a set
$S \subseteq L^{(i)}$, we define
the function $\ZERO[w,S]: L^{(i)} \to \mathbb R$ as follows.
For $s \in L^{(i)}$, we set:
$$ \ZERO[w,S](s) = \begin{cases} 0 & s \in S \\ w(s) & s \not\in S \end{cases}.$$
\end{enumerate}

We can use these two operations to express $\eta\iip$ in terms of $\eta\ii$.
Let $E\iip$ denote the set of all $s \in S\iip$ both of whose
children are RED (i.e., the test
$$ H_{x\ii}[f\ii](s) =  f\iip(s) \cdot ( s- z\ii) + B\ii_{z\ii}(x\ii)$$
fails at $s$).

Define $\theta\iip : L\iip \to \mathbb R$ by
$$ \theta \iip = \AVG[\eta\ii].$$
Then we have:
$$ \eta\iip = \ZERO(\theta\iip, E\iip).$$
Analogous to our definition of
$$\alpha\ii = \agree_{\eta\ii}(f\ii, \RS\ii),$$
we define
$$\beta\iip = \agree_{\theta\iip}(H_{x\ii}[f\ii], \{P(Y) \in \F_q[Y] \mid \deg(P) \leq d\iip \mbox{ and } P(z\ii) = B\ii_{z\ii}(x\ii)\} ).$$

%Define $\deltastar = \min(\delta, \xi) - \rounds \cdot \epsilon$.
%Define $$\Lstar = \max_i \calL(\F_q, L\ii, d\ii, \deltastar).$$

The following two lemmas control the growth of $\alpha\ii$ and $\beta\ii$.

\begin{lemma}
\label{alphabeta1}
For all $i$, with probability at least $ 1- \nustar$ over the choice of $x\ii, z\ii$,
we have:
$$ \beta\iip \leq \max(\alpha\ii, 1-\deltastar)  + \epsilon.$$
\end{lemma}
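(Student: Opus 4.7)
The key observation is that the algebraic hash of the FRI protocol decomposes as $H_{x\ii}[f\ii] = f_0\ii + x\ii \cdot f_1\ii$, where $f_0\ii, f_1\ii : L\iip \to \F_q$ are the ``even'' and ``odd'' components of $f\ii$ under the folding $q\ii$ (see \cref{sec:alghash}). Hence the quantity $\beta\iip$ may be rewritten as
\[
\beta\iip = \agree_{\theta\iip}\!\bigl(f_0\ii + x\ii f_1\ii,\ V_{z\ii, B\ii_{z\ii}(x\ii)}\bigr),
\]
where $V = \RS[\F_q, L\iip, d\iip+1]$ and $V_{z,b} = \{ P \in V \mid P(z) = b \}$. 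This is precisely the form addressed by the weighted DEEP theorem (\cref{lem:weighted-OOD-main}), with $u^* = f_0\ii$ and $u = f_1\ii$, weights $\eta = \theta\iip \in [0,1]^{L\iip}$, and the prover-chosen linear functions $B\ii_{z\ii}(X)$ playing the role of the $B_z(X)$ there.

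Set $\alpha_0 = \max(\alpha\ii, 1-\deltastar)$ and assume toward contradiction that
\[
\Pr_{x\ii, z\ii}\bigl[\beta\iip > \alpha_0 + \epsilon\bigr] > \nustar.
\]
We would like to invoke \cref{lem:weighted-OOD-main} with agreement threshold $\alpha_0 + \epsilon$ and slack parameter $\epsilon$. Since $\alpha_0 \geq 1-\deltastar$, we have $1 - (\alpha_0 + \epsilon) \leq \deltastar$, so the relevant list size
$L^* = \calL(\F_q, L\iip, d\iip+1, 1-\alpha_0-\epsilon)$
is at most $\calL(\F_q, L\iip, d\iip+1, \deltastar) \leq \Lstar$, using the (standard) fact that for Reed-Solomon codes of a given rate, the list size at a given relative radius does not grow with the blocklength. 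Combined with $d\iip \leq d\zr$, this shows
\[
\nustar \geq 2\Lstar\bigl(d\zr/q+\epsilon\bigr)^{1/3} + 4/(\epsilon^2 q) \geq \max\!\Bigl(2L^*\bigl(d\iip/q+\epsilon\bigr)^{1/3},\ 4/(\epsilon^2 q)\Bigr),
\]
so the hypothesis of \cref{lem:weighted-OOD-main} is satisfied.

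The conclusion of the weighted DEEP theorem yields polynomials $v^*, v \in V$ (both of degree at most $d\iip$) and a set $C \subseteq L\iip$ with $\sum_{y \in C} \theta\iip(y) > \alpha_0 |L\iip|$, such that $f_0\ii|_C = v^*|_C$ and $f_1\ii|_C = v|_C$. Unfolding back to level $i$, define $P(Y) = v^*(q\ii(Y)) + Y \cdot v(q\ii(Y))$; since $\deg(v^*), \deg(v) \leq d\iip = d\ii/2 - 1$, we have $\deg(P) \leq d\ii - 1$, so $P \in \RS\ii$. For every $s \in (q\ii)^{-1}(C)$ the identity $f\ii(s) = f_0\ii(q\ii(s)) + s \cdot f_1\ii(q\ii(s))$ together with $f_j\ii|_C = v|_C$ (for $j \in \{0,1\}$) gives $f\ii(s) = P(s)$. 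Using $\theta\iip = \AVG[\eta\ii]$ and $|L\ii| = 2|L\iip|$,
\[
\agree_{\eta\ii}(f\ii, P) \geq \frac{1}{|L\ii|}\sum_{s \in (q\ii)^{-1}(C)} \eta\ii(s) = \frac{1}{|L\iip|}\sum_{y \in C} \theta\iip(y) > \alpha_0 \geq \alpha\ii,
\]
contradicting the definition $\alpha\ii = \agree_{\eta\ii}(f\ii, \RS\ii)$. The main subtlety in carrying out this plan is justifying the uniform list-size bound across levels used in the second paragraph; everything else is a direct translation between the FRI folding picture and the DEEP theorem.
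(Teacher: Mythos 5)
Your proof is correct and is essentially the paper's own argument: the same pointwise-linear decomposition $H_x[f\ii](s)=u^*(s)+xu(s)$, the same invocation of \cref{lem:weighted-OOD-main}, the same unfolding $v^*(q\ii(Y))+Y\,v(q\ii(Y))$ with the degree count $2d\iip+1=d\ii-1$, and the same contradiction with the maximality in the definition of $\alpha\ii$. The list-size subtlety you flag (needing $\calL(\F_q,L\iip,d\iip,\cdot)\leq\Lstar$) is also present, silently, in the paper's proof, and is harmless in the instantiations via the Johnson bound or \cref{conj:L}.
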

We prove this using \cref{lem:weighted-OOD-main} in Appendix~\ref{sec:alphabetalemmas}.
\begin{lemma}
\label{alphabeta2}
For all $i$,
$$\alpha\ii \leq \beta\ii.$$
\end{lemma}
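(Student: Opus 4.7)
The plan is to prove $\alpha\ii \leq \beta\ii$ by an explicit construction: from any polynomial $R \in \RS\ii$ that realizes $\alpha\ii$, I will build a polynomial $P$ that lies in the restricted family defining $\beta\ii$ and whose $\theta\ii$-weighted agreement with $H_{x\iim}[f\iim]$ is at least $\alpha\ii$.

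Let $R \in \RS\ii$ be a polynomial of degree $< d\ii$ attaining the weighted agreement $\alpha\ii=\agree_{\eta\ii}(f\ii,R)$. The natural candidate, motivated by the inverse of the quotient operation, is
\[
P(Y) \;=\; R(Y)\cdot(Y - z\iim) \;+\; B\iim_{z\iim}(x\iim).
\]
Then $\deg(P) \leq \deg(R)+1 \leq d\ii$ and $P(z\iim) = B\iim_{z\iim}(x\iim)$, so $P$ belongs to the family over which $\beta\ii$ is maximized. The core observation is that at every $s \in L\ii\setminus E\ii$, the coloring/test relation forces
\[
H_{x\iim}[f\iim](s) \;=\; f\ii(s)\cdot(s - z\iim) + B\iim_{z\iim}(x\iim),
\]
so if additionally $f\ii(s)=R(s)$, then $H_{x\iim}[f\iim](s)=P(s)$. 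In other words, the agreement set of $H_{x\iim}[f\iim]$ with $P$ contains $\{s \notin E\ii\}\cap\{s: f\ii(s)=R(s)\}$.

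It remains to push this containment through the weights. Recall that $\theta\ii = \AVG[\eta\iim]$ and $\eta\ii = \ZERO(\theta\ii, E\ii)$, which means $\eta\ii(s)=\theta\ii(s)$ for $s \notin E\ii$ and $\eta\ii(s)=0$ for $s \in E\ii$. Therefore
\[
\agree_{\theta\ii}(H_{x\iim}[f\iim], P) \;\geq\; \frac{1}{|L\ii|}\!\!\sum_{\substack{s\notin E\ii\\ f\ii(s)=R(s)}}\!\!\theta\ii(s)
\;=\; \frac{1}{|L\ii|}\!\!\sum_{\substack{s\notin E\ii\\ f\ii(s)=R(s)}}\!\!\eta\ii(s)
\;=\; \frac{1}{|L\ii|}\!\!\sum_{\substack{f\ii(s)=R(s)}}\!\!\eta\ii(s) \;=\; \alpha\ii,
\]
where the penultimate equality uses that the $s\in E\ii$ terms of the sum contribute $0$ anyway. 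Since $P$ lies in the feasible set for $\beta\ii$, this gives $\beta\ii \geq \alpha\ii$.

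There is no genuine obstacle here: the lemma is a bookkeeping/monotonicity statement that just says going from $f\ii$ (inverse-quotiented data) back to $H_{x\iim}[f\iim]$ through the honest linear map cannot decrease agreement once we account for the vanishing of $\eta\ii$ on $E\ii$. The only subtle point to double-check is the degree arithmetic ($d\ii = d\iim/2-1$ and the extra $+1$ from multiplying by $(Y-z\iim)$ still keeps $\deg(P)\leq d\ii$), which works out cleanly.
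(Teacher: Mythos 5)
Your proof is correct and is essentially the paper's argument: the paper routes through \cref{lem:quotient} to identify $\beta\ii$ with an agreement of the quotiented function, while you instantiate the easy direction of that lemma explicitly by un-quotienting $R$ into $P(Y)=R(Y)(Y-z\iim)+B\iim_{z\iim}(x\iim)$, and both proofs handle the discrepancy set $E\ii$ by noting that $\eta\ii$ vanishes there while equaling $\theta\ii$ elsewhere. (If anything, your version is slightly cleaner, since it works with $\eta\ii$ directly rather than with $\theta\ii$ at the step where $f\ii$ and the quotiented function may disagree.)
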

We prove this using~\cref{lem:quotient} in Appendix~\ref{sec:alphabetalemmas}.

We can now complete the proof of~\cref{thm:DFRI soundness}.

\begin{proof}
As observed earlier, $\alpha\zr = 1 - \Delta(f\zr, \RS\zr) < 1 - \delta$.

The two lemmas above imply that with probability at least $1 - \rounds \nustar$,
$$\alpha\fin \leq \max(\alpha\zr, 1-\deltastar) + \rounds \cdot \epsilon < (1 - \min(\delta, \deltastar) + \rounds \cdot \epsilon).$$

Finally, we use the observation that $p_{ACCEPT} \leq \alpha\fin$ to complete the proof.
\end{proof}

\section{The DEEP Algebraic Linking IOP (\DEEPSTIK{}) protocol}
\label{sec:Deep stik}

\def\RAPR{R_{\mathrm{APR}}}
\def\RAIR{R_{\mathrm{AIR}}}
\def\Qlcm{Q_{\mathrm{lcm}}}
\def\xx{\mathbbmss x}
\def\ww{\mathbbmss w}
\def\maxdeg{d_{\mathcal C}}
\def\fpoly{\tilde f}
\def\gpoly{\tilde g}
\def\foracle{f}
\def\goracle{g}
\def\Tarith{T_{\text{arith}}}
\def\width{{\sf{w}}}
\def\nqueries{{\sf{Q}}}

The techniques used earlier in \cref{lem:main,sec:deep fri} can
also be used to improve soundness in other parts of an interactive
oracle proof (IOP) protocol. We apply them here to obtain a Scalable Transparent IOP of Knowledge (STIK) \cite[Definition 3.3]{stark}
with better soundness than the prior state of the art, given
in \cite[Theorem 3.4]{stark}.

Proof systems typically use a few steps of reduction to convert problems
of membership in a nondeterministic language $L$ to algebraic problems
regarding proximity of a function (or a sequence of functions) to
an algebraic code like Reed-Solomon (or, in earlier works, Reed-Muller).
The goal of such a reduction is to maintain a large \emph{proximity gap} $\gamma$, meaning that for instances in $L$, an honest prover will provide information that leads to codewords,
whereas for instances not in $L$, any oracles submitted by the prover will be converted by the reduction, with high probability, to functions that are $\gamma$-far from the code. Considerable effort is devoted to increasing $\gamma$ because
it is the input to the proximity protocols (like \FRI and \DFRI) and the soundness of those protocols is correlated to $\gamma$ (as discussed earlier, e.g., in \cref{thm:DFRI soundness}).

The STIK protocol is a special case of this paradigm. It requires the prover to provide oracle access to a function $f:D\to\F$ that is supposedly an RS encoding of a witness for membership of the input instance in $L$. A set of $t$-local constraints is applied to $f$ to construct a function $g:D\to\F$, along with a gap-gurantee: If $f$ is indeed an encoding of a valid witness for the instance, then the resulting function $g:D\to\F$ is also be a member of an RS code.
One of the tests that the verifier performs is a \emph{consistency test} between $f$ and $g$, and, prior to this work, this consistency test was applied to the functions $f$ and $g$ \emph{directly}. This leads to a rather small gap $\gamma\leq \frac{1}{8}$ which results in a small soundness guarantee from the RPT protocol applied to $f,g$ later on.

In this section we apply the DEEP technique to this setting. After $f$ and $g$ have been provided, the verifier samples a random $z \in \F_q$ and asks
for the values of the interpolating polynomials of $f,g$ on all
$t$ entries needed to check the consistency test. Our verifier now
applies the $\QUOTIENT$ operation to $f,g$, using the information
obtained  from the prover. Crucially, we prove that a \emph{single} consistency test, conducted over a large domain $D'\supset D$, suffices to improve the proximity gap to roughly $1-\sqrt{\rho}$, a value
that approaches $1$ as $\rho\rightarrow 0$. Assuming \cref{conj:L} the proximity gap is nearly-optimal, at $\gamma\approx 1-\rho$ (compare with with the value $\gamma\leq 1/8$ obtained by prior works). Details follow.

We focus on the the Algebraic linking IOP protocol (\ALI) of  \cite[Theorem B.15]{stark},
%For inputs that are not in the language, the known lower bound on the distance of the output
%from the RS code is less than $1/8$.
%This implies that in order to obtain
%a soundness error of $2^{-\lambda}$,
%the number of queries, $\nqueries$, must be at least
%$\log_{7/8}(1/2)\lambda \approx 5.2\lambda$. \lior{Check.}
and present
a new protocol that we call \DEEPSTIK (\cref{protocol:deep-stik}) that
obtains the aforementioned improved proximity gap(s).
%in which for inputs that are not in the language, the lower bound on the proximity gap is, provably, at least
%$1 - \sqrt{\rho}$, and potentially as large as $1-\rho$ (assuming \cref{conj:optimal list size}). Note that unlike \ALI, the lower bound tends to $1$ as $\rho \to 0$.
%%For example, taking $\rho=1/16$ gives $\nqueries = \lambda/2$.
%%Moreover, it is conjectured that the lower bound on the distance is actually
%%$1 - \rho$. \lior{Check this statement and possibly rephrase.}

In what follows, we will first recall (a variant of)
the language (or, more accurately, binary relation) which was the input to the \ALI protocol of \cite{stark} and is likewise the input to our \DEEPSTIK{}
protocol. The description of the protocol follows in \cref{sec:deep stik protocol specification}. Its basic properties are specified in \cref{sec:deep stik properties} and we analyze its soundness in
\cref{thm:deep-stik,sec:deep-stik-soundness}.

\subsection{The Algebraic Placement and Routing (APR) Relation}\label{sec:APR definition}

In what follows we use the notation $\fpoly$ to refer to a polynomial in $\F[x]$.
Note that the operator $\mid_D$ for $D\subseteq \F$
takes a polynomial to a function: $\fpoly\mid_D: D \to \F$.
%\lior{Check if need to rephrase.}

We start by defining a simplified version of the Algebraic placement and routing
relation (APR). See \cite[Definition B.10]{stark}.
In particular, we only use one witness polynomial.
This relation will be the input to the reduction used in \cref{protocol:deep-stik}.

\begin{definition} \label{def:apr}
The relation $\RAPR$ is the set of pairs $(\xx, \ww)$ satisfying:
\begin{enumerate}
\item \textbf{Instance format:}
The instance $\xx$ is a tuple $(\F_q, d, \mathcal C)$
where:
\begin{itemize}
\item $\F_q$ is a finite field of size $q$.
\item $d$ is an integer representing a bound on the degree of the witness.
\item $\mathcal C$ is a set of $|\mathcal C|$ tuples $(M^i, P^i, Q^i)$ representing constraints.
$M^i$ is the \emph{mask} which is a sequence of
field elements $M^i = \{ M^i_j \in \F_q\}_{j=1}^{|M^i|}$.
$P^i$ is the \emph{condition} of the constraint which is a polynomial with $|M^i|$ variables.
$Q^i \in \F_q[x]$ is the \emph{domain polynomial}
of the constraint which should vanish on the locations where
the constraint should hold.
\end{itemize}

We further introduce the following notation:
\begin{itemize}
\item Let $\mathcal M = \{ M_j^i  \mid 1 \leq i \leq |\mathcal C|$ and $1 \leq j \leq |M^i| \} \subseteq
\F_q$ be the \emph{full mask}.
\item Let $\maxdeg = \max_{i} \deg(P^i)$
be the maximal total degree of the $P^i$s.
\item Let $\Qlcm \in \F_q[x]$ be the least common multiple of the $Q^i$s.
\end{itemize}

\item \textbf{Witness format:}
The witness $\ww$ is a polynomial $\fpoly \in \F_q[x]$.
A constraint $(M, P, Q)$ is said to hold at a location $x \in \F_q$ if
$P(\fpoly(x \cdot M_1), \ldots, \fpoly(x \cdot M_{|M|})) = 0$.
We say that $\fpoly$ satisfies the constraint
if the constraint holds at every $x \in \F_q$ for which $Q(x) = 0$.

We say that $\ww$ satisfies the instance if and only if
$\deg(\fpoly)<d$ and $\fpoly$ satisfies all of the constraints.
\end{enumerate}
\end{definition}

\def\gen{\gamma}
To see that the notion of the $\RAPR$ relation defined above
is strong enough\lior{Check},
we follow the ideas from \cite{stark} and show a reduction
from an Algebraic Intermediate Representation (AIR, see \cite[Definition B.3]{stark}) to an APR.
The following uses the notation from \cite[Definition B.3]{stark}.
Let $\xx = (\F_q, T,\width, \mathcal P, C, \textsf{B})$
be an instance of $\RAIR$\lior{What is $C$?}.
Pick a multiplicative subgroup
$\langle \gen \rangle \subseteq \F_q^\times$ of size $T \cdot \width$
and pick $\fpoly$ such that $\fpoly(\gen^{t\width + j}) = w_j(t)$
for $t \in [T]$ and $i \in [\width]$ (here $[n] = \{0, \ldots, n-1\}$).
For all the constraints in $\mathcal P$,
choose the mask $M = \{1, \gen, \ldots, \gen^{2\width-1}\}$ and choose
the domain polynomial whose zeros are $\{\gen^{t\width}\}_{t \in [T-1]}$
($Q(x) = (x^{T} - 1) / (x - \gen^{-\width})$).
Replace each boundary constraint $(i, j, \alpha) \in \textsf{B}$
with a regular constraint with mask $M = \{1\}$,
$P(x) = x - \alpha$ and $Q(x) = x - \gen^{i\width + j}$.

\subsection{The \DEEPSTIK protocol}\label{sec:deep stik protocol specification}

We now describe our new protocol, that achieves improved soundness, as stated in the following theorem.

\begin{theorem}[\DEEPSTIK soundness]
\label{thm:deep-stik}
Fix a code rate $0 < \rho < 1$ and a distance parameter $0 < \delta \leq 1-\rho$.
Let $D, D' \subseteq \F_q$ be two evaluation domains
such that $|D| = d\rho^{-1}$ and $|D'| = d \cdot \maxdeg \rho^{-1}$.
Let $\RPT_D$, $\RPT_{D'}$ be two IOPPs with perfect completeness
for the codes $\RS[\F_q, D, (d - |\mathcal M|)/|D|])$ and
$\RS[\F_q, D', (d\maxdeg - 1)/|D'|])$ respectively.
Let $\epsilon, \epsilon'$ be the bounds on the soundness error (acceptance probability)
for words that are at least $\delta$-far from the corresponding code.
Denote \[L = \max\{\calL(\F_q, D, d, \delta),
\calL(\F_q, D', d \cdot \maxdeg, \delta)\}.\]
Then, there exists an IOP for $\RAPR$ with perfect completeness
and soundness error $\epsilon + \epsilon' + \frac{2L^2 (d \cdot \maxdeg + \deg(\Qlcm))}{q}$.
\end{theorem}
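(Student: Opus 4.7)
The plan is to design a protocol that applies the DEEP quotienting trick with a single out-of-domain sample to both the witness oracle $\foracle$ and the constraint-combination oracle $\goracle$, and then to analyse its soundness by combining the guarantees of the two inner RPT protocols with a Schwartz--Zippel bound on a ``consistency polynomial''.

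First I would specify the \DEEPSTIK protocol. The prover commits to $\foracle:D\to\F_q$ (the supposed evaluation of the witness $\fpoly$). The verifier sends uniformly random $r_1,\dots,r_{|\mathcal C|}\in\F_q$. The prover responds with $\goracle:D'\to\F_q$, obtained (when honest) as the evaluation on $D'$ of
$$\gpoly(x)\;=\;\sum_{i=1}^{|\mathcal C|} r_i\,\frac{N_i(x)}{Q^i(x)},\qquad N_i(x)\;:=\;P^i\!\left(\fpoly(x\cdot M^i_1),\dots,\fpoly(x\cdot M^i_{|M^i|})\right),$$
a polynomial of degree $<d\cdot\maxdeg$ whenever $\fpoly$ satisfies all constraints. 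In the DEEP step, the verifier samples $z\in\F_q$ uniformly and asks for field elements $\{a_\mu\}_{\mu\in\mathcal M}$ and $b$, supposedly equal to $\{\fpoly(z\mu)\}_{\mu\in\mathcal M}$ and $\gpoly(z)$. The verifier then (i) checks the algebraic identity
$$b\cdot \Qlcm(z)\;=\;\sum_{i} r_i\cdot P^i(a_{M^i_1},\dots,a_{M^i_{|M^i|}})\cdot \frac{\Qlcm(z)}{Q^i(z)},$$
(ii) forms the quotient $\foracle'$ of $\foracle$ by dividing out $\prod_{\mu\in\mathcal M}(X-z\mu)$ using the answers $\{a_\mu\}$ (the natural multi-point extension of $\QUOTIENT$ from \cref{sec:quotienting}), together with $\goracle'=\QUOTIENT(\goracle,z,b)$, and (iii) invokes $\RPT_D$ on $\foracle'$ and $\RPT_{D'}$ on $\goracle'$. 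The verifier accepts iff all three checks pass. Perfect completeness is then immediate: the honest answers satisfy the algebraic identity as a formal equality in $x$, and by \cref{lem:quotient} both quotients lie in the prescribed RS codes.

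For soundness, suppose $(\xx,\ww)\notin\RAPR$. By \cref{lem:quotient} (and its multi-point analogue), if $\foracle'$ is $\delta$-close to its target code then $\foracle$ is $\delta$-close to some $\fpoly_0$ of degree $<d$ with $\fpoly_0(z\mu)=a_\mu$ for all $\mu\in\mathcal M$; similarly for $\goracle'$ we obtain a $\gpoly_0$ of degree $<d\cdot\maxdeg$ with $\gpoly_0(z)=b$. Hence, outside the event (of probability at most $\epsilon+\epsilon'$) that at least one RPT catches its input being $\delta$-far, we may assume there is a pair $(\fpoly_0,\gpoly_0)$ drawn from the lists of polynomials close to $\foracle$ and to $\goracle$ respectively (each list of size at most $L$) that is consistent with the prover's DEEP answers. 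Since $(\xx,\ww)\notin\RAPR$, every such $\fpoly_0$ violates some constraint $i^*$, so there is a root $\xi$ of $Q^{i^*}$ with $N_{i^*}(\xi)\neq 0$. A Schwartz--Zippel argument on the $r_i$'s shows that, with all but $O(1/q)$ probability, the combined numerator $\sum_i r_i N_i(x)\cdot\tfrac{\Qlcm(x)}{Q^i(x)}$ is not divisible by $\Qlcm(x)$; in that event, for any candidate $\gpoly_0$ the consistency polynomial
$$R(x)\;=\;\gpoly_0(x)\cdot \Qlcm(x)\;-\;\sum_i r_i\,N_i(x)\cdot\frac{\Qlcm(x)}{Q^i(x)}$$
is non-zero of degree at most $d\cdot\maxdeg+\deg(\Qlcm)$. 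A second application of Schwartz--Zippel, this time in the variable $z$, gives $\Pr_z[R(z)=0]\le(d\maxdeg+\deg\Qlcm)/q$, and a union bound over the at most $L^2$ candidate pairs $(\fpoly_0,\gpoly_0)$ yields the $\tfrac{2L^2(d\maxdeg+\deg\Qlcm)}{q}$ term (the factor of $2$ absorbing the $O(1/q)$ slack from the $r$-randomness). Combined with $\epsilon+\epsilon'$ from the RPT failures this produces the claimed bound.

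The hard part will be executing the $r$-randomness step cleanly, particularly when the domain polynomials $Q^i$ share common roots: one must show that for each fixed unsatisfying $\fpoly_0$ the random coefficients force $R$ to be non-zero uniformly over the choice of $\gpoly_0$ in the $\goracle$-list, which reduces to verifying that the local non-divisibility witnessed by $\xi$ produces a non-trivial $\F_q$-linear functional on $(r_1,\dots,r_{|\mathcal C|})$.
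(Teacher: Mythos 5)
Your protocol is the paper's \DEEPSTIK{} protocol (the only cosmetic difference is that you have the prover send $b$ and the verifier check the consistency identity, whereas the paper has the verifier \emph{compute} $b$ from the mask values via \eqref{eqn:composition-poly}; these are equivalent), and your soundness analysis is correct but organized as the dual of the paper's. The paper conditions on a high acceptance probability and \emph{extracts}: it fixes the list element $\tilde r^1\in L(\foracle)$ (and then $\tilde r^2_{\alpha}\in L(\goracle_\alpha)$) that most often explains acceptance, paying a factor $L$ each time, applies Markov over $\alpha$, observes that if the consistency identity holds for more than $d\cdot d_{\mathcal C}+\deg(Q_{\mathrm{lcm}})$ values of $z$ then it holds as a polynomial identity, and finally notes that the set of $\alpha$ for which the combined numerator is divisible by $Q_{\mathrm{lcm}}$ is a linear subspace, so having measure $>1/q$ forces it to be all of $\F_q^{|\mathcal C|}$, whence the extracted $\tilde r^1$ satisfies every constraint. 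You instead run the argument forward with a union bound over the at most $L^2$ candidate pairs and two applications of Schwartz--Zippel, in $\alpha$ and then in $z$. These are interchangeable: your ``$r$-randomness'' step is exactly the contrapositive of the paper's subspace observation --- the divisibility locus $\{\alpha:\ Q_{\mathrm{lcm}}\mid\sum_i\alpha_i N_i\cdot Q_{\mathrm{lcm}}/Q^i\}$ is a proper subspace because the standard basis vector $e_{i^*}$ of a violated constraint lies outside it (divisibility of $N_{i^*}\cdot Q_{\mathrm{lcm}}/Q^{i^*}$ by $Q_{\mathrm{lcm}}$ is equivalent to $Q^{i^*}\mid N_{i^*}$, which fails at the violating root), and this disposes of the shared-roots concern you flag at the end; multiplicities never enter. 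One small accounting point: the slack from the $\alpha$-step is $L/q$ rather than $O(1/q)$, since the divisibility event depends on which of the $\le L$ candidates $\tilde f_0\in L(\foracle)$ is considered (and $L(\foracle)$ is fixed before $\alpha$ is drawn), but this is still absorbed by the factor $2$ in the stated bound. As for what each approach buys: the paper's extraction argument directly exhibits the satisfying witness $\tilde r^1$, giving the statement a proof-of-knowledge flavour, while your union bound is more direct and avoids the maximizer/Markov bookkeeping.
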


\begin{example}
Fix a code rate $0 < \rho < 1$.
Choosing \DFRI as the $\RPT$ protocol and setting
$\delta = 1 - \sqrt\rho - q^{-1/13}$
as in \cref{cor:deep-fri-johnson},
using $\RepInt$ repetitions,
we obtain an IOP for $\RAPR$ with
perfect completeness
and soundness error that approaches $2\rho^{\RepInt/2}$
as $q \to \infty$
assuming the parameters $d$, $\maxdeg$, $\deg(\Qlcm)$ of the APR
are constant with respect to $q$.
\end{example}
\begin{proof}
\Cref{thm:johnson} implies that $L \leq q^{1/13}/(2\sqrt{\rho}) = O(q^{1/13})$.
Hence the expression $2L^2 (d \cdot \maxdeg + \deg(\Qlcm))/q$ approaches $0$
as $q \to \infty$.
Moreover, \cref{cor:deep-fri-johnson} implies that $\epsilon, \epsilon'$
approach $\rho^{\RepInt/2}$.
\end{proof}

We now describe the protocol that achieves the soundness of \cref{thm:deep-stik}.

\begin{protocol}[\DEEPSTIK{}] \label{protocol:deep-stik} ~
\begin{enumerate}
\item \label{itm:stark-step-first} The prover sends an oracle $\foracle: D \to \F$
(which should be $\fpoly\mid_D$).
\item The verifier sends random coefficients
$\alpha = (\alpha_1, \ldots, \alpha_{|\mathcal C|}) \in \F_q^{|\mathcal C|}$.
\item \label{itm:stark-step-last} The prover sends an oracle $\goracle_{\alpha}: D' \to F$
(which should be $\gpoly_{\alpha}\mid_{D'}$,
where
\begin{equation} \label{eqn:composition-poly}
\gpoly_{\alpha}(x) = \sum_{i=1}^{|\mathcal C|} \alpha_i \cdot
\frac{P^i(\fpoly(x \cdot M^i_1), \ldots, \fpoly(x \cdot M^i_{|M^i|}))}{Q^i(x)}.
\end{equation}
Note that $\deg(\gpoly_{\alpha}) < d \cdot \maxdeg$).
\item \label{itm:chooze-z}The verifier sends a random value $z \in \F_q$.
\item Denote $\mathcal M_z =
\{z \cdot M_j^i \mid 1 \leq i \leq |\mathcal C|$ and $1 \leq j \leq |M^i|\}$.
The prover sends $a_{\alpha, z}: \mathcal M_z \to \F$ (which should be
$\fpoly|_{\mathcal M_z}$).
The verifier deduces $b_{\alpha, z}$, the alleged value of $\gpoly_\alpha(z)$,
using \cref{eqn:composition-poly}.
\item Let
$U(x)$, $Z(x)$ as defined in \cref{sec:quotienting}
for $\QUOTIENT(\foracle, a_{\alpha, z})$ and let
\[ h^1(x) = h^1_{\alpha, z}(x) =
\QUOTIENT(\foracle, a_{\alpha, z}) =
\frac{\foracle(x) - U(x)}{Z(x)},\]
\[ h^2(x) = h^2_{\alpha, z}(x) =
\QUOTIENT(\goracle_{\alpha}, \{z \mapsto b_{\alpha, z}\}) =
\frac{\goracle_{\alpha}(x) - b_{\alpha,z}}{x - z},\]
\lior{Fix singleton notation}
and note that the verifier has oracle access to $h^1$ and $h^2$ using the oracles $\foracle$
and $\goracle_{\alpha}$.
\item They use $\RPT_D$ and $\RPT_{D'}$ to prove that $h^1$ is
at most $\delta$-far from $\RS[\F_q, D, (d - |\mathcal M|) / |D|]$
(in other words, it is close to a polynomial of degree $< d - |\mathcal M|$)
and that $h^2$ is
at most $\delta$-far from $\RS[\F_q, D', (d\maxdeg-1) / |D'|]$.
\end{enumerate}
\end{protocol}

\subsection{Properties of \DEEPSTIK}\label{sec:deep stik properties}

Note that in the original \ALI protocol, the equivalent to the expression
$P^i(\fpoly(x \cdot M^i_1), \ldots, \fpoly(x \cdot M^i_{|M^i|})) / Q^i(x)$
is sampled at $\nqueries$ random locations
from the evaluation domain, where $\nqueries$ is the number of queries.

The main idea in \DEEPSTIK is to use Quotienting to allow
the verifier to choose \emph{one} random element $z$ from the \emph{entire} field,
and check the consistency between $\fpoly$ and $\gpoly$ only at $x=z$.

The fact that \DEEPSTIK allows to sample from the entire field
introduces\lior{TODO} several advantages over the \ALI protocol from \cite{stark}:

\begin{description}
\item[Soundness] As described above, the reduction in \ALI has lower bound $1/8$
on the distance from the code for inputs that are not in the language, even for $\rho \to 0$.
In \DEEPSTIK{} the lower bound on the distance is $1 - \sqrt{\rho}$.
\item[Query complexity] In \ALI
the verifier queries
$|\mathcal M| \cdot \nqueries$
field elements
as we need $|\mathcal M|$ elements to evaluate
$P^i(\fpoly(x \cdot M^i_1), \ldots, \fpoly(x \cdot M^i_{|M^i|}))$.
In \DEEPSTIK{} the verifier queries $O(|\mathcal M| + \nqueries)$
field elements as the evaluation of $P^i$ is done once.
\item[Verifier complexity] Previously, the verifier complexity was
$\Omega(\nqueries \cdot \Tarith)$ (where $\Tarith$ is the arithmetic
complexity of evaluating all the constraints).
The verifier complexity in \DEEPSTIK{} depends
on $\nqueries + \Tarith$ as we evaluate the constraints only once.
\item[Prover complexity] It is possible to alter
\cref{def:apr} and \DEEPSTIK{} to work with several witness
polynomials $f_1, \ldots, f_{\width}$ (as was done in \ALI).
The prover complexity in this case will depend on
$(\width \rho^{-1} + \maxdeg \rho^{-1} + \Tarith \maxdeg)\cdot d$ instead of
$(\width \maxdeg \rho^{-1} + \Tarith \maxdeg) \cdot d$ (in \ALI).
\end{description}

\subsection{Soundness analysis} \label{sec:deep-stik-soundness}

The proof of \cref{thm:deep-stik} will follow from the following lemma:

\begin{lemma}
	Let $\event$ be the event that the \DEEPSTIK verifier accepts.
    If \[\Pr[\event] \geq \epsilon + \epsilon' + \frac{2L^2 (d \cdot \maxdeg + \deg(\Qlcm))}{q},\]
    then there exists a polynomial of degree $<d$ satisfying the constraints.
\end{lemma}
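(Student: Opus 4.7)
The strategy is contrapositive: assume that \emph{no} polynomial of degree $<d$ satisfies all the constraints of $\xx$, and upper-bound the acceptance probability $\Pr[\event]$ of the \DEEPSTIK verifier by $\epsilon + \epsilon' + \tfrac{2L^2(d\maxdeg + \deg \Qlcm)}{q}$. First, let $\event'$ be the event that $h^1$ is $\delta$-close to $\RS[\F_q,D,(d-|\mathcal M|)/|D|]$ and $h^2$ is $\delta$-close to $\RS[\F_q,D',(d\maxdeg-1)/|D'|]$. The soundness of $\RPT_D$ and $\RPT_{D'}$ gives $\Pr[\event\wedge\neg\event'] \le \epsilon + \epsilon'$, so it suffices to bound $\Pr[\event\wedge\event']$. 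On $\event'$, a straightforward multi-point generalization of \cref{lem:quotient} (with $Z(x)=\prod_{y\in\mathcal{M}_z}(x-y)$ and $U$ the interpolant of $a_{\alpha,z}$ on $\mathcal{M}_z$) yields a polynomial $R_1$ of degree $<d$ with $\Delta(f,R_1)<\delta$ and $R_1|_{\mathcal{M}_z}=a_{\alpha,z}$, and a polynomial $R_2$ of degree $<d\maxdeg$ with $\Delta(g_\alpha,R_2)<\delta$ and $R_2(z)=b_{\alpha,z}$. Thus $R_1$ lies in the list $L_f$ (size $\le L$, depending only on $f$) and $R_2$ lies in $L_{g_\alpha}$ (size $\le L$, depending on $\alpha$).

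\paragraph{Core case analysis.} Fix any $R_1\in L_f$; by the contrapositive assumption it violates some constraint $i_0$, i.e.\ $Q^{i_0}\nmid P^{i_0}(R_1(xM^{i_0}_1),\ldots,R_1(xM^{i_0}_{|M^{i_0}|}))$. Once $R_1|_{\mathcal{M}_z}=a_{\alpha,z}$ is enforced, the verifier's deduction rule forces $b_{\alpha,z}\cdot\Qlcm(z) = \tilde G_{R_1,\alpha}(z)$, where
\[
\tilde G_{R_1,\alpha}(x) \;:=\; \sum_{i=1}^{|\mathcal C|} \alpha_i\, P^i\bigl(R_1(xM^i_1),\ldots,R_1(xM^i_{|M^i|})\bigr)\cdot \tfrac{\Qlcm(x)}{Q^i(x)}
\]
is a polynomial in $x$ of degree $\le d\maxdeg+\deg\Qlcm$, linear in $\alpha$. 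In \emph{Case A} ($\Qlcm\mid\tilde G_{R_1,\alpha}$): factoring $\Qlcm=\prod_j\pi_j^{m_j}$ and matching prime powers shows this divisibility for $\alpha=e_{i_0}$ is equivalent to $Q^{i_0}\mid P^{i_0}(R_1(\cdot))$, which fails by the choice of $i_0$. Hence Case A is a proper $\F_q$-linear subspace of $\F_q^{|\mathcal C|}$, so $\Pr_\alpha[\mathrm A]\le 1/q$. In \emph{Case B} ($\Qlcm\nmid\tilde G_{R_1,\alpha}$): for every candidate $R_2$ of degree $<d\maxdeg$, the polynomial $\Phi(x):=R_2(x)\Qlcm(x) - \tilde G_{R_1,\alpha}(x)$ is nonzero and of degree $\le d\maxdeg+\deg\Qlcm$, so Schwartz--Zippel gives $\Pr_z[\Phi(z)=0]\le(d\maxdeg+\deg\Qlcm)/q$, and a union bound over the $\le L$ candidates in $L_{g_\alpha}$ yields $\le L(d\maxdeg+\deg\Qlcm)/q$.

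\paragraph{Assembly and obstacle.} Summing the two cases, each $R_1\in L_f$ contributes at most $\tfrac{1+L(d\maxdeg+\deg\Qlcm)}{q}$ to $\Pr[\event\wedge\event']$. Union-bounding over $|L_f|\le L$ yields
\[
\Pr[\event\wedge\event']\;\le\;\tfrac{L+L^2(d\maxdeg+\deg\Qlcm)}{q}\;\le\;\tfrac{2L^2(d\maxdeg+\deg\Qlcm)}{q},
\]
(using $L\ge 1$ and $d\maxdeg+\deg\Qlcm\ge 1$), contradicting the hypothesis and completing the proof. The main obstacle is Case A: one must exploit the randomness of $\alpha$ to rule out ``accidental pole cancellations'' in $\sum_i\alpha_i P^i(R_1(\cdot))/Q^i(\cdot)$ that would make it coincide, as a rational function, with some polynomial $R_2$ of degree $<d\maxdeg$ -- the basis-vector trick $\alpha=e_{i_0}$ cleanly isolates the single violated constraint. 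The secondary subtleties -- the multi-point generalization of \cref{lem:quotient}, and the handling of $z$ that happen to be roots of some $Q^i$ (absorbed by consistently working with the polynomial identity $R_2(z)\Qlcm(z)=\tilde G_{R_1,\alpha}(z)$ rather than the rational $b_{\alpha,z}=\tilde G_{R_1,\alpha}(z)/\Qlcm(z)$) -- are routine.
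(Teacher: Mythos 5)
Your proof is correct and is essentially the paper's own argument run in the contrapositive direction: the same reduction via RPT soundness to the event that $h^1,h^2$ are $\delta$-close, the same un-quotienting to list elements $R_1\in L(\foracle)$, $R_2\in L(\goracle_\alpha)$ consistent with $a_{\alpha,z}$, $b_{\alpha,z}$, the same polynomial identity $\Qlcm(z)R_2(z)=G_{R_1,\alpha}(z)$, the same linear-subspace/standard-basis argument over $\alpha$, and the same root-counting over $z$ with the $L^2$ accounting. Your bookkeeping differs only cosmetically (union bound over both lists and an explicit divisibility dichotomy, versus the paper's maximizer/pigeonhole choices $\tilde r^1,\tilde r^2_\alpha$ followed by a Markov argument), and both treatments equally gloss over the measure-zero corner cases where $z$ hits a root of some $Q^i$ or $\mathcal M_z$ meets $D$.
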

\begin{proof}
Let $L(\foracle) \subseteq \RS[\F_q, D, \rho]$ be the
set of codewords that are at most $\delta$-far from $\foracle$.
Similarly define $L(\goracle_{\alpha})$.
We have $|L(\foracle)|, |L(\goracle_{\alpha})| \leq L$.

Let $\event_1$ be the event where the verifier accepts and
$h^1$ and $h^2$ are at most
$\delta$-far from the corresponding codes.
Denote $\eta = 2L^2 (d \cdot \maxdeg + \deg(\Qlcm))/q$. Then, $\Pr[\event_1] \geq \eta$.
$\event_1$ implies that there exists a polynomial $\tilde h^1 = \tilde h^1_{\alpha, z}$
of degree $<d - |\mathcal M|$ such that
$\big|\{x \in D : \tilde h^1(x) \neq \frac{\foracle(x) - U(x)}{Z(x)}\}\big| < \delta |D|$.
Hence $Z(x)\cdot \tilde h^1(x) + U(x) \in L(\foracle)$. Similarly
there exists a polynomial $\tilde h^2 = \tilde h^2_{\alpha, z}$
of degree $<d\cdot \maxdeg - 1$ such that
$(x-z) \tilde h^2(x) + b \in L(\goracle_{\alpha})$.

Fix $\tilde r^1(x)$
(independent of $\alpha$ and $z$)
to be the element in $L(\foracle)$ maximizing the probability that
$\tilde r^1(x) = Z(x)\cdot \tilde h^1(x) + U(x)$ given $\event_1$.
Let $\event_2 \subseteq \event_1$ be the event that $\tilde r^1(x) = Z(x)\cdot \tilde h^1(x) + U(x)$.
It follows that $\Pr[\event_2] \geq \eta/L$.

Fix $\tilde r^2_{\alpha}(x) \in L(\goracle_{\alpha})$ maximizing the probability that
$\tilde r^2_{\alpha}(x) = (x-z) \tilde h^2(x) + b$ given $\event_2$
(note that $\tilde r^2_{\alpha}$ depends on $\alpha$
as the oracle $\goracle_{\alpha}$
was sent only after the verifier sent $\alpha$),
and let $\event_3 \subseteq \event_2$ be the event where
$\tilde r^2_{\alpha}(x) = (x-z) \tilde h^2(x) + b$.
We have $\Pr[\event_3] \geq \eta/L^2$.
This implies, $\Pr_{\alpha}[\Pr_z[\event_3] \geq \eta/(2L^2)] \geq \eta/(2L^2)$.

The event $\event_3$ implies
\begin{align*} \label{eqn:r1-good}
&\tilde r^1 \mid_{\mathcal M_z} = U\mid_{\mathcal M_z} = a_{\alpha, z},\\
&\tilde r^2_{\alpha}(z) = b_{\alpha, z}.
\end{align*}
Recall that $b_{\alpha, z}$ was defined according to \eqref{eqn:composition-poly}, so
\[
b_{\alpha, z} = \sum_{i=1}^{|\mathcal C|} \alpha_i \cdot
\frac{P^i(a_{\alpha, z}(z \cdot M^i_1), \ldots,
a_{\alpha, z}(z \cdot M^i_{|M^i|}))}{Q^i(z)}.
\]
Substituting values for $a_{\alpha, z}$ and $b_{\alpha, z}$
and multiplying by $\Qlcm(z)$ we obtain:
\begin{equation} \label{eqn:composition-poly-r}
\Qlcm(z) \cdot \tilde r^2_{\alpha}(z) = \sum_{i=1}^{|\mathcal C|} \alpha_i \cdot
P^i(\tilde r^1(z \cdot M^i_1), \ldots,
\tilde r^1(z \cdot M^i_{|M^i|}))\cdot \frac{\Qlcm(z)}{Q^i(z)}.
\end{equation}

Both sides of the equation are polynomials of degree $<d\cdot \maxdeg + \deg(\Qlcm)$
in $z$.
For every $\alpha$ for which
$\Pr_z[\event_3] \geq \eta/(2L^2) = (d\cdot \maxdeg + \deg(\Qlcm))/q$,
we have at least $d\cdot \maxdeg + \deg(\Qlcm)$ many $z$'s satisfying
\eqref{eqn:composition-poly-r} and
thus the two polynomials in \eqref{eqn:composition-poly-r} are identical.
Let $G_{\alpha}(x)$ denote the the right-hand side of \eqref{eqn:composition-poly-r}
(replacing $z$ with $x$).

So far we have:
\[\Pr_{\alpha}[
    G_{\alpha}(x) \text{ is divisible by }
    \Qlcm(x)
] \geq \eta/(2L^2) > 1/q.\]

\lior{Check this paragraph}
Note that the set of $\alpha$'s satisfying this event forms a vector space.
If its dimension was less than $|\mathcal C|$ then the probability would have been
$\leq 1/q$. Hence this event holds for every $\alpha$.
Substituting the elements of the standard basis, we get that
for every $1 \leq i \leq |\mathcal C|$,
\[P^i(\tilde r^1(x \cdot M^i_1), \ldots,
\tilde r^1(x \cdot M^i_{|M^i|})) \cdot \frac{\Qlcm(x)}{Q^i(x)}
\text{ is divisible by } \Qlcm(x). \]
Substituting any $x$ for which $Q^i(x) = 0$ gives
$P^i(\tilde r^1(x \cdot M^i_1), \ldots, \tilde r^1(x \cdot M^i_{|M^i|})) = 0$
which implies that $\tilde r^1$ satisfies all the constraints, as required.
\end{proof}

\subsection{Further optimizations for practical implementation}

As we saw, it makes sense to
work with several witness polynomials rather than one, as it improves the prover complexity.
Another optimization is to apply the $\RPT$ only once
for both $h^1$ and $h^2$ by taking a random linear combination of the two
(and using \cref{lem:main}).
To make this work, the prover writes the degree $<d\cdot \maxdeg$
polynomial $\gpoly(x)$ as:
\[\gpoly(x) = \sum_{i=0}^{\maxdeg-1} x^i \gpoly_i(x^{\maxdeg}), \]
where the $\gpoly_i$s are of degree $<d$, and it sends oracles
to $\gpoly_i \mid_{D}$ instead of $\gpoly\mid_{D'}$.
In total, we will have to run $\RPT$ on $\width + \maxdeg$ polynomials
of degree $<d$, so we choose only one evaluation domain $D \subseteq \F_q$
satisfying $|D| = d\rho^{-1}$.

\bibliographystyle{alpha}
\bibliography{references}

\newcommand{\etalchar}[1]{$^{#1}$}
\begin{thebibliography}{BBHR18b}

\bibitem[AHIV17]{ligero}
Scott Ames, Carmit Hazay, Yuval Ishai, and Muthuramakrishnan
  Venkitasubramaniam.
\newblock Ligero: Lightweight sublinear arguments without a trusted setup.
\newblock In {\em Proceedings of the 24th ACM Conference on Computer and
  Communications Security}, October 2017.

\bibitem[BBHR18a]{stark}
Eli {Ben-Sasson}, Iddo Bentov, Yinon Horesh, and Michael Riabzev.
\newblock Scalable, transparent, and post-quantum secure computational
  integrity.
\newblock Cryptology ePrint Archive, Report 2018/046, 2018.
\newblock Available at \url{https://eprint.iacr.org/2018/046}.

\bibitem[BBHR18b]{FRI}
Eli Ben{-}Sasson, Iddo Bentov, Ynon Horesh, and Michael Riabzev.
\newblock Fast {Reed-Solomon Interactive Oracle Proofs of Proximity}.
\newblock In {\em Proceedings of the 45th International Colloquium on Automata,
  Languages, and Programming {(ICALP)}}, 2018.

\bibitem[BCF{\etalchar{+}}16]{Ben-SassonCFGRS16}
Eli Ben{-}Sasson, Alessandro Chiesa, Michael~A. Forbes, Ariel Gabizon, Michael
  Riabzev, and Nicholas Spooner.
\newblock On probabilistic checking in perfect zero knowledge.
\newblock {\em Electronic Colloquium on Computational Complexity {(ECCC)}},
  23:156, 2016.

\bibitem[BCR{\etalchar{+}}18]{Aurora}
Eli Ben{-}Sasson, Alessandro Chiesa, Michael Riabzev, Nicholas Spooner, Madars
  Virza, and Nicholas~P. Ward.
\newblock Aurora: Transparent succinct arguments for {R1CS}.
\newblock {\em {IACR} Cryptology ePrint Archive}, 2018:828, 2018.

\bibitem[BCS16]{BenSassonCS16}
Eli Ben{-}Sasson, Alessandro Chiesa, and Nicholas Spooner.
\newblock Interactive oracle proofs.
\newblock In {\em Theory of Cryptography - 14th International Conference, {TCC}
  2016-B, Beijing, China, October 31 - November 3, 2016, Proceedings, Part
  {II}}, pages 31--60, 2016.

\bibitem[BGH{\etalchar{+}}06]{BenSassonGHSV06}
Eli {Ben-Sasson}, Oded Goldreich, Prahladh Harsha, Madhu Sudan, and Salil~P.
  Vadhan.
\newblock Robust {PCP}s of proximity, shorter {PCP}s, and applications to
  coding.
\newblock {\em SIAM Journal on Computing}, 36(4):889--974, 2006.

\bibitem[BKS18a]{Ben-SassonKS18}
Eli Ben{-}Sasson, Swastik Kopparty, and Shubhangi Saraf.
\newblock Worst-case to average case reductions for the distance to a code.
\newblock In {\em 33rd Computational Complexity Conference, {CCC} 2018, June
  22-24, 2018, San Diego, CA, {USA}}, pages 24:1--24:23, 2018.

\bibitem[BKS18b]{Ben-SassonKS18_ECCC}
Eli Ben{-}Sasson, Swastik Kopparty, and Shubhangi Saraf.
\newblock Worst-case to average case reductions for the distance to a code.
\newblock {\em Electronic Colloquium on Computational Complexity {(ECCC)}},
  25:90, 2018.

\bibitem[BS08]{BS08}
Eli {Ben-Sasson} and Madhu Sudan.
\newblock Short {PCP}s with polylog query complexity.
\newblock {\em SIAM Journal on Computing}, 38(2):551--607, 2008.
\newblock Preliminary version appeared in STOC~'05.

\bibitem[BSKR10]{BKR-RSLD}
Eli Ben-Sasson, Swastik Kopparty, and Jaikumar Radhakrishnan.
\newblock Subspace polynomials and limits to list decoding of reed-solomon
  codes.
\newblock {\em IEEE Trans. Information Theory}, 56(1):113--120, 2010.

\bibitem[CMS17]{ChiesaMS17}
Alessandro Chiesa, Peter Manohar, and Igor Shinkar.
\newblock On axis-parallel tests for tensor product codes.
\newblock In {\em Approximation, Randomization, and Combinatorial Optimization.
  Algorithms and Techniques, {APPROX/RANDOM} 2017, August 16-18, 2017,
  Berkeley, CA, {USA}}, pages 39:1--39:22, 2017.

\bibitem[Din07]{Din07}
Irit Dinur.
\newblock The {PCP} theorem by gap amplification.
\newblock {\em Journal of the ACM}, 54(3):12, 2007.

\bibitem[Gur07]{guruswami2007algorithmic}
Venkatesan Guruswami.
\newblock Algorithmic results in list decoding.
\newblock {\em Foundations and Trends® in Theoretical Computer Science},
  2(2):107--195, 2007.

\bibitem[LFKN92]{LundFKN92}
Carsten Lund, Lance Fortnow, Howard~J. Karloff, and Noam Nisan.
\newblock Algebraic methods for interactive proof systems.
\newblock {\em Journal of the ACM}, 39(4):859--868, 1992.

\bibitem[PS94]{PS94}
Alexander Polishchuk and Daniel~A. Spielman.
\newblock Nearly-linear size holographic proofs.
\newblock In {\em Proceedings of the 26th Annual ACM Symposium on Theory of
  Computing}, STOC~'94, pages 194--203, 1994.

\bibitem[Raz87]{Razborov87}
Alexander~A. Razborov.
\newblock Lower bounds on the size of bounded depth circuits over a complete
  basis with logical addition.
\newblock {\em Mathematical notes of the Academy of Sciences of the USSR},
  41(4):333--338, 1987.

\bibitem[Rot06]{RothBook}
Ron~M. Roth.
\newblock {\em Introduction to coding theory}.
\newblock Cambridge University Press, 2006.

\bibitem[RRR16]{ReingoldRR16}
Omer Reingold, Guy~N. Rothblum, and Ron~D. Rothblum.
\newblock Constant-round interactive proofs for delegating computation.
\newblock In {\em Proceedings of the 48th Annual {ACM} {SIGACT} Symposium on
  Theory of Computing, {STOC} 2016, Cambridge, MA, USA, June 18-21, 2016},
  pages 49--62, 2016.

\bibitem[RVW13]{RVW2013}
Guy~N. Rothblum, Salil Vadhan, and Avi Wigderson.
\newblock Interactive proofs of proximity: delegating computation in sublinear
  time.
\newblock In {\em Proceedings of the forty-fifth annual ACM symposium on Theory
  of computing}, pages 793--802. ACM, 2013.

\bibitem[RW14]{RudraW14}
Atri Rudra and Mary Wootters.
\newblock Every list-decodable code for high noise has abundant near-optimal
  rate puncturings.
\newblock In {\em Symposium on Theory of Computing, {STOC} 2014, New York, NY,
  USA, May 31 - June 03, 2014}, pages 764--773, 2014.

\end{thebibliography}

\appendix

\section{Proof of the DEEP lemma for general codes}
\label{sec:DEEPgeneralproof}
\begin{proof}[Proof of \cref{lem:main general codes}]
	To simplify notation set $\eta = \max\left( 2L^*_\slack \left( \frac{\sigma}{N} + \epsilon\right)^{1/3}, \frac{4}{\epsilon^2 q} \right)$,
	and let $u_x = u^* + x u$.

	Let $\event[x,z]$ denote the event ``$\exists v \in \Ls(u_x,V,\slack), \inp{v,z}=B_z(x)$''.

	The assumption of Equation \eqref{eq1} now reads as
	$$\Pr_{x\in\F_q, z\in S}[\event[x,z]]\geq \eta.$$

	Thus we get,
	\begin{align}
	\label{generalzxmarkov}
	\Pr_{x\in \F_q} [ \Pr_{z\in S} [\event[x,z]]\geq \eta/2] \geq \eta/2
	\end{align}

	Let $$A = \condset{x\in \F_q}{\Pr_{z\in S} [\event[x,z]]\geq \eta/2]} $$
	and notice $|A|\geq \eta q/2$.

	For $x \in \F_q$, pick $v_x \in V$ to be a member of $\Ls(u_x,V,\slack)$ that maximizes
	$\Pr_{z \in S} [ P(z) = B_z(x)]$.
	% \lior{should it be $\bv_x$ instead of $\bv$?}\eli{it's fine this way, it means that from all possible $v$, we pick the $v$ that maximizes the probability and call it $v_x$, tweaked wording to help clarify more}\lior{Ok, but note the fact that $\bv$ is still used in the next line.}, breaking ties arbitrarily and recall $\bv_x$ is its canonical extension to $\F_q$, a low degree polynomial evaluated over $\F_q$.
	Let $S_x = \condset{z \in S}{\inp{v_x,z} = B_z(x)}$ and set
	$\mu_x = |S_x|/s$.
	By definition, $|\Ls(u_x,V,\slack)| \leq L^*_\slack$, and so by the pigeonhole principle, for each $x \in A$ we have
	$\mu_x\geq \frac{\eta}{2L^*_\slack}$.

	For $x, \beta, \gamma$ picked uniformly from $A$ we have
	%	 and $z$ picked uniformly from $S$,
	\begin{align*}
	\E_{x, \beta,\gamma\in A} \left[ \frac{|S_{x} \cap S_\beta \cap S_\gamma|}{s} \right] &= \E_{z\in S, x, \beta, \gamma\in \F_q} [ 1_{z \in S_x \cap S_\beta \cap S_{\gamma}} ]\\
	&= \E_{z\in S} [ \E_{x\in \F_q} [ 1_{z \in S_{x}}]^3 ]\\
	&\geq \E_{z\in S, x\in \F_q} [ 1_{z \in S_{x}} ]^3 \\
	&\geq \left(\frac{\eta}{2L^*_\slack}\right)^3\\
	&> \frac{\sigma}{N} + \epsilon.
	\end{align*}

	%	\eli{Here}

	The second equality above follows from the independence of
	$x, \beta, \gamma$. The first inequality is an application of Jensen's inequality and the last inequality is by assumption on $\eta$.

	Thus $$\Pr_{x, \beta, \gamma} [ |S_{x} \cap S_\beta \cap S_\gamma| > \sigma ] \geq \epsilon .$$

	Note that $\Pr_{x,\beta,\gamma}[x,\beta,\gamma \text{ are not all distinct}] < 3/|A|$.
	Since $|A| \geq \eta q/2 \geq 2/\epsilon^2 \geq 6/\epsilon$ we have $3/|A| \leq \epsilon/2$.
	Thus $\Pr_{x,\beta,\gamma}[x,\beta,\gamma \text{ are all distinct and }|S_x \cap S_\beta \cap S_\gamma| > \sigma] \geq \epsilon/2$.
	%Since $|A| \geq \eta q/2 \geq \frac{2}{\epsilon}$ (last inequality is by definition of $\eta$), we see that $x, \beta, \gamma$ are all distinct with probability at least $\epsilon/2$.
	%Thus with probability at least $\epsilon/2$ over the choice of $x, \beta, \gamma$, we have that $x, \beta, \gamma$ are all distinct and
	%$|S_x \cap S_\beta \cap S_\gamma| > d$.

	This means that there are distinct $x_0, \beta_0$
	such that
	$$ \Pr_{\gamma} [  |S_{x_0} \cap S_{\beta_0} \cap S_\gamma| > d] \geq \epsilon/2.$$

	Consider some $\gamma$ where this happens. Let $\tilde{S} = S_{x_0} \cap S_{\beta_0} \cap S_{\gamma}$.
	Extend each of $u^*, u$ to functions over domain $S$ by defining for all $z\in S\setminus [n]$ $u^*(z)=B_z(0)$ and $u(z)=B_z(1)$, and for $x\in \F_q$ let $u_x(z)=u^*(z)+x u(z)$. Since $V$ is systematic, we define $v_x(z)=\inp{v_x|_{[k]},z}$ and thus extend $v_x$ to domain $\tilde{S}$.
	By construction we know
	$$(x_0, u_{x_0}), (\beta_0, u_{\beta_0}), (\gamma, u_\gamma)$$
	are collinear. So, in particular,
	$$(x_0, u_{x_0}|_{\tilde{S}}), (\beta_0, u_{\beta_0}|_{\tilde{S}}), (\gamma, u_{\gamma}|_{\tilde{S}}) \in \F_q \times \F_q^{\tilde{S}}$$
	are likewise collinear, as a special case.
	By definition of ${\tilde{S}}$, we get that:
	$$(x_0, v_{x_0}|_{\tilde{S}}), (\beta_0, v_{\beta_0}|_{\tilde{S}}), (\gamma, v_{\gamma}|_{\tilde{S}}) \in \F_q \times \F_q^{\tilde{S}}$$
	are also collinear.
	Since $|{\tilde{S}}| > \sigma$ and $S$ is $\sigma$-robust we conclude that $v_\gamma$ is uniquely determined by $v_\gamma|_{\tilde{S}}$. This allows us to conclude that
	$$(x_0, v_{x_0}), (\beta_0, v_{\beta_0}), (\gamma, v_{\gamma}) \in \F_q \times \F_q^n$$
	are all collinear, recalling that $v_{x_0}\in \Ls(u_{x_0},V,\slack)$.

	Thus, an $\epsilon/2$-fraction of the $\gamma \in A$ have the ``good'' property that
	$(\gamma, v_{\gamma})$ is on the line passing through
	$(x_0, v_{x_0})$ and $(\beta_0, v_{\beta_0})$.
	Write this line as $v^* + x v$ and notice that for all ``good'' $\gamma$ we have $v_\gamma=v^*+\gamma v$.
	Let $A'\subseteq A$ denote the set of good elements for this line, recording that $|A'|\geq |A|\cdot \epsilon/2\geq 1/\epsilon$.
	By definition of $\Ls(u_x,V,\slack)$ and
	the assumption $v_x\in \Ls(u_x,V,\slack)$,
	we have that $\Delta(u_x, v_x) < \slack$ for $x\in A'$.

	Consider the set $C\subset [n]$ defined by
	$$C = \condset{y\in [n]}{u^*(y)=v^*(y) \mbox{ AND } u(y)=v(y)}.$$

	For each $y \in [n]\setminus C$ there exists at most a single value of $x\in \F_q$ satisfying $u_x(y)=v_x(y)$ because
	$$u_x(y)-v_x(y)=(u^*(y)-v^*(y)) + x\cdot(u(y)-v(y))$$
	has at most one value $x$ on which it vanishes. This implies
	$$\slack\geq \E_{x\in A'}[\Delta_{[n]}(u_x,v_x)]\geq
	\frac{|[n]\setminus C|}{n}\cdot \left(1-\frac{1}{|A'|}\right)\geq \left(1-\frac{|C|}{n}\right)\cdot(1-\epsilon)\geq 1-\frac{|C|}{n}-\epsilon.$$
	Rearranging, we get $\frac{|C|}{n}\geq 1-(\slack+\epsilon)$ and this completes the proof.
\end{proof}

\section{The algebraic hash function}
\label{sec:alghash}
We now describe the algebraic hash function $H_x$.

The description of the hash function requires
fixing some choices of certain subspaces. For each $i \in [0, \rounds]$
we choose $\F_2$-subspaces $L\ii_0$ and $L\ii$,
satisfying the following properties.
\begin{enumerate}
	\item $L\ii_0 \subseteq L\ii$ with $\dim(L\ii_0) = 1$,
	\item $L\iip = q\ii(L\ii)$, where $q\ii(X)$ is the
	{\em subspace polynomial} of $L\ii_0$,
	$$q\ii(X) = \prod_{\alpha \in L\ii_0} (X-\alpha),$$
	thus this is an $\F_2$-linear map with kernel $L\ii_0$).
	In particular, $\dim(L\iip) = \dim(L\ii) - 1$.
\end{enumerate}
Let $\cosets\ii$ denote the set of cosets of $L\ii_0$ contained in $L\ii$.

% For $f\ii:L\ii\to\F$ let $\bar{f}\ii$ denote the member of $\RS\ii\eqndef\RS[\F,L\ii,\Rate]$ that is closest to $f\ii$ in the aforementioned block-wise distance measure, breaking ties arbitrarily.
Given $x \in \F$ and $f:L\ii \to \F$, the
hash of $f$ with seed $x$ is defined to be the function $H_{x}[f]:L\iip\to\F$ as follows.
For $s \in L\iip$, let $s_0,s_1 \in L\ii$ be the two roots of $q\ii(X)-s$.
Let $P_{f, s}(X) \in \F[X]$ be the unique degree $\leq 1$ polynomial satisfying
$$P_{f,s}(s_0) = f(s_0),$$
$$P_{f,s}(s_1) = f(s_1).$$
Then we define
\begin{equation}
H_x[f](s) = P_{f,s}(x).
\end{equation}
Observe that $H_x[f](s)$ can be computed by
querying $f$ on the set $\set{s_0,s_1}$ (this set
is a coset of $L\ii_0$, and we denote it by $\coset\ii_{s}$).

To understand $H_x$ better, it is instructive to see what it does to
$\RS\ii$.
Let $f \in \RS\ii$. The underlying polynomial $f(X)$ thus has degree at most $\Rate |L\ii|$.
We may write $f(X)$ in base $q\ii(X)$ as:
\begin{equation}
\label{fXqX}
f(X) = a_0(X) + a_1(X) q\ii(X) + \ldots + a_t(X) (q\ii(X))^t,
\end{equation}
where each $a_i(X)$ has degree at most $1$, and
$t \leq \Rate |L\ii|/2$.
Since the polynomials $f(X)$ and $P_{f,s}(X)$ agree on the roots of
$q(X) - s$, we get that $f(X) \equiv P_{f,s}(X) \mod (q\ii(X)-s)$.
From Equation~\eqref{fXqX}, we get that
$$P_{f,s}(X) = a_0(X) + a_1(X) s + \ldots + a_t(X) s^t.$$
In particular, for all $x \in \F$,
$$H_x[f](s) = P_{f,s}(x) = a_0(x) + a_1(x) s + \ldots + a_t(x) s^t,$$
and thus
$$H_x[f] \in \RS\iip.$$

\section{Proof of \cref{alphabeta1} and \cref{alphabeta2}}
\label{sec:alphabetalemmas}

We first prove~\cref{alphabeta1}.
\begin{proof}
Set $\gamma = \max(\alpha\ii, 1-\deltastar )$.

For simplicity, denote $f\ii$ by $f$.

 Recall the notation $P_{f,s}$ from the definition of the algebraic hash function $H_x$ in Section~\ref{sec:alghash}. We have that for each $s \in L\iip$, $H_x[f](s) = P_{f,s}(x)$ is a linear function of $x$.
Thus we can write $H_x[f] = u^* + x u$ for $u^*, u \in \F_q^{L\iip}$, and for any
fixed $s$, we have the formal polynomial equality $P_{f,s}(X) = u^*(s) + X u(s)$.

We are interested in bounding the probability of the event $\beta\iip > \gamma + \epsilon$.
In other words, we want to bound the probability that there exists a polynomial $Q(Y) \in \F_q[Y]$
with $\deg(Q) < d\iip+1$ such that:
\begin{itemize}
\item $\agree_{\theta\iip}(u^* + xu, Q) > \gamma + \epsilon$,
\item $Q(z\ii) = B\ii_{z\ii}(x)$.
\end{itemize}

This is exactly the scenario of~\cref{lem:weighted-OOD-main}.
That Lemma tells us that if the probability in question is larger than
$\nustar$, then there exist polynomials $P(Y), P^*(Y)$ of degree $\leq d\iip $ and a set $T \subseteq L\iip$
such that:
\begin{itemize}
\item $$\frac{1}{|L\iip|}\sum_{s \in L\iip}\theta\iip > \gamma,$$
\item $u|_T = P|_T$,
\item $u^*|_T = P^*|_T$.
\end{itemize}

    Let
    $$\hat{P}(X,Y)\eqndef P^*(Y)+X\cdot P(Y)$$
    and notice that $\deg_X(\hat{P})\leq 1$, $\deg_Y(\hat{P}) \leq d\iip$.

    Consider the polynomial $R(X)\eqndef \hat{P}(X,q\ii(X))$. We have $$\deg(R)\leq 2 d\iip + 1 = d\ii -1 < d\ii.$$

    We claim that $R$ agrees with $f$ on $\tilde{T} = \bigcup_{s \in T}\coset\ii_{s}$.

Take any $s \in T$ and let $\coset\ii_{s}=\set{s_0, s_1}\in \cosets\ii $ be the pair of roots of the polynomial
$q\ii\left(X\right)-s$.

First we show that the polynomials $P_{f,s}(X)$ and $\hat{P}(X,s)$ are identical.
Indeed, $\hat{P}(X, s)) =  P^*(s) + X P(s) = u^*(s) + X u(s) =  P_{f,s}(X)$.
  It follows that
		$$f\left(s_0\right)=
    \hat{P}\left(s_0,s\right)=
    \hat{P}\left(s_0,q\ii\left(s_0\right)\right)=
    R\left(s_0\right)$$ and similarly $f\left(s_1\right)=R\left(s_1\right)$. Therefore, $R$ and $f$
    agree on $\tilde{T}$, as claimed.

    We now use the above information to show that $\alpha\ii = \agree_{\eta\ii}(f, R) > \gamma$, which contradicts the definition of $\gamma$.
Indeed,
\begin{align*}
\agree_{\eta\ii}(f,R) &= \frac{1}{|L\ii|} \sum_{r \in L\ii \mid f(r) = R(r)} \eta\ii(r) \\
&\geq \frac{1}{|L\ii|} \sum_{r \in \tilde{T}} \eta\ii(r) \\
&= \frac{1}{|L\ii|} \sum_{s \in T} \sum_{r \in \coset\ii_s} \eta\ii(r) \\
&= \frac{1}{|L\ii|} \sum_{s \in T} |\coset\ii_s| \cdot \theta\ii(s)  \quad\mbox{Since $\theta(s)$ equals the average of $\eta(r) \mid r \in \coset\ii_s$}\\
&= \frac{1}{|L\iip|} \sum_{s \in T} \theta\ii(s)  \\
&> \gamma.
\end{align*}
This is the desired contradiction.
\end{proof}

Next we prove~\cref{alphabeta2}.
\begin{proof}
By definition,
$$ \beta\ii = \agree_{\theta\ii}(H_{x\iim}[f\iim],  \{P(Y) \in \F_q[Y] \mid \deg(P) \leq d\ii \mbox{ and } P(z\iim) = B\iim_{z\iim}(x\iim)\} )$$

Next, by the properties of quotienting, \cref{lem:quotient},
\begin{align*}
\beta\ii &= \agree_{\theta\ii}( H_{x\iim}[f\iim],  \{P(Y) \in \F_q[Y] \mid \deg(P) \leq d\ii \mbox{ and } P(z\iim) = B\iim_{z\iim}(x\iim)\})\\
&= \agree_{\theta\ii}( \QUOTIENT(H_{x\iim}[f\iim], z\iim, B\iim_{z\iim}(x\iim)),  \{P(Y) \in \F_q[Y] \mid \deg(P) \leq d\ii-1 \}).
\end{align*}

Now observe that $\eta\ii$ is obtained from $\theta\ii$ by zeroing out coordinates in $E\ii$, and
the only coordinates where $f\ii$ can differ from $\QUOTIENT(H_{x\iim}[f\iim], z\iim, B\iim_{z\iim}(x\iim))$
are in $E\ii$.
Thus:
\begin{align*}
\beta\ii &\geq  \agree_{\theta\ii}( f\ii,  \{P(Y) \in \F_q[Y] \mid \deg(P) \leq d\ii-1 \})\\
&= \agree_{\theta\ii}( f\ii, \RS\ii)\\
&= \alpha\ii.
\end{align*}
This completes the proof.
\end{proof}

\end{document}

\end{document}

\newpage

\section{More (heuristic) counterexamples to high-error distance preservation}

\swastik{TO BE CLEANED UP OR OMITTED FOR THIS VERSION -- most likely omitted, but let us bring it back for arxiv}

Do the same with subspace polynomials of degree $2^{n-c}$ (with $c$ constant).
There are $\approx 2^{c(n-c)}$ such subspace polynomials.
They all look like:
$$ Y^{2^{n-c}} + \sum_{j=1}^{c-1} b_j Y^{2^{n-c-j}} + \beta Y^{2^{n-2c}}
+ (\mbox{degree $< 2^{n-2c-1}$}).$$
As the subspace varies, we expect that the the $b_j$ and the $\beta$ are all
pretty random. Thus we expect that there is some choice of $(b_1, \ldots, b_{c-1})$ such that there are $\Omega(2^n)$ choices of $\beta$ for which
$(b_1, \ldots, b_{c-1}, \beta)$ are the leading coefficients of a subspace polynomial.

Using this, we get the following.
For $\rho = 2^{-2c-1}$, there are functions $u^*$ and $u$ (given by $u^*(y) = y^{2^{n-2c}}$ and $u(y) = y^{2^{n-c}} + \sum_{j=1}^{c-1} b_j Y^{2^{n-c-j}}$) such that:
$$\Delta(u^*, \RS(\rho)) \geq 1- 2^{-2c},$$
but:
$$\Pr_{x \in \F_q} [ \Delta(u + x u^*, \RS(\rho)) \leq 1- 2^{-c} ] \geq \Omega(1).$$

{\bf THIS IS NOT FULLY PROVED - IT DEPENDS ON THE ``RANDOMNESS" of the leading $c$ coefficients of subspace polynomials of subspaces of dimension $n-c$}. This is quite believable, and we can probably code up examples that work based on this.

For $c=1$ we get the original example.

For $c=2$, we get the following: when $\rho = 1/32$, there is a function $u^*$ with agreement $1/16$ with the corresponding $\RS$ code, but there is a line through $u^*$ where constant fraction of the points on the line have agreement $1/4$ with $\RS$.

In general, for various $\rho$, we have a $u^*$ with agreement $2\rho$, but many points on the line have agreement $\sqrt{2\rho}$.

\end{document}

\newpage

\eli{Moved old sections, untouched}

\section{Basic OOD-Polishchuk-Spielman}

\begin{definition}
	Let $\calL(\F_q, D, d, \tau)$ be the maximum (over all centers) of the list-size for list-decoding a
	Reed-Solomon code of length $n$, domain $D$, field $\F_q$ from $\tau$-fraction errors.
\end{definition}

\begin{lemma}
	Let $D \subseteq \F_q$ (the ``evaluation domain"). Let $n=|D|$.
	Let $V \subseteq \F_q^n = \F_q^D$ be the Reed-Solomon code of degree $d$ polynomials
	evaluated at $D$. Let $L = \calL(\F_q, n, d, \delta)$ be the list-size for list-decoding $V$
	from $\delta$-fraction errors.

	Suppose $u^* \in \F_q^n$ is such that $\Delta(u^*, V) > \delta + \epsilon$.
	Let $u \in \F_q^n$ be arbitrary. For each $z \in \F_q$, let $A_z(X) \in \F_q[X]$
	be an arbitrary linear function.

	Then
	\begin{align*}
	\Pr_{x, z \in \F_q} [\exists P(Y) \in \F_q[Y], \deg(P) \leq d & \mbox{ s.t. $\Delta(P, u^* + x u) < \delta$ AND $P(z) = A_z(x)$}]\\
	&< \max\left(2L \left( \frac{d}{q} + \epsilon\right)^{1/3} , \frac{4}{\epsilon^2 q} \right).
	\end{align*}
\end{lemma}

Note that this is totally trivial if $L \gg q^{1/3}$. In the next section we describe a modification which is nontrivial
even if $L = q^{O(1)}$ at small additional prover cost.
\begin{proof}
	For each $x \in \F_q$, let $u_x = u^* + x u$,
	and let $L_x = \{ P(Y) \mid \F_q[Y] \mid \deg(P) \leq d \mbox{ AND } \Delta(P, u_x) < \delta\}.$ By hypothesis, $|L_x| \leq L$.

	Let $\eta = \max\left( 2L \left( \frac{d}{q} + \epsilon\right)^{1/3}, \frac{4}{\epsilon^2 q} \right)$.

	If the desired conclusion does not hold, then we have:
	$$ \Pr_{x,z} [\exists P(Y) \in L_x \mbox{ s.t. } P(z) = A_z(x)] \geq \eta.$$

	Thus
	\begin{align}
	\label{zxmarkov}
	\Pr_{x} [ \Pr_{z} [  \exists P(Y) \in L_x \mbox{ s.t. } P(z) = A_z(x) ] \geq \eta/2] \geq \eta/2.
	\end{align}

	Let $P_x(Z) \in \F_q[Z]$ be the element of $L_x$ that maximizes $\Pr_{z \in \F_q} [ P_x(z) = A_z(x) ]$. Let $S_x = \{z \in \F_q \mid P_x(z) = A_z(x)\}$ and
	$\mu_x = |S_x|/q$.
	Since, $|L_x| \leq L$, we get that for each $x \in \F_q$,
	$$\mu_x \geq \frac{1}{L} \Pr_{z} [  \exists P(Y) \in L_x \mbox{ s.t. } P(z) = A_z(x) ].$$
	Using this in Equation \eqref{zxmarkov}, we get:
	$$ \Pr_{x} [ \mu_x \geq \frac{\eta}{2L} ] \geq \eta/2.$$

	Let $$A = \{x \mid \mu_x \geq \frac{\eta}{2L} \}.$$

	For $x, \beta, \gamma$ picked uniformly from $A$ and $z$ picked uniformly from $\F_q$, we have:
	\begin{align*}
	\E_{x, \beta,\gamma} [ |S_{x} \cap S_\beta \cap S_\gamma|/q ] &= \E_{z, x, \beta, \gamma} [ 1_{z \in S_x \cap S_\beta \cap S_{\gamma}} ]\\
	&= \E_{z} [ \E_{x} [ 1_{z \in S_{x}}]^3 ]\\
	&\geq \E_{z, x} [ 1_{z \in S_{x}} ]^3 \\
	&\geq \left(\frac{\eta}{2L}\right)^3\\
	&> \frac{d}{q} + \epsilon.
	\end{align*}

	The second equality above follows from the independence of
	$x, \beta, \gamma$. The last inequality is by assumption on $\eta$.

	Thus $$\Pr_{x, \beta, \gamma} [ |S_{x} \cap S_\beta \cap S_\gamma| \geq d ] \geq \epsilon .$$
	Since $|A| \geq \eta q/2 \geq \frac{10}{\epsilon}$, we have that $x, \beta, \gamma$ are all distinct with probability at least $\epsilon/2$.
	Thus with probability at least $\epsilon/2$ over the choice of $x, \beta, \gamma$, we have that $x, \beta, \gamma$ are all distinct and
	$|S_x \cap S_\beta \cap S_\gamma| > d$.

	This means that there are distinct $x_0, \beta_0$
	such that
	$$ \Pr_{\gamma} [  |S_{x_0} \cap S_{\beta_0} \cap S_\gamma| > d] \geq \epsilon/2.$$

	Fix a $\gamma$ where this happens. Let $S = S_{x_0} \cap S_{\beta_0} \cap S_{\gamma}$.
	We have that $$(x_0, u_{x_0}), (\beta_0, u_{\beta_0}), (\gamma, u_\gamma)$$
	are collinear.
	Thus $$(x_0, u_{x_0}|_S), (\beta_0, u_{\beta_0}|_S), (\gamma, u_{\gamma}|_S) \in \F_q \times \F_q^S$$
	are all collinear.
	By definition of $S$, we get that:
	$$(x_0, P_{x_0}|_S), (\beta_0, P_{\beta_0}|_S), (\gamma, P_{\gamma}|_S) \in \F_q \times \F_q^S$$
	are all collinear.
	Since $|S| > d$, we get that $P_\gamma$ is determined by $P_\gamma|_S$. This means that
	$$(x_0, P_{x_0}), (\beta_0, P_{\beta_0}), (\gamma, P_{\gamma}) \in \F_q \times \F_q[Y]$$
	are all collinear.

	Thus $\epsilon/2$-fraction of the $\gamma \in A$ have the property that
	$(\gamma, P_{\gamma})$ is on the line passing through
	$(x_0, P_{x_0})$ and $(\beta_0, P_{\beta_0})$.
	Let this line be $P^* + x P$.

	Then $\Delta(u^* + x u, P^* + x P) < \delta$ for at least $\epsilon|A|/2 \geq \epsilon \eta q / 4 > 1/\epsilon$
	values of $x$. This means that $\Delta(u^*, P^*)$ and $\Delta(u, P)$ are both
	at most $\delta + \epsilon$ (and furthermore that the agreement sets intersect
	in at least $(1-\delta - \epsilon)n$). In particular, $\Delta(u^*, V) < \delta + \epsilon$,
	a contradiction.

	This completes the proof.
\end{proof}

\newpage

\section{Old Quotient Section}

We define the function $\QUOTIENT(f, h) : L \to \F_q$
as follows. Define $Z(X) \in \F_q[X]$ to be
the polynomial $$ Z(X) = \prod_{\alpha \in A} (X - \alpha),$$
and define $U(X)$ to be the unique polynomial of degree
$< |A|$ such that $U\mid_{A} = h$.
Then we define $\QUOTIENT(f,h)$ to be the function
$g: L \to \F_q$ given by:
$$ g(y) = \frac{f(y) - U(y)}{Z(y)}$$
(or more succinctly, $ g = \frac{f-U)}{Z}$).

\begin{lemma}
Let $L, A \subseteq \F_q$ be disjoint.
Let $d > |A|$ be an integer.

Let $f: L \to \F_q$ and $h:A \to \F_q$.
Let $g = \QUOTIENT(f,h)$.
Then the following are equivalent:
\begin{itemize}
\item There exists a polynomial $Q(X) \in \F_q[X]$
of degree at most $d-|A|$ such that $\Delta(g, Q) < \delta$.
\item There exists a polynomial $R(X) \in \F_q[X]$
of degree at most $d$ such that $\Delta(f, R) < \delta$
and $R\mid_A = h$.
\end{itemize}
\end{lemma}
\begin{proof}
If there is such a polynomial $Q, \deg(Q)\leq d-|A|$ that agrees with $g$ on all but a $\delta$-fraction of entries, we can take
$R = QZ + U$. Notice $\deg(R)\leq d$ because $\deg(Z)=|A|$.

Conversely, if there is such a polynomial $R$ that agrees with $f$ on all but a $\delta$-fraction of entries,
we can take  $Q = (R - U)/Z$. This is indeed a polynomial
because $R-U$ vanishes on $A$, so $Z|(R-U)$ in the ring of polynomials.

Finally, by construction $R$ agrees with $f$ whenever $g$ agrees with $R$ and this completes the proof.
\end{proof}

\section{OLD STUFF - PROBABLY TO BE DELETED}

The following statement is a corollary of \cref{thm:main ver2}.

\begin{corollary}
	\label{cor:agreement doesnt increase much}
	Let $i < \rounds$, and for $f\ii: L\ii \to \F$ be an arbitrary function
	satisfying $\delta=\DistanceLi{f\ii,\RS\ii}$. For $\epsilon>0$ let
	$L^*_\slack = \calL(\F_q, L\ii, d=\rho|L\ii|, \slack=\delta-\epsilon)$.
	Then
	\begin{equation}
	\label{eq:DEEP-FRI list decoding}
	\Pr_{x\ii,z\ii\in\F_q}\left[\Delta\left(H_{x\ii}[\QUOTIENT(f\ii|_{s\ii + L\ii\substar}, \zeta\ii)],\RS\iip \right)\leq \delta-\epsilon\right] < 2L^*_\slack\left(\frac{\rho|L\ii|}{q}+\epsilon\right)^{1/3}+\frac{4}{\epsilon^2 q}.
	\end{equation}
\end{corollary}

\begin{proof}
	%	[Proof of Corollary \ref{cor:agreement doesnt increase much}]
	Consider the space of functions $U= \condset{ H_x[f]}{x \in \F} \subset \F^{L\iip}$. Let $$u^*=H_0[f], \ \ u=H_1[f]-H_0[f].$$
	%\in U$ be a function that maximizes Hamming distance to $\RS\iip$ and let $u$ be any other function in $U$.
	% Suppose $x^*, x$ are the verifier messages that generate $u^*, u$ respectively.
	Since $\deg(P_{f,s})\leq 1$ for every $s \in L\iip$ it follows that every $u'=H_x[f] \in U$ can be written as a linear combination of $u^*, u$; specifically, $H_x[f] =u^*+ x\cdot u$.
	% \eli{As stated, the lemma requires to write $u$ as $u^*$ plus multiple of $u$}
	Notice that for each $z\ii \in\F_q$ the linear function $\zeta\ii$
	determines $\bu(z\ii)$ and $\bu^*(z\ii)$ uniquely, and let
	$\bu,\bu^*$ denote the extension of $u, u^*$ defined by all
	$\condset{\zeta\ii}{z\ii\in \F}$. Let $u_x=u^*+xu$ and $\bu_x=\bu^*+x\bu$.
%	Let $\bar{U} \subseteq U$ be the set of elements in $U$
%	that have distance less than $\delta -\epsilon$ to $\RS\iip$.

	By \cref{lem:quotient}, there exists $v\in \Ls(u_x,\RS\ii,\delta-\epsilon)$ such that $\bv(z)=\bu_x(z)$ iff
	there exists $$v'\in \Ls(H_{x}[\QUOTIENT(f\ii|_{s\ii + L\ii\substar}, \zeta\ii)],\RS\ii_{-1},\delta-\epsilon)$$
	so to bound the probability of the left-hand side of \cref{eq:DEEP-FRI list decoding} it suffices
	to bound the probability of the event
	$$\exists v\in \Ls(u_x,\RS\ii,\delta-\epsilon) \mbox{ such that } \bv(z)=\bu_x(z)$$

	Indeed, assume by way of contradiction that the probability of this event is greater than the right-hand side of \cref{eq:DEEP-FRI list decoding}.
	Then \cref{lem:main} implies the existence of $v^*, v\in \RS\iip$ and a subset $T\subset L\iip, \frac{|T|}{|L\iip|}> 1-\delta$, such
	that $v^*|_T=u^*|_T$ and $v|_T=u|_T$. Let $Q^*(Y), Q(Y)$ be the polynomials interpolating $v^*$ and $v$ respectively.
	We have $\deg(Q^*), \deg(Q)<\Rate|L\iip|$ because $v^*, v\in \RS\iip$.
	Let
	$$\hat{Q}(X,Y)\eqndef Q^*(Y)+X\cdot Q(Y)$$
	and notice that (i) $\deg_X(\hat{Q})<2$, $\deg_Y(\hat{Q})<\Rate|L\iip|$ (ii) $\hat{Q}(0,Y)=Q^*(Y)$,
	(iii) $\hat{Q}(1, Y)=Q(Y)$.

	Consider the polynomial $R(X)\eqndef \hat{Q}(X,q\ii(X))$. We have $$\deg(R)\leq 2\cdot \deg_{Y} (\hat{Q})-1<2|L\iip|=\Rate|L\ii|.$$

	We claim that $R$ agrees with $f$ on $\condset{\coset\ii_{s}}{s\in T}$. Indeed, for each $s \in T$ let $\coset\ii_{s}=\set{s_0, s_1}\in \cosets\ii $ be the pair of roots of the polynomial
	$q\ii\left(X\right)-s$.
	By our assumption on $T$,
	$$\hat{Q}\left(0,s\right)=H_0[f](s) = P_{f,s}(0) \text{ and } \hat{Q}\left(1,s\right)=H_1[f](s) = P_{f,s}(1).$$
	The polynomials $\hat{Q}\left(X,s\right)$ and $P_{f,s}(X)$
	are both of degree less than $2$ and they agree on the two points $\set{0,1}$, hence they agree everywhere. It follows that
	$$f\ii\left(s_0\right)=
	\hat{Q}\left(s_0,s\right)=
	\hat{Q}\left(s_0,q\ii\left(s_0\right)\right)=
	R\left(s_0\right)$$ and similarly $f\ii\left(s_1\right)=R\left(s_1\right)$. Therefore, $R$ and $f\ii$
	agree on $T$, as claimed.

	We have established $\DistanceLi{f,\RS\ii}\leq 1-\frac{|T|}{|L\iip|}< \delta$ and this contradicts our assumption. This completes the proof.
\end{proof}

\subsection{Analysis}

We define two important operations.
\begin{enumerate}
\item $\AVG$. For a function $w: L^{(i-1)} \to \mathbb R$, we define
the function $\AVG[w] : L^{(i)} \to \mathbb R$ as follows.
Let $s \in L^{(i)}$,
and let $\{ s_0, s_1 \} = (q^{(i-1)})^{-1}(s)$.
Then define:
$$\AVG[w](s) = \frac{w(s_0) + w(s_1)}{2}.$$
\item $\ZERO$. For a function $w: L^{(i)} \to \mathbb R$, and a set
$S \subseteq L^{(i)}$, we define
the function $\ZERO[w,S]: L^{(i)} \to \mathbb R$ as follows.
For $s \in L^{(i)}$, we set:
$$ \ZERO[w,S](s) = \begin{cases} 0 & s \in S \\ w(s) & s \not\in S \end{cases}.$$
\end{enumerate}

Let $V = \RS[L\zr, \Rate]$.
Let $f\zr: L\zr \to \F_q$.
We want to study the behavior of the $\FRI$ protocol on $f\zr$.

Define $\eta\zr : L\zr \to [0,1]$ by
$\eta\zr(s) = 1$ for all $s \in L\zr$.

%Define $g\one : L\one \to \F_q$ by $g\one = H_{x\zr}[f\zr]$.
%Define $\theta\one  : L\one \to \F_q$ by $\theta\one = \AVG[\eta\zr]$.

%Now the prover writes $f\one : L\one \to \F_q$.
%Define $E\one = \{ s \in L\one \mid f\one(s) \neq g\one(s) \}$.
%Define $\eta\one: L\one \to [0,1]$ by:
%$$ \eta\one(s) = \begin{cases} 0 & s \in E\one \\ \theta\one(s) & s \not\in E\one \end{cases}.$$

In general, for an arbitrary $i < \rounds$,
assume the prover has written $f\ii : L\ii \to \F_q$,
and we have defined $\eta\ii : L\ii \to \F_q$.
{\em The meaning of $\eta\ii$: $\eta\ii(s)$ measures the fraction of paths rooted at $s$ that are uncorrupted.}

\begin{itemize}
\item The verifier picks $x\ii \in \F_q$ uniformly at random.
\item Set $g\iip : L\iip \to \F_q$ given by $g\iip = H_{x\ii}[f\ii]$.
\item Set $\theta\iip = \AVG[\eta\ii]$.
\item The prover writes $f\iip : L\iip \to \F_q$.
\item Set $E\iip = \{ s \in L\iip \mid f\iip(s) \neq g\iip(s) \}$.
\item Set $\eta\iip = \ZERO(\theta, E\iip)$.
%$$ \eta\iip(s) = \begin{cases} 0 & s \in E\iip \\ \theta\iip(s) & s \not\in E\iip \end{cases}.$$
\end{itemize}

Let $\alpha\ii = \agree_{\eta\ii}(f\ii, \RS\ii)$. This will be our progress measure as $i$ changes from $0$ to $\rounds$.
Note that $\alpha\fin$ equals the probability that the \query phase accepts.

\eli{Copying stuff from the CCC18 paper, to be modified into soundness analysis of \DFRI}

\subsection{Analysis --- modified subtree version}

\begin{theorem}
	\label{thm:main ver2}
	Let $V=\RS[\F_q,D,\rho]$, denote $\delta^*=\Delta(u^*,V)$ and let $L^*=\calL(\F_q, D, d=\rho |D|, \delta^*)$.
	For $u\in \F^D$, let $A=A_{u,\epsilon}=\condset{\alpha\in \F\setminus\set{0}}{\Delta(u^* + \alpha u, V) < \delta - \epsilon}$.
	Let $\bar{u},\bar{u^*}\in\F_q^{\F_q}$ be arbitrary domain extensions of $u,u^*$ to $\F_q$, respectively.
	If
	$$\Pr_{x,z\in \F_q}\left[\exists P\in\F_q^{\leq d} X \wedge P(z)=\bar{u^*}(z)+x\bar{u}(z) \right]> blah$$
	Then there exist $v^*,v\in V$ and $D'\subset D |D'|/|D|\geq blah$
	such that $u^*|_{D'}=v^*|_{D'}$ and $u|_{D'}=v|_{D'}$.
\end{theorem}

\begin{proof}[Proof of \cref{thm:main ver2}]
	Let $u_x=u^*+x u$ and $\bar{u}_x=\bar{u}+x\bar{u}^*$. Let
	$L_x=\Ls(u_x,V,\delta)$ be the list of RS words near $u_x$, which satisfies $|L_x|\leq L$. Simplify notation by setting $\eta=???$.
	The assumption

\end{proof}

\subsubsection{Application to \DFRI}

The following statement is a corollary of \cref{thm:main ver2}.

\begin{corollary}
	\label{cor:agreement doesnt increase much}
	Let $i < \rounds$, and let $f: L\ii \to \F$ be an arbitrary function.
	Let $\delta \eqndef \min\left\{\DistanceLi{f,\RS\ii}, ??? \right\}$. Then
	\begin{equation}
	\label{eq:FRI list decoding}
	\Pr_{x\in\F}\left[\Delta\left(H_x[f],\RS\iip \right)\leq \delta-\epsilon\right] \leq \frac{2}{\epsilon^3|\F|}.
	\end{equation}
\end{corollary}

\begin{proof}
	%	[Proof of Corollary \ref{cor:agreement doesnt increase much}]
	Consider the space of functions $U= \condset{ H_x[f]}{x \in \F} \subset \F^{L\iip}$. Let $$u^*=H_0[f], \ \ u=H_1[f]-H_0[f].$$
	%\in U$ be a function that maximizes Hamming distance to $\RS\iip$ and let $u$ be any other function in $U$.
	% Suppose $x^*, x$ are the verifier messages that generate $u^*, u$ respectively.
	Since $\deg(P_{f,s})\leq 1$ for every $s \in L\iip$ it follows that every $u'=H_x[f] \in U$ can be written as a linear combination of $u^*, u$; specifically, $H_x[f] =u^*+ x\cdot u$.
	% \eli{As stated, the lemma requires to write $u$ as $u^*$ plus multiple of $u$}
	Let $\bar{U} \subseteq U$ be the set of elements in $U$
	that have distance less than $\delta -\epsilon$ to $\RS\iip$.

	Assume by way of contradiction that $|\bar{U}|>\frac{2}{\epsilon^3}$.
	Then \cref{thm:ld-main-v2} implies the existence of $v^*, v\in \RS\iip$ and a subset $T\subset L\iip, \frac{|T|}{|L\iip|}\geq 1-\delta$, such
	that $v^*|_T=u^*|_T$ and $v|_T=u|_T$. Let $Q^*(Y), Q(Y)$ be the polynomials interpolating $v^*$ and $v$ respectively.
	We have $\deg(Q^*), \deg(Q)<\Rate|L\iip|$ because $v^*, v\in \RS\iip$.
	Let
	$$\hat{Q}(X,Y)\eqndef Q^*(Y)+X\cdot Q(Y)$$
	and notice that (i) $\deg_X(\hat{Q})<2$, $\deg_Y(\hat{Q})<\Rate|L\iip|$ (ii) $\hat{Q}(0,Y)=Q^*(Y)$,
	(iii) $\hat{Q}(1, Y)=Q(Y)$.

	Consider the polynomial $R(X)\eqndef \hat{Q}(X,q\ii(X))$. We have $$\deg(R)\leq 2\cdot \deg_{Y} (\hat{Q})-1<2|L\iip|=\Rate|L\ii|.$$

	We claim that $R$ agrees with $f$ on $\condset{\coset\ii_{s}}{s\in T}$. Indeed, for each $s \in T$ let $\coset\ii_{s}=\set{s_0, s_1}\in \cosets\ii $ be the pair of roots of the polynomial
	$q\ii\left(X\right)-s$.
	By our assumption on $T$,
	$$\hat{Q}\left(0,s\right)=H_0[f](s) = P_{f,s}(0) \text{ and } \hat{Q}\left(1,s\right)=H_1[f](s) = P_{f,s}(1).$$
	The polynomials $\hat{Q}\left(X,s\right)$ and $P_{f,s}(X)$
	are both of degree less than $2$ and they agree on the two points $\set{0,1}$, hence they agree everywhere. It follows that
	$$f\left(s_0\right)=
	\hat{Q}\left(s_0,s\right)=
	\hat{Q}\left(s_0,q\ii\left(s_0\right)\right)=
	R\left(s_0\right)$$ and similarly $f\left(s_1\right)=R\left(s_1\right)$. Therefore, $R$ and $f$
	agree on $T$, as claimed.

	We have established $\DistanceLi{f,\RS\ii}\leq 1-\frac{|T|}{|L\iip|}\leq \delta$ and this contradicts our assumption. Therefore $|\bar{U}|\leq \frac{2}{\epsilon^3}$,
	as claimed.
\end{proof}

\eli{end copy}

\begin{lemma}
Let $D \subseteq \F_q$ (the ``evaluation domain"). Let $n=|D|$.
Let $V \subseteq \F_q^n = \F_q^D$ be the Reed-Solomon code of degree $d$ polynomials
evaluated at $D$. Let $L = \calL(\F_q, D, d, 1-\alpha-\epsilon)$ be the list-size for list-decoding $V$
from $(\alpha+\epsilon)$-fraction {\em weighted agreements}.

Suppose $u^* \in \F_q^n$ is such that $\agree_\eta(u^*, V) < \alpha$.
Let $u \in \F_q^n$ be arbitrary. For each $z \in \F_q$, let $A_z(X) \in \F_q[X]$
be an arbitrary polynomial of degree $\leq 1$.

Then
\begin{align*}
\Pr_{x, z \in \F_q} [\exists P(Y) \in \F_q[Y], \deg(P) \leq d & \mbox{ s.t. $\agree_\eta(P, u^* + x u) > \alpha + \epsilon$ AND $P(z) = A_z(x)$}]\\
 &< \max\left(2L \left( \frac{d}{q} + \epsilon\right)^{1/3} , \frac{4}{\epsilon^2 q} \right).
\end{align*}
\end{lemma}
\begin{proof}

\end{proof}

Before presenting the proof, we instantiate \cref{lem:main general codes} with a few examples.

\begin{example}
	[Explicit linear codes]
	\label{example:folded RS codes}
	For all $\rho,\zeta\in (0,1)$ and sufficiently large field size $q$ that is a square, instantiating \cref{lem:main general codes} with
	\begin{itemize}
		\item  $V$ being the folded Reed-Solomon
		$[n,k=\rho n, d]_q$-linear
		codes that are
		$\left(1-(1+\zeta)\rho,\frac{q}{(\zeta\rho)^2}^{O\left(\frac{\log\rho}{\zeta\rho}\right)}\right)$-list decodable~\cite{GuruswamiR08}, and
		\item $S\subset \F_q^k$ the set of rows of an explicit Goppa $\left[n=q^{\frac{k+1}{2}},k=\rho n,d=\left(1-\rho-\frac{1}{\sqrt{q}-1}\right)n\right]$-linear code~\cite{garcia1996asymptotic}
	\end{itemize}
	simplifies \cref{eq1} to
	\begin{equation}
	\label{eq:explicit bounds}
	\Pr_{x\in \F_q,z\in S} [ \Delta( u^* + x u, V_{z, B_z(x)}) < 1-(1+\zeta)\rho] \geq
	\max\left(2\frac{q}{(\zeta\rho)^2}^{O\left(\frac{\log\rho}{\zeta\rho}\right)} \left( \rho+\frac{1}{\sqrt{q}-1} + \epsilon\right)^{1/3} , \frac{4}{\epsilon^2 q} \right).
	\end{equation}
	For instance, fixing $\rho$ and taking $q=\omega\left(\zeta^{O(\zeta)}\right)$ means that for all $\delta<1-\rho-o(1)$ if $u^*$ is $\delta$-far from $V$ then for all $u$, with all but negligible probability, $u^*+xu$ is $(\delta-o(1))$-far from $V$.
\end{example}

\begin{example}[Random linear codes]\label{example:random linear codes}
	For all $\lambda,\rho,\gamma>0$ and sufficiently large $q$, instantiating \cref{lem:main general codes} with
	\begin{itemize}
		\item $V$ being a random $[n,k=\rho n,d=(1-\rho)n]_q$ code that is, with high probability $(1-(\rho+\gamma),1/\lambda)$-list decodable, and
		\item $S$ being the rows of a generating matrix of a $[s,k=\rho s, d=(1-\rho)s]_q$ code with, say, $s=n^10$ (a random code achieves this distance with high probability),
	\end{itemize}
	simplifies
	\cref{eq1} to
	\begin{equation}
	\label{eq:random bounds}
	\Pr_{x\in \F_q,z\in S} [ \Delta( u^* + x u, V_{z, B_z(x)}) < 1-(\rho+\gamma)] \geq
	\max\left(2\frac{(\rho+\epsilon)^{-1/3}}{\lambda}, \frac{4}{\epsilon^2 q} \right).
	\end{equation}
\end{example}

